\documentclass[a4paper,11pt]{article}
\usepackage{mathpazo}
\usepackage{nicefrac}
\usepackage{rotating}
\usepackage[margin=1.2in]{geometry}
\usepackage{algorithm}
\usepackage{algcompatible}
\usepackage{graphics}
\usepackage{setspace}
\usepackage{hyperref}
\usepackage{relsize}
\usepackage{hyperref}
\hypersetup{colorlinks=true,linkcolor=black,citecolor=[rgb]{0.5,0,0}}
\usepackage{color}
\usepackage{amsthm}
\usepackage{graphicx}
\usepackage{amsmath}
\usepackage{enumerate}
\usepackage{amssymb}
\usepackage{diagbox}
\usepackage{caption}
\usepackage{bbm}

\usepackage{subcaption}
\usepackage{tikz}
\usepackage{thmtools}
\usepackage{thm-restate}
\usepackage{comment}
\usepackage{cleveref}

\tikzstyle{vertex}=[circle, draw, inner sep=0pt, minimum size=6pt]

\usetikzlibrary{patterns,positioning,decorations.pathmorphing,decorations.pathreplacing}
\usetikzlibrary{arrows,automata,snakes}
\usepackage[classfont=serif,
langfont=roman,
funcfont=italic]{complexity}

\usepackage{tcolorbox}
\tcbuselibrary{breakable}

\newtheorem{theorem}{Theorem}[section]
\newtheorem{utopic}{Utopic Theorem}

\newtheorem{corollary}[theorem]{Corollary}
\newtheorem{lemma}[theorem]{Lemma}
\newtheorem*{lemma*}{Lemma}

\newtheorem{claim}[theorem]{Claim}

\newtheorem{definition}[theorem]{Definition}
\newtheorem{remark}[theorem]{Remark}

\usepackage{mleftright}
\usepackage{epigraph}

\setlength{\parindent}{2em}
\setlength{\parskip}{0.45em}
\setstretch{1.05}

\usepackage{xspace}

\renewcommand{\tilde}{\widetilde}
\newcommand{\zo}{\{0,1\}}

\renewcommand{\S}{\mathcal{S}}

\newcommand{\Fac}{\mathsf{Factor}\xspace}
\newcommand{\EOL}{\mathsf{EoL}\xspace}
\newcommand{\LCS}{$\mathsf{LCS}$\xspace}
\newcommand{\Edit}{$\mathsf{Edit}$-$\mathsf{Distance}$\xspace}
\newcommand{\MAXSAT}{$\mathsf{MaxSAT}$\xspace}
\newcommand{\MAXCLIQUE}{$\mathsf{MaxCLIQUE}$\xspace}

\newcommand{\LCSsig}{\mathsf{LCS}_\Sigma}
\newcommand{\LCSxi}{\mathsf{LCS}_\Xi}
\newcommand{\Mult}{\mathsf{Mult}}

\newcommand{\disp}{\text{disp}}

\renewcommand{\p}{\mathsf{fail}}

\renewcommand{\D}{\mathcal{D}}
\newcommand{\N}{\mathbb{N}}

\newcommand{\card}[1]{\left\lvert #1 \right\rvert}
\renewcommand{\E}{\mathbb{E}}
\renewcommand{\R}{\mathbb{R}}

\newcommand{\ED}{$\mathsf{ED}$\xspace}		
\newcommand{\KS}{\mathsf{KS}\xspace}

\newcommand{\CNF}{$\mathsf{CNF}$\xspace}		
\newcommand{\ETH}{$\mathsf{ETH}$\xspace}		
		
\newcommand{\EDsig}{\mathsf{ED}_\Sigma}		
\newcommand{\EDxi}{\mathsf{ED}_\Xi}

\renewcommand{\G}{\mathsf{Gen}}
\newcommand{\Dec}{\mathsf{Dec}}
\newcommand{\SOL}{\mathsf{Sol}}
\newcommand{\m}{\Delta}
\newcommand{\goal}{\mathsf{goal}}

\newcommand{\sett}[2]{\{#1 | #2\}}
\newcommand{\bxi}{\mathbf{z}}

\newcommand{\eqdef}{:=}
\newcommand{\set}[1]{\{#1\}}

\title{\textbf{Hardness Amplification of Optimization Problems}}
\date{}
\author{\iftrue Elazar Goldenberg\vspace{0.1cm}\\
 The Academic College of Tel Aviv-Yaffo\vspace{0.1cm} \\
\texttt{elazargo@mta.ac.il} \vspace{0.5cm}
\and 
Karthik C.\ S.\footnote{This work was partially supported by Irit Dinur's  ERC-CoG grant 772839.   } \vspace{0.1cm}\\
 Weizmann Institute of Science\vspace{0.1cm}\\
   \texttt{karthik.srikanta@weizmann.ac.il}
\fi}
\begin{document}
\maketitle  

\begin{abstract}
In this paper, we prove a general hardness amplification scheme for optimization problems based on the technique of direct products. \vspace{0.1cm}

We say that an optimization problem $\Pi$   is direct product feasible if it is possible to efficiently aggregate any $k$ instances of $\Pi$ and form one large instance of $\Pi$ such that given an optimal feasible solution to the larger instance, we can efficiently find optimal feasible solutions to all the $k$ smaller instances.  
Given a direct product feasible optimization problem $\Pi$, our hardness amplification theorem may be informally stated as follows: \vspace{-0.45cm}

\begin{center}\begin{tcolorbox}[breakable,colback=gray!10!white,colframe=white,width=0.75\textwidth]\centering
If there is a distribution $\mathcal{D}$ over instances of $\Pi$   of size $n$ such that\\\vspace{0.1cm} every randomized algorithm running in time $t(n)$ fails to solve\\\vspace{0.1cm} $\Pi$ on $\frac{1}{\alpha(n)}$ fraction of inputs sampled from $\mathcal{D}$,\\\vspace{0.1cm}
 then, assuming some relationships on $\alpha(n)$ and $t(n)$,\\\vspace{0.1cm}
there is  a distribution $\mathcal{D}'$ over instances of $\Pi$  of size $O(n\cdot \alpha(n))$ such that every randomized algorithm running in time $\frac{t(n)} {\poly(\alpha(n))}$ fails to solve $\Pi$  on $\nicefrac{99}{100}$ fraction of inputs sampled from $\mathcal{D}'$.\end{tcolorbox}
\end{center}\vspace{-0.2cm}

As a consequence of the above theorem, we show hardness amplification of problems in various classes such as \NP-hard problems like Max-Clique, Knapsack, and Max-SAT, problems in \P\ such as Longest Common Subsequence, Edit Distance, Matrix Multiplication, and even problems in \TFNP\ such as Factoring and computing Nash equilibrium. 
\end{abstract}

\clearpage

\section{Introduction}\label{sec:intro}

The widely believed conjecture $\P\neq \NP$ asserts that the class  $\NP$ cannot be decided efficiently on the worst-case. That is, no polynomial time algorithm can decide the satisfiability of a \CNF formula on \textit{every} instance. However, the worst case hardness of $\NP$ still does not clarify its average-case hardness: how hard is to decide the satisfiability on a uniformly random instance.

Studying the average-case hardness of $\NP$ has a two-fold motivation. First, it may provide a more meaningful explanation than worst-case complexity about the intractability of \NP-hard instances actually encountered in practice. In other words, if $\NP$ is hard only on the worst-case, then the theory of worst-case complexity that has been extensively developed over the last fifty years, might be not be a good reflection of reality. Second, hardness on average is the cornerstone of modern cryptography as the security of any nontrivial cryptosystem requires some computational problem to be average-case hard (for some nice distribution).  Additionally, showing average-case hardness for functions is a stepping stone towards proving strong derandomization results and the construction of pseudorandom generators.

The study of hardness amplification is the task of connecting the worst-case and average-case hardness. More specifically, based on a worst-case hardness (assumption) one would like to prove the average-case hardness of the problem.

\subsection{Utopic Theorem of Hardness Amplification}\label{sec:dream}

A utopic theorem in the context of hardness amplification would assert that if a function is hard in the worst-case then it implies the average-case hardness for the same function against algorithms with essentially the same running time complexity. More formally it would look as follows:

\begin{utopic}[A Utopic Hardness Amplification Theorem]
	Let $\set{f_n}_{n\in \N}$ be a family of functions. Assume that every algorithm 
	 running in time $t(n)$, fails to compute $f_n$ on at least $\gamma(n)$ fraction of inputs.
	Then there exists a family $\set{g_n}_{n\in \N}$  of functions, such that every algorithm running in time $t'(n)$, fails to compute $g_n$ on at least $\gamma'(n)$ fraction of inputs.
	
	Ideally, we would like to achieve the above amplification for the following parameters\footnote{In order to succinctly specify the desirable parameters of a hardness amplification theorem, we assume here that $f_n$ and $g_n$ are Boolean functions.}
\begin{enumerate}
\item $ \gamma(n)=O(\nicefrac{1}{2^n})$ and $\gamma'(n)=\nicefrac{1}{2}-O(\nicefrac{1}{2^n})$,
\item $t'(n)\approx t(n)$,
\item $\set{f_n}_{n\in \N}=\set{g_n}_{n\in \N}$.
\end{enumerate}	
\end{utopic}

We briefly elaborate here why we would like the above three setting of parameters in our utopic hardness amplification theorem. 
Item 1 would yield a worst-case to average-case reduction, and therefore extend all the lower bounds and hardness results that have been achieved in the theory of worst-case complexity for $f$ to the  average-case complexity of $g$. In fact, achieving $\gamma'(n)=\nicefrac{1}{2}-O(\nicefrac{1}{2^n})$ would imply that no algorithm running in time $t'(n)$ can do much better than randomly guessing the output.  
Item 2 would imply that our worst-case complexity lower bounds  meaningfully translate to lower bounds in the average-case.
Item 3 expresses the notion of self-reducability: if we are interested in understanding the average complexity of a problem, our hardness amplification theorem should enable us to do so by analyzing the worst-case complexity of the \emph{same} problem. In summary, obtaining a hardness amplification result satisfying the three items  is in a sense an attempt to bridge the gap between theory and practice. 
Finally, we remark that our utopic theorems would gain more importance if the family of functions for which we show hardness amplification are natural (in some broad sense). 

Specifically, if we prove such a theorem for the family of deciding satisfiability of \CNF formulas, then we get that the assumption that $\P \neq \NP$ implies that every polynomial time algorithm fails to decide satisfiability on  slightly more that half of the \CNF formulas -- a highly non-trivial and very desirable result that would pave the way for the construction of one-way functions from (weak) worst-case assumptions. However, as we wake up from the dream of a utopia, one may wonder if such a result can even be achieved \cite{BT06}.

Remarkably, nearly three decades ago, Lipton \cite{L89,CPS99}   proved the above type of (utopic) theorem for the function of computing the permanent (a \#\P-complete problem) against probabilistic polynomial time algorithms. Trevisan and Vadhan~\cite{TV07} following a line of works \cite{BFNW93,IW97,STV01} were almost  able to prove such an amplification result for the class \EXP\ (they couldn't achieve Item 3). 
For the class $\NP$ we are far from proving a strong hardness amplification result, and there are some known barriers while trying to convert worst-case $\NP$-hardness into average-case hardness (see e.g. \cite{BTsurvey}). More recently,  strong hardness amplification results have been proved for functions in \P\ \cite{BRSV17,GR17,GR18}. We also note that hardness amplification results have also been shown for one-way functions \cite{Y82,GILVZ90,BR13}.

Given the above state-of-the-art picture, we raise a few natural questions and address them in this paper.  There are many problems that are hard in the worst-case but easy on average. For example, 3-coloring is a well-known \NP-hard problem, but it is an easy exercise to show that it can
be solved in linear time with high probability on a random graph. This motivates us to distinguish within worst-case hard problems as to which of them remain hard on average. One way to go about this task is to identify which worst-case hard problem admits a hardness amplification theorem. 

\begin{center}
\textit{For which problems can we amplify hardness?\\ Can we identify a mathematical structure that allows us to amplify hardness?}
\end{center}

The latter question has been implicitly addressed  in literature (for example, if the problem has algebraic structure like in the case of computing permanent \cite{L89} or counting $k$-cliques \cite{GR18}), but are quite specific and not broad enough to capture the class of problems that we believe are hard on average. In Section~\ref{sec:resultmain} we address the above two questions. 

Next, we turn our attention to \NP-hard problems. In a beautiful paper, O'Donnell \cite{O04} initiated the study of (non-uniform) hardness amplification in \NP. His result was improved by \cite{HVV06} who showed that if for some function in \NP\  (with $n$ inputs) we have that any $s(n)$ size circuit fails to compute the function correctly on $\nicefrac{1}{\poly(n)}$ fraction of the inputs then, the hardness can be amplified to show that there is some function in $\NP$ such that any $s(\sqrt{n})^{\Omega(1)}$ size circuit fails to compute the function on $\nicefrac{1}{2}-\nicefrac{1}{s(\sqrt{n})^{\Omega(1)}}$ fraction of the inputs. However, the best uniform hardness amplification results (against algorithms as opposed to circuits) that have been achieved do not match the parameters of \cite{HVV06}: Trevisan \cite{Tre05,BKS06} improving on his previous work \cite{T03} showed that we can amplify hardness from $\nicefrac{1}{\poly(n)}$ to $\nicefrac{1}{2}-\nicefrac{1}{\polylog(n)}$ for \NP\ against randomized polynomial time algorithms (later extended to deterministic algorithms in \cite{GG11}). 
However, it is important to note that all these hardness amplification results are for decision problems, and this leads us to our next question, do we gain anything by moving to search problems, or more precisely to the focus of this paper, to optimization problems?

\begin{center}
\textit{Can we improve our hardness amplification results for optimization problems?\\
Can we prove stronger uniform hardness amplification results for \MAXSAT?}
\end{center}

Arguably, optimization problems are as natural as decision problems, but are strictly harder from the point of view of computational complexity. Does this mean we can either give simpler proofs of hardness amplification for optimization problems or prove stronger results? We address the above questions in  Section~\ref{sec:resultNP}.

We now shift our focus to the class \P. As mentioned earlier, we have strong worst-case to average-case results established for problems in \P\ \cite{BRSV17,GR18}. The drawback however is that they are all for counting problems. This is indeed inherent as the underlying technique these works use are the same as the one used to show worst-case to average-case reduction for the permanent problem. While counting the number of $k$-cliques (the problem considered in \cite{GR18}) is a natural problem, and therefore hardness amplification for that problem is interesting, it still leaves the door open for proving hardness amplification for the search problem of just finding one $k$-clique in a graph (an easier problem and thus harder to amplify hardness).

\begin{center}
\textit{Can we prove hardness amplification results for natural search problems in \P?}
\end{center}

Moreover, there is a barrier \cite{Amir} to using the algebraic techniques of \cite{BRSV17,GR18} to obtain hardness amplification for important problems studied in fine-grained complexity such as computing the Longest Common Subsequence (\LCS) and Edit Distance (\Edit) for a pair of strings. In particular, if these string problems can be represented using low-degree polynomials, then we could obtain small speedups by using the polynomial method \cite{CW16}, which would imply new  circuit lower bounds \cite{AHWW16}. This suggests we might need to look beyond these algebraic techniques for proving hardness amplification for these string problems. Is there a different technique to prove hardness amplification in \P?
We address these aforestated questions in  Section~\ref{sec:resultP}.

\subsection{Our Results}\label{sec:results}

Our main contribution is a general hardness amplification theorem for optimization problems which we state in Section~\ref{sec:resultmain}. Next, we apply our main theorem to various problems. In Section~\ref{sec:resultNP} we state our hardness amplification theorems for various \NP-hard problems such as Knapsack and \MAXSAT.  In Section~\ref{sec:resultP} we state our hardness amplification theorems for various string problems in \P\ such as \LCS and \Edit Finally, in Section~\ref{sec:resultTFNP} we state our hardness amplification theorems for various  problems in \TFNP\ (believed to not be in \P) such as Factoring and computing Nash equilibrium. 
 
\subsubsection{Hardness Amplification of Optimization Problems}\label{sec:resultmain}

Aggregation is a key tool in the field of hardness amplification. If a function $f$ is hard to compute on a tiny fraction of the domain, then, intuitively, computing multiple instances of $f$ in one shot should be hard on a larger fraction of the inputs. More formally, for a function $f:[N]\to \Sigma$ and $k\in \N$, its $k$-direct product encoding is defined as a function $f^{(k)}:[N]^k\to \Sigma^k$ mapping  each tuple $(x_1,\dots, x_k)$ into $(f(x_1),\dots, f(x_k))$. Using standard techniques one can show a `direct product theorem' stating that if $f$ is hard against $t(n)$ running-time algorithms on $\alpha(n)$-fraction of the domain, then $f^{(k)}$ is hard against $t'(n)$ running-time algorithms on $\approx k\cdot \alpha(n)$-fraction of its domain. But in order to utilize such a direct product result, we need to be able to  stitch $k$-instances into a single (larger) instance. To address this task we introduce the following notion of direct product feasibility. 

\begin{definition}[Direct Product Feasibility; Informal statement  of Definition~\ref{def:DPfeasible}]\label{def:DPFeasible}
	
	Let $\Pi$ be an optimization problem. We say that $\Pi$ is $(S,T)$-direct product feasible\footnote{In the formal definition of direct product feasibility, it is defined for a pair of optimization problems $(\Pi,\Lambda)$ for technical reasons which will be addressed later in Section~\ref{sec:resultP}. In the case $\Pi=\Lambda$ we formally call it as self direct product feasible and this notion coincides with the informal definition given here. For most of the applications given in this paper, self direct product feasibility notion suffices.} if the exists a pair of deterministic algorithms $(\G, \Dec)$ satisfying the following:
\begin{itemize}
\item $\G$ takes as input $k$ instances $(I_1,\dots, I_k)$ of $\Pi$ each of size $n$ and  outputs an instance $I'$ of $\Pi$ of size $S(n,k)$.
\item $\Dec$ gets as input $(I_1,\dots, I_k)$, the instance $I'$ which is the output of $\G$ on input $(I_1,\dots, I_k)$,  an optimal solution for $I'$, and $i\in [k]$.  It outputs an optimal solution for the instance $I_i$. 
\item The running time of $\G$ and $\Dec$ is bounded by $T(n,k)$. 
\end{itemize}	
	 \end{definition}

Our main theorem is about hardness amplification for an arbitrary direct product feasible problem $\Pi$. In particular we show that if $\Pi$ is hard against $t(n)$ running time randomized algorithms on a tiny fraction of the domain, then $\Pi$ is hard on a much larger fraction of the domain against randomized algorithms with a similar running time.  

\begin{theorem}[Informal Statement of Theorem~\ref{thm:main}]\label{thm:intromain}
	Let $\Pi$ be $(S,T)$-direct product feasible. 
	Let $\D(n)$ be an efficiently samplable distribution over the instances of $\Pi$ of size $n$. Assume the following:
	\begin{itemize}
	\item Any $t(n)$ running-time algorithm with success probability at least $2/3$ fails to compute an optimal solution  on at least $\alpha(n)$-fraction of the inputs sampled from $\D$. 
	\item Fix $k=\poly((\alpha(n))^{-1})$. Then we have $T(n,k)=o(  t(n))$.
	\item We can (deterministically) decide the optimality of a given solution to any instance in $o(t(n))$ time.
	\end{itemize}
	
	Then there exists an efficiently samplable distribution $\D'(S(n,k))$ over instances of $\Pi$ of size $S(n,k)$ such that every $t(n)$ running-time algorithm with success probability at least $2/3$ fails to compute an optimal solution  on at least $99\%$ of the inputs sampled from $\D'$. 
\end{theorem}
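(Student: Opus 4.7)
The plan is to define the hard distribution $\D'$ as the push-forward of $\D^k$ through $\G$ with $k=\Theta(\log(100)/\alpha(n))=\poly(1/\alpha(n))$, and to argue the contrapositive: any randomized $t(n)$-time algorithm $A'$ that, with per-input success probability at least $2/3$, correctly solves more than a $1/100$-fraction of $\D'$-inputs can be converted into an algorithm $A$ of essentially the same running time solving $\Pi$ on more than a $(1-\alpha(n))$-fraction of $\D$-inputs, contradicting the hypothesis. Efficient samplability of $\D'$ follows from that of $\D$ together with the $T(n,k)=o(t(n))$ bound on $\G$.

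The reduction $A$ would operate as follows on input $I\sim\D$: repeat $r=\poly(1/\alpha(n))$ trials, where each trial picks $j\in[k]$ uniformly, samples $I_1,\dots,I_{j-1},I_{j+1},\dots,I_k$ i.i.d.\ from $\D$, sets $I':=\G(I_1,\dots,I_{j-1},I,I_{j+1},\dots,I_k)$, runs $A'$ on $I'$ to obtain $s'$, applies $\Dec$ at index $j$ to extract a candidate $s$ for $I$, and uses the deterministic optimality checker (in time $o(t(n))$) to verify whether $s$ is optimal for $I$. The first verified $s$ is output; otherwise the procedure reports failure. The key observation is that because $(I_1,\dots,I_k)$ is i.i.d.\ from $\D^k$ regardless of the position $j$ into which $I$ is embedded, the instance $I'$ handed to $A'$ is distributed exactly as $\D'$; by correctness of $\Dec$, whenever $A'$ succeeds the extracted $s$ is optimal for $I$, and the checker filters out any spurious output.

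The quantitative heart---and what I expect to be the main obstacle---is a direct product theorem for optimization problems with an efficient verifier: if $A'$ correctly solves more than a $1/100$-fraction of $\D'$, then the reduction $A$ solves more than a $(1-\alpha(n))$-fraction of $\D$. I plan to establish this via a hardcore-lemma argument: by the hardness hypothesis on $\D$, an Impagliazzo-style hardcore lemma (adapted to search problems, as in Holenstein or subsequent work) supplies a subset $H$ of $\D$-measure $\Omega(\alpha(n))$ on which every efficient algorithm errs on a large constant fraction of inputs; because $A'$ must be correct on \emph{every} coordinate $I_j$ simultaneously, its success probability on an aggregated $\D^k$-sample is essentially bounded by the probability that no coordinate lands in $H$, namely $(1-\Omega(\alpha(n)))^k=o(1/100)$ for our choice of $k$, a contradiction. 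The efficient optimality checker converts the ``some trial succeeds'' event into an output with per-input success probability $\geq 2/3$ via $r=\poly(1/\alpha(n))$ independent repetitions. Each trial costs $T(n,k)+t(n)+o(t(n))$, so the total runtime of $A$ is $\poly(1/\alpha(n))\cdot t(n)$; the hypotheses $T(n,k)=o(t(n))$ and the $o(t(n))$-time checker ensure this stays within the allotted time budget of the formal statement, completing the contradiction.
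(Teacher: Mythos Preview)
Your construction of $\D'$ and the reduction $A$ match the paper's exactly: plant the input at a random coordinate, sample the rest from $\D$, aggregate via $\G$, call $A'$, decode via $\Dec$, and keep the best verified answer over several trials. The divergence is entirely in the analysis of why $A$ succeeds on more than a $1-\alpha(n)$ fraction of inputs.

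Your sentence ``because $(I_1,\dots,I_k)$ is i.i.d.\ from $\D^k$ regardless of the position $j$ into which $I$ is embedded, the instance $I'$ handed to $A'$ is distributed exactly as $\D'$'' is the crux, and as written it is not correct for a \emph{fixed} input $I$: the tuple is i.i.d.\ $\D^k$ only when you also average over $I\sim\D$. For fixed $I$, the aggregate $I'$ follows a ``planted'' distribution $\D'_I$, and the entire difficulty is to show that for all but an $\alpha(n)$ fraction of $I$, the success probability of $A'$ on $\D'_I$ is comparable to its success probability on $\D'$. Your hardcore-set paragraph is your mechanism for this, but the step ``$A'$'s success probability is essentially bounded by the probability that no coordinate lands in $H$'' is not immediate: conditioning on the first index landing in $H$ skews the distribution of earlier coordinates away from $\D$, so the hardcore guarantee for the per-coordinate extractor does not apply directly; a genuine hybrid argument is needed. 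This can be made to work, but it is more than you have written.

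The paper takes a different and more elementary route that avoids hardcore sets entirely (the authors explicitly flag this as a feature). They invoke a Feige--Kilian concentration lemma: for any Boolean $f$ on $X^k$ (here $f=\mathbf{1}[A'\text{ succeeds w.p.}\ge p]$), writing $\mu=\E_{\D^k}[f]$ and $\mu_{i,x}$ for the expectation with coordinate $i$ fixed to $x$, one has $\Pr_{x\sim\D,\,i\sim[k]}\bigl[|\mu_{i,x}-\mu|\ge k^{-1/3}\bigr]\le k^{-1/3}$, proved by a short second-moment (Efron--Stein-type) computation. Taking $k=\Theta(\alpha^{-6})$ and applying Markov, all but a $k^{-1/6}<\alpha$ fraction of inputs $I$ are ``good'' in the sense that a single trial of your reduction already succeeds with probability $\Omega(p)$; hence a \emph{constant} number $c=O(\ln(1/(1-p))/p)$ of repetitions suffices, not $r=\poly(1/\alpha)$ as you wrote. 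Your hardcore route would in principle allow a smaller $k=\Theta(\alpha^{-1}\log 100)$ at the price of a heavier tool and a more delicate conditioning argument; the paper trades that for a larger $k$ and a two-line variance bound.
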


Naturally, the distribution $\D'$ is defined as follows: Draw $k$ independent samples $I_1,\dots, I_k$ from $\D$, and output $\G(I_1,\dots, I_k)$. 
The proof of our main theorem is based on a reduction using an oracle access to an algorithm that solves $\D'$ on 99\% of the inputs, we convert it into an algorithm solving $\D$ on greater than $1-\alpha(n)$ fraction of the inputs. The reduction is uniform, so in case that the algorithms $(\G, \Dec)$ are uniform we get a uniform hardness amplification result. 

Another key point is that our hardness amplification is a self-reduction, i.e.,  if a problem is somewhat hard against one distribution $\D$, then the same problem is much harder against a different distribution $\D'$.

To the best of our knowledge, this is the first result to study hardness amplification for optimization problems. It opens avenues to prove results in various subclasses as we will see in subsequent subsections.

\subsubsection{Hardness Amplification for \NP-hard Problems}\label{sec:resultNP}

In the \NP\ world, we generalize the results of~\cite{O04,Tre05} to optimization problems. In particular we show that if \MAXSAT is hard to solve on $\nicefrac{1}{\poly(n)}$ fraction of the inputs of samples drawn from some samplable distribution $\mathcal{D}$. Then there exists a samplable distribution $\mathcal{D'}$ such that solving \MAXSAT on $\mathcal{D'}$ is hard on at least $\nicefrac {99}{100}$-fraction of the samples (See Corollary~\ref{cor:SATpoly}).

\begin{theorem}[Informal Statement of Corollary~\ref{cor:SATpoly}.]\label{thm:SATintro}
	Let $\D(n)$ be a distribution over  $3$-\CNF formulas with $n$ variables and $\poly(n)$ clauses,  such that for every randomized algorithm $\mathcal{A}$ running in time $\poly(n)$, we have:
	$$
	\Pr_{\Psi\sim \D}\left[\mathcal A \text{ finds a optimal assignment for } \Psi \text{ w.p.}\ge 2/3\right]\le 1-1/\poly(n).
	$$
	
Then there exists a distribution $\D'(n')$ over $3$-\CNF formulas with $n'$ variables $\poly(n')$ clauses,  such that for every randomized algorithm $\mathcal{A}'$ running in time $ \poly(n')$, we have:
	$$
	\Pr_{\Psi'\sim \D'}\left[\mathcal A' \text{ finds a optimal assignment for }\Psi' \text{ w.p.}\ge 2/3\right]\le 0.01.
	$$
	 
	Moreover, if $\D(n)$ is $\poly(n)$-samplable then $\D'(n')$ is $\poly(n')$-samplable.
	
\end{theorem}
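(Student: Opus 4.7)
The plan is to derive the corollary from Theorem~\ref{thm:intromain} by exhibiting $3$-\MAXSAT as a self direct product feasible problem in the sense of Definition~\ref{def:DPFeasible}, and then choosing $k=\poly(n)$ to match $\alpha(n) = 1/\poly(n)$ and a $t(n)=\poly(n)$ hardness hypothesis.

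To exhibit feasibility, I would define $\G(\Psi_1,\dots,\Psi_k)$ to rename the variables of each $\Psi_i$ into a fresh block $V_i$ of $n$ variables, and output the $3$-CNF formula $\Psi \eqdef \Psi_1 \wedge \dots \wedge \Psi_k$ on the disjoint union of the $V_i$'s. The decoder $\Dec$, on input $(\Psi_1,\dots,\Psi_k,\Psi,\sigma^*,i)$, simply returns $\sigma^*|_{V_i}$. The key structural observation is that, because the $V_i$'s are disjoint, for any assignment $\sigma$ on the $kn$ variables we have
\[
\abs{\{\text{clauses of } \Psi \text{ satisfied by } \sigma\}} \;=\; \sum_{i=1}^{k}\abs{\{\text{clauses of } \Psi_i \text{ satisfied by } \sigma|_{V_i}\}},
\]
so the maximization decomposes coordinatewise and $\sigma^*$ is optimal for $\Psi$ if and only if every restriction $\sigma^*|_{V_i}$ is optimal for $\Psi_i$. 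Both $\G$ and $\Dec$ run in time linear in the total description length, giving $S(n,k)=O(k\cdot m)=\poly(n,k)$ and $T(n,k)=\poly(n,k)$; crucially, $\Psi$ remains a $3$-CNF with $\poly(n')$ clauses on $n' = kn$ variables, so the output distribution lives in the same problem family.

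With feasibility established, I would invoke Theorem~\ref{thm:intromain} with $k$ the polynomial in $\alpha(n)^{-1}$ prescribed there. Since $k = \poly(n)$, we have $n' = kn = \poly(n)$ and $T(n,k) = \poly(n) = o(t(n))$ once $t(n)$ is taken to be a sufficiently large polynomial, consistent with the $\poly(n)$-hardness hypothesis. The output distribution $\D'$ is defined by drawing $k$ independent samples from $\D$ and applying $\G$; $\poly(n')$-samplability of $\D'$ then follows from $\poly(n)$-samplability of $\D$ and the polynomial-time computability of $\G$. The conclusion of Theorem~\ref{thm:intromain} yields the stated $99/100$ failure probability against $\poly(n')$-time randomized algorithms.

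The main obstacle I anticipate is the precondition of Theorem~\ref{thm:intromain} asking for deterministic optimality checking in $o(t(n))$ time, which for $3$-\MAXSAT is \coNP-hard and hence not achievable in polynomial time unless $\P = \NP$. I would resolve this by appealing to the formal version of the main theorem, which presumably asks only for \emph{value} verification (given a candidate assignment and a claimed number of satisfied clauses, confirming the count) --- a task trivially linear-time for $3$-CNF and compatible with the direct product construction, since the optimum value of $\Psi$ is exactly the sum of the optima of the $\Psi_i$'s. If the formal statement instead requires genuine optimality certification, the fix would be to augment $\G$ so that it emits the claimed optimum values alongside $\Psi$ (drawn with the instances from $\D$, which is samplable) and have $\Dec$ route its correctness check through those values.
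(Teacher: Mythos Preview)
Your proposal is correct and mirrors the paper's argument almost exactly: Lemma~\ref{lem:SAT} establishes self direct product feasibility of \MAXSAT via the same disjoint-variables construction you describe, and Corollary~\ref{cor:SATpoly} then instantiates Theorem~\ref{thm:main} with the obvious polynomial parameters. Your anticipated resolution of the optimality-checking issue is also right: the formal Theorem~\ref{thm:main} only requires a deterministic algorithm $\mathcal{V}$ that \emph{computes} $\m_\Pi(x,y)$ (here, counts satisfied clauses), and the algorithm $\mathcal{A}$ in the proof simply outputs the best of its $c$ candidate solutions rather than certifying optimality.
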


Observe that the failure probability on $\mathcal{D'}$ is much larger than in~\cite{O04,Tre05} and can even tend to $0$ for a proper choice of our parameters. This can be achieved since we deal with optimization problems instead of decision problems. 

We also remark that our reduction and the proof correctness are much simpler, and in particular we do not rely the hard core set lemma \cite{I95}, a powerful and non-trivial key tool in the previous known proofs. 

Our result easily extends  into other \NP-hard problems such as finding the largest clique in a graph, or finding smallest dominating set or vertex cover of a graph, etc (see Remark~\ref{rem:maxprobNP}).

However, there are other \NP-hard problems for which establishing a hardness amplification result through Theorem~\ref{thm:intromain} is not easy. A special highlight is that of proving such a result for the  Knapsack problem, as it isn't immediately clear if it's direct product feasible for reasonable range of parameters. This is because,  for the Knapsack problem, when we aggregate instances in the natural way, optimal solutions of one instance may interfere with other instances (see Section~\ref{sec:NP} for details). Nonetheless, with some care, the direct product feasibility of Knapsack problem was established (see Lemma~\ref{lem:sdfKS}).

The Exponential Time Hypothesis (\ETH{}) \cite{IP01,IPZ01,CIP06} asserts that that we cannot decide whether a given $3$-\CNF is satisfiable in time which is sub-exponential in the number of variables. That is a worst case assumption, and it raises a natural question arises: Can we prove stronger hardness amplification result based on \ETH{}? In fact, can we prove a worst case to an average case hardness amplification based on \ETH{}? 

Our next theorem is a step towards proving such a worst-case to an average case reduction for \MAXSAT.
\begin{theorem}[Informal Statement of Corollary~\ref{cor:SATETH}.]\label{thm:maxSATETHInformal}
	Let $\D(n)$ be a distribution over $3$-\CNF instances with $n$ variables and $O(n)$-clauses,  such that for every randomized algorithm $\mathcal{A}$ running in time $2^{o(n)}$, we have:
	$$
	\Pr_{\Psi\sim \D}\left[\mathcal A \text{ finds an optimal assignment for } \Psi \text{ w.p.}\ge 2/3\right]\le 1-\frac{1}{2^{o(n)}}.
	$$
	
	Then there exists a distribution $\D'(n')$ over $3$-\CNF instances with $n'$ variables and $2^{o(n')}$ clauses, such that for every polynomial time randomized algorithm $\mathcal{A}'$, we have:
	$$
	\Pr_{\Psi' \sim \D'}\left[\mathcal A' \text{ finds an optimal assignment for  }\Psi' \text{ w.p.}\ge 2/3\right]\le 0.01.
	$$
\end{theorem}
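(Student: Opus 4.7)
The plan is to derive Corollary~\ref{cor:SATETH} as a direct application of Theorem~\ref{thm:intromain} to \MAXSAT, with parameters scaled to the \ETH regime. First I would verify that \MAXSAT is self direct product feasible with essentially linear overhead: given $k$ input 3-\CNF formulas $\Psi_1,\ldots,\Psi_k$ each on $n$ variables and $O(n)$ clauses, the generator $\G$ renames the variables of each $\Psi_i$ to be disjoint across $i$ and outputs $\Psi' \eqdef \Psi_1 \wedge \cdots \wedge \Psi_k$, which is a 3-\CNF on $n' \eqdef nk$ variables and $O(nk)$ clauses. Because the variable blocks are disjoint, $\text{OPT}(\Psi') = \sum_i \text{OPT}(\Psi_i)$ and any optimal assignment $\sigma^\ast$ for $\Psi'$ restricts block by block to an optimal assignment for each $\Psi_i$; the decoder $\Dec$ simply extracts the relevant block of $\sigma^\ast$. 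Both routines run in $O(nk)$ time, giving $(S(n,k),T(n,k))=(O(nk),O(nk))$.

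Next I would instantiate Theorem~\ref{thm:intromain} with failure fraction $\alpha(n)=1/2^{o(n)}$ (exactly the hypothesis on $\D$) and adversarial time budget $t(n)=2^{o(n)}$. The theorem prescribes $k=\poly(1/\alpha(n))=2^{o(n)}$, whence $T(n,k)=O(nk)=2^{o(n)}$, and a standard slack argument on the two little-$o$ exponents (padding the time exponent strictly above the aggregation exponent) secures $T(n,k)=o(t(n))$. The aggregated instance has $n'=nk=n\cdot 2^{o(n)}=2^{o(n)}$ variables and $O(nk)=O(n')\le 2^{o(n')}$ clauses, matching the size claim. The output distribution $\D'$ samples $k$ independent $\Psi_1,\ldots,\Psi_k\sim\D$ and outputs $\G(\Psi_1,\ldots,\Psi_k)$, which is samplable since $\D$ is and $\G$ is efficient. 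Theorem~\ref{thm:intromain} then yields that every $2^{o(n)}$-time randomized algorithm fails on at least $99/100$ of the mass of $\D'$; since $\poly(n')\subseteq 2^{o(n)}$, every $\poly(n')$-time randomized algorithm in particular fails with probability at least $99/100$, which is the claim.

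The step I expect to be the main obstacle is discharging the efficient optimality-verification precondition of Theorem~\ref{thm:intromain}: given a 3-\CNF $\Psi$ and an assignment $\sigma$, decide whether $\sigma$ is optimal in $o(t(n))=2^{o(n)}$ time. For arbitrary \MAXSAT the naive brute force costs $\Theta(2^n)$, which overshoots the budget, and optimality is coNP-hard in the worst case. The cleanest way I would resolve this is to restrict $\D$ to the (still \ETH-hard) support of satisfiable 3-\CNFs, so that $\text{OPT}(\Psi)$ equals the clause count $m$ and optimality of $\sigma$ reduces to evaluating clauses in $O(m)$ time. If one wants to retain fully general 3-\CNF inputs, I would try to internalize the verification by a self-consistency check across multiple independent re-embeddings of the candidate hard instance into fresh random aggregates, avoiding any absolute optimality certificate; this is essentially the same device that must already resolve the analogous issue in Corollary~\ref{cor:SATpoly} at polynomial scale, and should port directly to the \ETH parameters here.
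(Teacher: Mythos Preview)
Your approach is essentially the paper's: establish that \MAXSAT is self direct product feasible via disjoint variable renaming (this is exactly Lemma~\ref{lem:SAT}, giving $S(d,k)=dk$ and $T(d,k)=O(dk)$), then instantiate the main theorem with $k=2^{o(n)}$ and the sub-exponential time budget. Your parameter bookkeeping matches the proof of Corollary~\ref{cor:SATETH}.

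The obstacle you flag, however, is not real. The informal Theorem~\ref{thm:intromain} says we must ``decide the optimality of a given solution,'' but the formal Theorem~\ref{thm:main} only asks for a deterministic algorithm $\mathcal{V}$ that on input $(x,y)$ computes the \emph{measure} $\m_\Pi(x,y)$, not one that certifies optimality. For \MAXSAT this is just counting satisfied clauses, which is $O(d)$ time; the paper accordingly sets $v(d)=w'\cdot d$ in the proof of Corollary~\ref{cor:SATETH}. In the proof of Theorem~\ref{thm:main}, $\mathcal{V}$ is used only in Step~3 of algorithm $\mathcal{A}$ to select the best among the $c$ candidate solutions collected in $\mathcal{S}$; since with probability at least $p$ one of them is genuinely optimal, returning the one with the largest measure suffices. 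Your proposed workarounds (restricting $\D$ to satisfiable instances, or cross-embedding self-consistency checks) are therefore unnecessary, and the same remark resolves the analogous worry you raise about Corollary~\ref{cor:SATpoly}.
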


Heally et al.~\cite{HVV06} proved a similar result for the non-uniform case. Our result is stronger in the sense that we use a weaker assumption: we rely on the \ETH{} that is assuming that every \textit{uniform} algorithm fails on $1/2^{o(n)}$-fraction of inputs. While Heally et al. use similar assumption against non-uniform algorithms.

\subsubsection{Hardness Amplification in \P}\label{sec:resultP}

We investigate hardness amplification in \P\ and can show results for string problems, such as \LCS\ and \Edit, which were not possible in previous works.

\begin{theorem}[Informal statement; see Corollaries~\ref{cor:LCS}~and~\ref{cor:Edit} ]\label{thm:string}
Fix $\varepsilon>0$.	Let $\D(n)$ be an efficiently samplable distribution over the instances of \LCS/\Edit of length $n$. Assume that any $n^{2-\varepsilon}$ running-time algorithm with success probability at least $2/3$ fails to compute an optimal alignment  on at least $1/n^{o(1)}$-fraction of the inputs sampled from $\D$. 
	Then for some $\varepsilon'>0$ there exists an efficiently samplable distribution $\D'(n^{1+o(1)})$ over instances of \LCS/\Edit of size $n^{1+o(1)}$ such that every $n^{2-\varepsilon'}$ running-time algorithm with success probability at least $2/3$ fails to compute an optimal solution  on at least $99\%$ of the inputs sampled from $\D'$. 
\end{theorem}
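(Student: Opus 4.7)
The plan is to invoke Theorem~\ref{thm:intromain} with $\Pi$ being \LCS\ (and separately, \Edit), the given distribution $\D$, and amplification parameter $k := \poly(1/\alpha(n)) = n^{o(1)}$, where we use the hypothesis $\alpha(n) = 1/n^{o(1)}$. The entire argument then reduces to exhibiting a pair $(\G, \Dec)$ showing that each of \LCS\ and \Edit\ is $(S,T)$-direct product feasible with $S(n,k) = nk \cdot n^{o(1)} = n^{1+o(1)}$ and $T(n,k) = n^{1+o(1)}$. This immediately satisfies the time hypothesis $T(n,k) = o(n^{2-\varepsilon}) = o(t(n))$, and the main theorem then yields a distribution $\D'$ on instances of length $m := S(n,k) = n^{1+o(1)}$ on which no algorithm of running time $n^{2-\varepsilon}/\poly(\alpha(n)) = n^{2-\varepsilon - o(1)}$ solves more than $1\%$ of the samples; since $m = n^{1+o(1)}$, this bound is at most $m^{2-\varepsilon'}$ for any fixed $\varepsilon' \in (0,\varepsilon)$ and $n$ large enough, giving the stated conclusion.

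For \LCS, I would have $\G$ relabel the alphabet of the $i$-th input pair $(X_i, Y_i)$ by a fresh alphabet $\Sigma_i$, chosen so that $\Sigma_1,\ldots,\Sigma_k$ are pairwise disjoint, and output the concatenations $X := X_1 \cdots X_k$ and $Y := Y_1 \cdots Y_k$. Since every occurrence of a $\Sigma_i$-letter lies within the $i$-th block of both $X$ and $Y$ and blocks appear in the same order, any common subsequence decomposes uniquely as $Z_1 Z_2 \cdots Z_k$ where $Z_i$ uses only $\Sigma_i$-letters, and hence $\mathsf{LCS}(X,Y) = \sum_i \mathsf{LCS}(X_i, Y_i)$; the decoder $\Dec$ recovers an optimal subsequence for $(X_i, Y_i)$ by projecting onto $\Sigma_i$. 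A fixed-length binary encoding of the enlarged alphabet of size $n^{o(1)}$ reduces to the binary alphabet with only an $n^{o(1)}$ multiplicative blowup in length, and the block-separation argument extends since codewords living in different $\Sigma_i$ share no bits.

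For \Edit\ the analogous concatenation is insufficient, because edit operations can freely cross block boundaries. I would therefore insert between consecutive blocks, in both $X$ and $Y$, a padding gadget consisting of $M := 10\,kn$ copies of a fresh separator character $\#$ (not appearing in any $X_i, Y_i$). Any alignment that fails to match the $\#$-runs of $X$ with those of $Y$ pays cost at least $M$, whereas aligning the blocks pairwise costs at most $\sum_i \ed(X_i, Y_i) \le kn < M$; consequently every optimal alignment aligns the separators, forcing $\ed(X,Y) = \sum_i \ed(X_i, Y_i)$, and $\Dec$ recovers per-block optimal alignments by restriction. To reach a binary alphabet I would use a constant-length binary code for the original characters together with a much longer monochromatic binary encoding of $\#$ (longer than any character codeword), so that the separator argument carries over verbatim; the total length remains $O(kn \log n) = n^{1+o(1)}$.

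The step I expect to be the main obstacle is the binary-alphabet reduction for \Edit: indels interact with arbitrary binary codewords in ways that can redistribute bits across block boundaries, so the codeword lengths and the $\#$-gadget must be tuned carefully to guarantee \emph{exact} decomposition of the edit distance rather than only an approximate one. This alphabet-reduction step is also what forces the slight loss $\varepsilon' < \varepsilon$ in the conclusion and is the reason the theorem only asserts the existence of \emph{some} $\varepsilon' > 0$. Once $(S,T)$-direct product feasibility is verified for both problems, preservation of efficient samplability of $\D'$ is immediate (draw $k$ independent samples from $\D$ and apply $\G$), and the conclusion follows by plugging directly into Theorem~\ref{thm:intromain}.
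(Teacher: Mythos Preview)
Your overall plan---establish direct product feasibility and plug into Theorem~\ref{thm:intromain}---is exactly what the paper does, and your \Edit\ construction (long runs of a fresh separator symbol between blocks) matches Lemma~\ref{lem:DPEdit} essentially verbatim.

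For \LCS, however, the paper takes a different route: rather than relabeling the $k$ instances with pairwise disjoint alphabets, it uses the \emph{same} separator idea as for \Edit, introducing a single new symbol $\xi\notin\Sigma$ and inserting $\xi^\ell$ with $\ell=nk+1$ between consecutive blocks (Lemma~\ref{lem:DPLCS}). Your disjoint-alphabet construction is also correct, but it inflates the alphabet to size $k|\Sigma|=n^{o(1)}$, whereas the paper's costs only one extra symbol.

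More importantly, the paper never attempts the binary-alphabet reduction you sketch. Corollaries~\ref{cor:LCS} and~\ref{cor:Edit} are stated for the pair $(\LCSsig,\LCSxi)$ (resp.\ $(\EDsig,\EDxi)$) with $|\Xi|=|\Sigma|+1$; the informal Theorem~\ref{thm:string} is proved at that level, allowing the output alphabet to grow by one. Your binary reduction is therefore unnecessary for the stated result, and the justification you give for it in the \LCS\ case has a genuine gap: the assertion that ``codewords living in different $\Sigma_i$ share no bits'' is simply false for any binary code---every codeword is built from $0$'s and $1$'s---so block separation does not survive the encoding as written. If you actually want the conclusion over a fixed alphabet, a different gadget is required, and the paper does not provide one.
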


Recall from earlier in this section that Abboud \cite{Amir} had pointed out a barrier to obtaining a result such as above, through algebraic techniques. Another similarity search problem that is studied along with \LCS and \Edit, is the problem of computing the Fr\'echet distance between two (discrete) curves. Strangely, this problem resists all natural approaches to show that it is direct product feasible (see Remark~\ref{rem:frechet}). Therefore, it is an interesting question as to whether it is possible to show that it is direct product feasible (for relevant range of parameters) or whether it is a candidate for a problem that is not direct product feasible. 

Additionally, we show hardness amplification for a very different kind of problem, that of computing the product of two matrices (see Corollary~\ref{cor:matmult}). We highlight this problem, as it does not directly follow from our main theorem (i.e., Theorem~\ref{thm:intromain}). Elaborating, a detail that was brushed under the carpet while discussing Theorem~\ref{thm:intromain} was that, given an instance of an optimization problem and a candidate solution, we need to able to efficiently compute the value of the objective of the candidate solution for that instance. This naturally holds for all the problems considered in this paper except the task of computing the product of two matrices, i.e., we do not know a way to \emph{deterministically} verify if the   product of two matrices is equal to a given third matrix, which is significantly faster than actually multiplying the two given matrices and checking if it's equal to the third matrix \cite{K18,WW18}. Nonetheless, we modify the proof of Theorem~\ref{thm:intromain} to handle this issue.

\subsubsection{Hardness Amplification in \TFNP}\label{sec:resultTFNP}

Total problems (with not necessarily efficient verification of totality) are essentially equivalent to Optimization problems (see Remark~\ref{rem:total}). The class \TFNP\ is special as it is in an informal sense the intersection of Search \NP\ and Optimization problems. Problems in \TFNP\ capture problems in various areas such as game theory, cryptography, computational geometry, etc. We show that our general theorem can be applied to \TFNP\ problems as well, and as an example show it for the Factoring problem (see Corollary~\ref{cor:factor}) and the End of a Line problem (see Corollary~\ref{cor:EOLpoly}). The latter hardness amplification result directly implies the hardness amplification of various problems in game theory such as computing an approximate Nash equilibrium (see Section~\ref{sec:EOL} for details).

\subsection{Open Problems}

Our work leaves open several questions. We state a few of them below.

\subsubsection{Stronger Hardness Amplification}\label{sec:strongopen}
In Theorem~\ref{thm:maxSATETHInformal} we showed that if \MAXSAT is hard to compute on $1-1/2^{o(n)}$-fraction of inputs for  sub-exponential time algorithm, then there exists a distribution on which it is hard on a constant fraction of inputs for algorithms running in time $n^{\omega(1)}$. A natural open question is the following:

\begin{center}
\textit{Can we improve Theorem~\ref{thm:maxSATETHInformal} and get hardness amplified against sub-exponential time algorithms (instead of super-polynomial time algorithms)?}
\end{center}

It seems to us that derandomized direct product theorems may serve as the key tool to address the above question (for example, see \cite{IKW12}). In particular, if one can prove a (strongly) derandomized version of~\cite{FK00} then it might be possible to both aggregate sub-exponentially many instances succinctly and sample from the (derandomized) direct product distribution efficiently.

\subsubsection{Direct Product Feasibility}
In this paper, we were able to show direct product feasibility for certain problems quite easily (for example, see Theorems~\ref{thm:SATintro}~and~\ref{thm:string}), but had to work harder to prove them for some other problems (for example, see Lemmas~\ref{lem:sdfKS}~and~\ref{lem:MatMult}), and in some problem(s) were unable to establish the property of direct product feasibility (see Remark~\ref{rem:frechet}). This leads us to the following question.

\begin{center}
\textit{
Can we pinpoint what property of a problem makes it possible to establish \\
direct product feasibility?}
\end{center}

\subsubsection{Gap Amplification versus Hardness Amplification}

Direct Product theorems are key ingredients for both gap amplification and hardness amplification. Also, there are many philosophical similarities in the techniques known in literature of the aforementioned two kinds of amplifications. Thus we can ask the following (ambitious) question:

\begin{center}
\textit{
Can we obtain a trade-off between gap amplification and hardness amplification?}
\end{center}

 In particular, can we show that if one problem is hard to approximate on worst case within some factor $\alpha>0$, then it is hard to approximate within a factor $\alpha/100$ on average? We note here that Feige \cite{F02}, did answer the converse of this question, i.e., he used average case hardness assumptions to prove hardness of approximation results for various problems in \NP.

It seems to us that analyzing the operation of performing a small perturbation on the given instance may be the right direction to proceed.  Elaborating, consider a (worst case) hard   distribution over gap instances of some problem. If we build a new distribution, which samples from the aforementioned distribution, then performs  a small perturbation on the sampled gap instance, and outputs the perturbed instance, then we would still retain most  of the gap in the instance sampled from the new distribution, but on the other hand, the fraction of instances on which it is hard to solve the problem should increase significantly. It would be interesting if this intuition/approach could be made to work. 

A related question is to ask if we can improve our result in Theorem~\ref{thm:maxSATETHInformal} (for example, by making progress on the question detailed in Section~\ref{sec:strongopen}) using Gap-ETH \cite{D16,MR16} (instead of ETH)?

\subsubsection{Average Case Hard Problems in \P}

In this paper, we looked at average case hardness of some problems in \P\ against some efficiently sampleable distribution but one can ask if we can achieve more.  

\begin{center}
\textit{
Can we show for some natural problem in \P\ that it is hard to solve for the \\ uniform distribution?}
\end{center}

Another important question stemming from cryptography \cite{BRSV17} is whether we can construct a \emph{fine-grained} one way function from worst case assumptions? 

\subsection{Organization of Paper}

In Section~\ref{sec:overview}, we provide the proof overview of our main theorem (Theorem~\ref{thm:intromain}). In Section~\ref{sec:main}, we formally state and prove Theorem~\ref{thm:intromain}. In Section~\ref{sec:NP}, we prove hardness amplification results for various problems in \NP.
In Section~\ref{sec:P}, we prove hardness amplification results for various problems in \P.
Finally, in Section~\ref{sec:TFNP}, we prove hardness amplification results for various problems in \TFNP.

\section{Proof Overview}\label{sec:overview}
We provide a proof overview for our hardness amplification result for the problem of finding the maximum clique in a graph and then in the subsequent section we will show how our general result (i.e., Theorem~\ref{thm:intromain})  would follow.

\subsection{Hardness Amplification for Max Clique}
To illustrate the main ideas behind our scheme let us focus on  \MAXCLIQUE,  the problem of finding the largest clique in a given graph $G$.

Assume the existence of a distribution  $\mathcal{D}$ over graphs on $n$ vertices which is somewhat hard to compute. That is for every  \emph{randomized} algorithm $\mathcal{A}$ running in time $\poly(n)$, we have
\begin{eqnarray}\label{eq:A}
	\Pr_{G\sim \mathcal{D}}\left[\mathcal A \text{ finds max-clique in }G \text{ w.p.}\ge 2/3\right]\le 1-1/n.
\end{eqnarray}

We would like to prove the existence of a new distribution $\mathcal{D}'$ over graphs on $\poly(n)$ vertices which is much harder to compute. That is, for every randomized algorithm $\mathcal{A}'$ running in time $\poly(n')$, we have:
\begin{eqnarray}\label{eq:A'}
	\Pr_{G'\sim \mathcal{D}'}\left[\mathcal A' \text{ finds max-clique in }G' \text{ w.p.}\ge 2/3\right]\le 0.01.
\end{eqnarray}

Moreover if $\mathcal{D}$ is $\poly(n)$-time samplable, then so is $\mathcal{D'}$.
\paragraph{Construction of New Distribution:}
$\mathcal{D}'$ samples a graph $H$ as follows:
\begin{enumerate}
	\item Independently sample $G_1,\ldots ,G_k$ from $\mathcal{D}$, where $k=\poly(n)$.
	\item Define $V(H)=V(G_1)\dot\cup \cdots \dot\cup V(G_k)$.
	\item For every $i\in[k]$, connect the vertices in $V(G_i)$ using the original edges in $G_i$.
	\item For every $i,j\in[k]$ such that $i\neq j$, insert all the possible edges between $G_i$ and  $G_j$.
	\item Output $H$.
\end{enumerate}
Clearly, if $\mathcal{D}$ is $\poly(n)$-time samplable, then so is $\mathcal{D'}$.
Now assume for sake of contradiction, that there exists $\mathcal{A}'$ running in time $\poly(n')$, violating Equation~(\ref{eq:A'}).
We show the existence of an algorithm $\mathcal{A}$ running in time $\poly(n)$ violating Equation~(\ref{eq:A}).

The algorithm $\mathcal{A}$ on input graph $G$ with $n$ vertices is defined as follows:
\begin{enumerate}
	\item Let $\S$ be an empty set.
	\item Repeat following $O(n)$ times.
	\begin{enumerate}
		\item Pick randomly $i\in [k]$.
		\item Independently sample $G_1,\ldots ,G_{i-1},G_{i+1},\ldots G_k$ from $\mathcal{D}$.
		\item Construct $H$ setting $G_i$ to be $G$.
		\item Find clique in $H$ using $\mathcal{A}'$.
		\item Restrict clique in $H$ to the vertices of $G$ and add it to $\S$.
	\end{enumerate}
	\item Output the largest clique in $\S$.
\end{enumerate}
Clearly, the running time of $\mathcal{A}$ is $\poly(n)$, as $n'=\poly(n)$ and the running time of $\mathcal{A'}$ is $\poly(n')$. Our first observation is that for any graph $H$ constructed by $\mathcal{A}$, and for every $i\in [k]$ the restriction of a maximal clique in $H$ into $G_i$, is a maximal clique for $G_i$.

Let $\mathcal{A}_0$ be one iteration of step 2 of $\mathcal{A}$. If we show that $\mathcal{A}_0$ outputs maximum clique w.p. $\Omega(1/n)$ on  $1-1/n$ fraction of samples from $\mathcal{D}$ then,
$\mathcal{A}$ outputs maximum clique w.p. $2/3$ on $1-1/n$ fraction of samples from $\mathcal{D}$.

Now, observe that if instead of planting the given input graph $G$ as the $i$-th subgraph of $H$, we were planting a uniformly random sample of $\mathcal{D}$, then we get a graph $H$ which is drawn according to $\mathcal{D'}$. Consequently, if that was the case, then the success probability of $\mathcal{A}_0$ was equal the probability of $\mathcal {A'}$ and we were done. 

Let $\mathcal{D}'_G$ denote the marginal distribution over $H$, where the graph $G$ is planted at a random coordinate $i\in [k]$. We conclude the proof by showing that for $1-1/n$-fraction of instances $G$ drawn from $\mathcal{D}$ we have: 
$$
	\Pr_{G'\sim \mathcal{D}'_G}\left[\mathcal A' \text{ finds max-clique in }G' \text{ w.p.}\ge 2/3\right]\ge\frac{1}{2} \Pr_{G'\sim \mathcal{D'}}\left[\mathcal A' \text{ finds max-clique in }G' \text{ w.p.}\ge 2/3\right].
$$
Towards this goal we use a result by Feige and Kilian~\cite{FK00} that was proven in the context of parallel repetition. Under minor manipulations their result can be stated as follows:

Let $X$ be a universe and $\mathcal{T}$ be a distribution over $X$. Let $f:X^k\to\{0,1\}$. 
Define $$\mu=\underset{x^k\sim \mathcal{T}^k}{\mathbb{E}}\left[f\left(x^k\right)\right],$$ 
$$\mu_{x}=\underset{i\in [k], x_1,\ldots,x_{i-1},x_{i+1},\ldots ,x_k\sim \mathcal{T}}{\mathbb{E}}\left[f(x_1,\ldots,x_{i-1},x,x_{i+1},\ldots x_k)\right].$$
\begin{eqnarray}\label{eq:FK}
\Pr_{\substack{x\sim \mathcal{T}}}\left[|{\mu_{x}-\mu}|\ge k^{-1/6}\right]\le k^{-1/6},
\end{eqnarray}

To conclude the result, set $X$ as the set of graphs with $n$ vertices, and $\mathcal{T}$ be the distribution $\mathcal{D}$. We have $\mathcal{D'}=\mathcal{D}^k$. Define $f:X^k\to \zo$ by:
$$ 
f(G')=1 \iff \mathcal{A'} \text{ finds a maximal clique in $G$ w.p.} \ge 2/3. 
$$
In these notations, 
\begin{eqnarray*}
	\mu &=& \Pr_{G'\sim \mathcal{D'}}\left[\mathcal A' \text{ finds max-clique in }G' \text{ w.p.}\ge 2/3\right]\\
	\mu_x &=& \Pr_{G'\sim \mathcal{D}'_G}\left[\mathcal A' \text{ finds max-clique in }G' \text{ w.p.}\ge 2/3\right]. 
\end{eqnarray*}
 By an application of (\ref{eq:FK}), and a proper choice of $k$, we get that for all but at most $k^{1/6}$-fraction of graphs $G$ drawn according to $ \mathcal{D}$,  the success probability of $\mathcal{A'}$ on $\mathcal{D'}_G$ is $\Omega(n)$, as claimed.
\subsection{Abstraction}

In the previous subsection, we showed the main ingredients used for proving hardness amplification for the task of finding a maximal clique in a given graph. What were the properties of \MAXCLIQUE that we utilized to prove the result? 

One property that we used was that if we are given $k$ input graphs $G_1,\dots, G_k$, there exists an efficient way to construct a large graph $H$ such that a maximal clique in $H$ induces a maximal clique on each of the graphs $G_i$. The second property was that given a maximal clique in $H$ there exists an efficient algorithm to construct a maximal clique on each of the graphs $G_i$.

These two properties are captured in Definition~\ref{def:DPFeasible}: The first property of a problem $\Pi$ being Direct Product feasible is the existence of an efficient algorithm $\G$ stitching $k$ instances $I_1,\dots, I_k$ of $\Pi$ into a larger instance $I'$ of $\Pi$, such that: an optimal solution for $I'$ induces an optimal solution for each of the instances $I_i$. The second property the existence of an efficient algorithm $\Dec$ converting an optimal solution for $I'$ into an optimal solution of $I'$.

Once we show $\Pi$ is Direct Product feasible then the rest of the proof goes through. Indeed, assuming the existence of a distribution $\mathcal{D}$ on instances of $\Pi$ for which any efficient algorithm fails to compute on $1-1/n$ fraction of inputs,
we define the distribution $\mathcal{D'}, \mathcal{D'}_I$ as follows:
\begin{itemize}
	\item $\mathcal{D'}$ is the $k$-product distribution of $\mathcal {D}$, where we pick $k$ random samples from $\mathcal{D'}$ independently.
	\item $\mathcal{D'}_I$ is the distribution where we pick uniformly at random $i\in [k]$, and independently sample $I_1,\ldots ,I_{i-1},I_{i+1},\ldots I_k$ from $\mathcal{D}$. Finally, we construct $I'$ by setting $I_i$ to be $I$. 
\end{itemize} 

Now we can use~\cite{FK00} to show that for most instances $I\sim \mathcal{D}$ to connect the success probability of $\mathcal{A'}$ on $\mathcal{D'}$ and $\mathcal{D'}_I$, to conclude the proof.

\paragraph{Remark about Direct Product results and Hardness Amplification.} The direct product lemma at the heart of most hardness amplification results is the XOR lemma \cite{Y82}. But here we critically use the fact the problem is total, so at the surface at least, our results are incomparable to the hardness amplification results for \NP\ and \EXP\ obtained via XOR lemmas. 
%
%
%
%
%

\section{Hardness Amplification of Optimization Problems}\label{sec:main}

In this section, we prove our main result. First, we define some notations.
We use the   definition of optimization problems given in \cite{ACGKMP99} with additional formalism.

\begin{definition}[Optimization Problem]
An optimization problem $\Pi$ is charaterized by the following quadruple of objects $(I_\Pi,\SOL_{\Pi},\m_\Pi,\goal_\Pi)$, where:
\begin{itemize}
\item $I_\Pi$ is the set of instances of $\Pi$. In particular for every $d\in\mathbb{N}$, $I_\Pi(d)$ is the set of instance of $\Pi$ of input size at most $d$ (bits); 
\item $\SOL_\Pi$ is a function that associates to any input instance $x\in I_\Pi$ the set of feasible solutions of $x$;
\item $\m_\Pi$ is the measure  function\footnote{We define the measure function only for feasible solutions of an instance. Indeed if an algorithm solving the optimization problem outputs a non-feasible solution then, the measure just evaluates to -1 in case of maximization problems and $\infty$ in case of minimization problems.}, defined for pairs $(x,y)$ such that $x\in I_\Pi$ and $y\in \SOL_\Pi(x)$. For every such pair $(x,y)$,  $\m_\Pi(x,y)$ provides a non-negative integer which is the value of the feasible solution $y$; 
\item $\goal_\Pi\in\{\min,\max\}$ specifies whether $\Pi$ is a maximization or minimization problem.
\end{itemize}
\end{definition}

We would like to identify a subset of our solution space which are optimal with respect to our measure function. To this effect, we define a notion of optimal feasible solution.

\begin{definition}[Optimal Feasible Solution]
Let $\Pi(I_\Pi,\SOL_{\Pi},\m_\Pi,\goal_\Pi)$ be an optimization problem. For every $x\in I_\Pi$ and $y\in \SOL_\Pi(x)$ we say that $y$ is an optimal feasible solution of $x$ if for every $y'\in \SOL_\Pi(x)$ we have $\m_\Pi(x,y)\ge \m_\Pi(x,y')$ if $\goal_\Pi=\max$ and  $\m_\Pi(x,y)\le \m_\Pi(x,y')$ if $\goal_\Pi=\min$.
\end{definition}

Now, we can formally define the notion of direct product feasibility of a pair of optimization problems. 

\begin{definition}\label{def:DPfeasible}
Let $\Pi(I_\Pi,\SOL_{\Pi},\m_\Pi,\goal_\Pi)$ and $\Lambda(I_\Lambda,\SOL_{\Lambda},\m_\Lambda,\goal_\Lambda)$ be two optimization problems. Let $S,T:\mathbb{N}\times \mathbb{N}\to\mathbb{N}$. We say that the pair $(\Pi,\Lambda)$ is $(S,T)$-direct product feasible if there exists a pair of deterministic  algorithms $(\G,\Dec)$ such that for every $k,d\in\mathbb{N}$ the following holds:
\begin{itemize}
\item $\G$ takes as input $x_1,\ldots ,x_k\in I_\Pi(d)$ and outputs $x'\in I_\Lambda(S(d,k))$.
\item For any  feasible solution $y'\in \SOL_\Lambda(x')$, $\Dec$ takes as input $i\in[k]$, $x_1,\ldots , x_k\in I_\Pi(d)$, and $y'$, and outputs $y\in \SOL_\Pi(x_i)$. Moreover if  $y'\in \SOL_\Lambda(x')$ is an optimal feasible solution then so is  $y\in\SOL_\Pi(x_i)$.
\item $\G$ and $\Dec$ run in $T(d,k)$ time.
\end{itemize} 
Moreover, if $\Pi=\Lambda$ then we say that $\Pi$ is $(S,T)$-self direct product feasible
\end{definition}

All but two results in this paper use the notion of self direct product feasibility. Only the problems of computing the longest common subsequence and computing the edit distance between two strings require us to define direct product feasibility for a pair of problems (instead of a single problem). Even for the aforementioned two problems, direct product feasibility is shown for a pair of essentially \emph{same} problems (see Lemmas~\ref{lem:DPLCS}~and~\ref{lem:DPEdit} for more details).  

We now show our main theorem, that direct product feasibility implies hardness amplification.

\begin{theorem}[Formal version of Theorem~\ref{thm:intromain}]\label{thm:main}
\begin{sloppypar}Let $p\in(0,1)$.
Let $\Pi(I_\Pi,\SOL_{\Pi},\m_\Pi,\goal_\Pi)$ and $\Lambda(I_\Lambda,\SOL_{\Lambda},\m_\Lambda,\goal_\Lambda)$ be two optimization problems. Let $S,T:\mathbb{N}\times \mathbb{N}\to\mathbb{N}$ be such that $(\Pi,\Lambda)$ is $(S,T)$-direct product feasible. Let $v:\mathbb{N}\to\mathbb{N}$ be such that there is a deterministic algorithm $\mathcal{V}$ running in time $v(d)$ which on input $x\in I_\Pi(d)$ and $y\in \SOL_\Pi(x)$ always correctly computes $\m_\Pi(x,y)$. Let $s,t:\mathbb{N}\to\mathbb{N}$, $\p:\mathbb{N}\to (0,1]$, and  $\mathcal{D}=\{D_d\}_{d\in\mathbb{N}}$ be a family of distributions such that for every randomized algorithm $\mathcal{A}$ running in time $t(d)$, the following holds for large $d\in\mathbb{N}$.
\begin{itemize}
\item $D_d$ is a distribution over $I_\Pi(d)$ where an instance of $I_\Pi(d)$ can be sampled from $D_d$ in $s(d)$ time.
\item $
\underset{x\sim D_d}{\Pr}\left[\mathcal A \text{ finds an optimal feasible solution for }x \text{ with probability at least } p\right]\le 1-\p(d)$.
\end{itemize}\end{sloppypar}

Let $k:=64\cdot( \p(d))^{-6}$ and $c:=\frac{200\ln\left(\nicefrac{1}{1-p}\right)}{p}$. If
$k\cdot s(d)+T(d,k)+v(d)\le \frac{t(d)}{2c}$, then there is a distribution family $\mathcal{D}'=\{D_d'\}_{d\in\mathbb{N}}$ such that for every randomized algorithm $\mathcal{A}'$ running in time $\frac{t(d)}{2c}$, the following holds for all large enough $d\in\mathbb{N}$.
\begin{itemize}
\item $D_d'$ is a distribution over $I_{\Lambda}(m)$  where $m=S(d,k)$ and an instance in $I_\Lambda(m)$ can be sampled from $D_d'$ in $O(s(d)k+T(d,k))$ time.
\item $
\underset{x'\sim D_d'}{\Pr}\left[\mathcal A' \text{ finds an optimal feasible solution for }x' \text{ with probability at least }p\right]\le 0.01.
$
\end{itemize}
  \end{theorem}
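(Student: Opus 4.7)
The plan is to take $\mathcal{D}'_d$ to be the pushforward of $\mathcal{D}_d^k$ through $\G$: draw $x_1,\ldots,x_k$ independently from $\mathcal{D}_d$ with $k = 64\cdot(\p(d))^{-6}$, and output $x' := \G(x_1,\ldots,x_k)\in I_\Lambda(S(d,k))$. Sampling then runs in time $O(k\,s(d) + T(d,k))$, as required. Suppose toward a contradiction that some randomized $\mathcal{A}'$ of running time $t(d)/(2c)$ finds an optimal feasible solution for $x'\sim \mathcal{D}'_d$ with probability at least $p$ on more than a $0.01$ fraction of the $x'$'s. I will build from $\mathcal{A}'$ a randomized algorithm $\mathcal{A}$ of running time $t(d)$ that, on more than $1-\p(d)$ of $x\sim\mathcal{D}_d$, finds an optimal feasible solution for $x$ with probability at least $p$, contradicting the hypothesis.

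\textbf{The reduction.} On input $x\in I_\Pi(d)$, algorithm $\mathcal{A}$ performs $c = 200\ln(1/(1-p))/p$ independent trials. Each trial picks $i\in[k]$ uniformly, draws fresh samples $x_1,\ldots,x_{i-1},x_{i+1},\ldots,x_k\sim \mathcal{D}_d$, sets $x_i := x$, runs $\G$ to form $x'$, runs $\mathcal{A}'$ on $x'$ to obtain $y'$, and applies $\Dec(i,x_1,\ldots,x_k,y')$ to obtain $y\in\SOL_\Pi(x)$. Afterwards $\mathcal{A}$ uses $\mathcal{V}$ to compute $\m_\Pi(x,y)$ for each of the $c$ candidates and outputs the best one under $\goal_\Pi$. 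By direct product feasibility, whenever $y'$ is optimal for $x'$ the decoded $y$ is optimal for $x$, so the final selection is optimal provided at least one trial succeeds. The total cost is $c\cdot(k\,s(d)+T(d,k)+t(d)/(2c)+v(d))\leq t(d)$ by the hypothesis $k\,s(d)+T(d,k)+v(d) \leq t(d)/(2c)$.

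\textbf{Single-input success via Feige--Kilian.} Define $f:I_\Pi(d)^k\to\{0,1\}$ by $f(x_1,\ldots,x_k)=1$ iff $\mathcal{A}'$ on $\G(x_1,\ldots,x_k)$ outputs an optimal feasible solution with probability at least $p$. Let $\mu := \E_{\mathcal{D}_d^k}[f] > 0.01$ and $\mu_x := \E_{i,x_{-i}}[f(x_1,\ldots,x_{i-1},x,x_{i+1},\ldots,x_k)]$. The choice $k = 64(\p(d))^{-6}$ gives $k^{-1/6} = \p(d)/2$, so inequality~(\ref{eq:FK}) yields
\[ \Pr_{x\sim\mathcal{D}_d}\bigl[\mu_x < \mu - \p(d)/2\bigr] \leq \p(d)/2. \]
Thus on a $1-\p(d)/2 > 1-\p(d)$ fraction of $x$, we have $\mu_x > 0.005$ (assuming we are in the only meaningful regime $\p(d) < 0.02$; otherwise the conclusion is weaker than the hypothesis). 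For any such $x$, a single trial of $\mathcal{A}$ finds an optimal solution for $x$ with probability at least $\mu_x\cdot p \geq 0.005\,p$; hence $c = 200\ln(1/(1-p))/p$ independent trials succeed at least once with probability at least $1 - (1-0.005\,p)^c \geq 1 - e^{-\ln(1/(1-p))} = p$. This delivers the contradiction.

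\textbf{Main obstacle.} The whole proof hinges on the Feige--Kilian transfer, which is what converts average-case goodness of $\mathcal{A}'$ on $\mathcal{D}'_d$ into per-input goodness on the planted marginals $\mathcal{D}'_x$ for almost every $x$; its exponent $1/6$ forces the quantitative choice $k = \Theta((\p(d))^{-6})$ and, through the constraint $k\,s(d)+T(d,k)+v(d) \leq t(d)/(2c)$, the parameter regime in which the theorem applies. Everything else -- stitching via $\G$, decoding via $\Dec$, selecting the best candidate via $\mathcal{V}$, and amplifying from $0.005\,p$ to $p$ by independent repetitions -- is routine given the definition of direct product feasibility.
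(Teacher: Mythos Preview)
Your proof is correct and follows the paper's approach almost exactly: the same construction of $\mathcal{D}'$ as the pushforward of $D_d^k$ through $\G$, the same planting-based reduction with $c$ independent repetitions followed by selection via $\mathcal{V}$, the same running-time bookkeeping, and the same Feige--Kilian transfer to conclude. The only cosmetic difference is that you invoke the $k^{-1/6}$ form of Feige--Kilian (inequality~(\ref{eq:FK})) directly on the $i$-averaged quantity $\mu_x$, whereas the paper's formal proof cites the $k^{-1/3}$ version (Lemma~\ref{lem:DP}) on $\mu_{i,x}$ and then applies a Markov step over $x$ to reach the same $k^{-1/6}$ conclusion; both formulations appear in the paper and are equivalent for this argument.
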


A key ingredient in the proof of the above theorem is the following direct product lemma, which may be seen as a combinatorial analogue of a special case of the celebrated parallel repetition theorem \cite{R98}:
\begin{lemma}[Feige-Kilian Direct Product Lemma \cite{FK00}]\label{lem:DP}
Let $X$ be a finite set and $\D$ a probability distribution on $X$. Let $k\in\mathbb{N}$ and $f:X^k\to\{0,1\}$. We define the following two measures:
$$
\mu=\underset{x_1,\ldots ,x_k\sim \D}{\mathbb{E}}\left[f(x_1,\ldots ,x_k)\right]=\underset{x_1,\ldots ,x_k\sim \D}{\Pr}\left[f(x_1,\ldots ,x_k)=1\right],
$$
where $x_1,\ldots ,x_k$ are sampled independently, and for every $i\in[k]$ and $x\in X$ we define 
$$
\mu_{i,x}=\underset{x_1,\ldots ,x_{i-1}, x_{i+1},\ldots,x_k\sim \D}{\mathbb{E}}\left[f(x_1,\ldots.x_{i-1},x,x_{i+1},\ldots ,x_k)\right],
$$
where $x_1,\ldots,x_{i-1},x_{i+1},\ldots ,x_k$ are sampled independently. Then, we have the following:
$$
\Pr_{\substack{x\sim \D\\i\sim [k]}}\left[|\mu_{i,x}-\mu|\ge \frac{1}{k^{1/3}}\right]\le \frac{1}{k^{1/3}},
$$
where we sample $i$ from $[k]$ uniformly at random.
\end{lemma}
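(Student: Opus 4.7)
The plan is to bound $\mu_{i,x}-\mu$ in second moment and then apply Markov's inequality. The structural fact I will exploit is that, by independence of the coordinates, the single-coordinate deviations are pairwise orthogonal in $L^2(\D^k)$, so their total squared mass is at most $\mathrm{Var}(f)\le 1$. Averaging over $i$ and applying Markov yields the claimed $k^{-1/3}$ bound.

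For each $i\in[k]$, define $Y_i(x):=\mu_{i,x}-\mu$, a function of a single coordinate satisfying $\E_{x\sim\D}[Y_i(x)]=0$. Independence of $X_i,X_j\sim\D$ for $i\ne j$ gives
$$\E[Y_i(X_i)\,Y_j(X_j)] \;=\; \E[Y_i(X_i)]\cdot\E[Y_j(X_j)] \;=\; 0.$$
Writing $G:=\sum_{i=1}^k Y_i(X_i)$, this pairwise orthogonality yields $\E[G^2]=\sum_{i=1}^k \E[Y_i(X_i)^2]$.

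The main step is to show $\E[G^2]\le\mathrm{Var}(f)$. I will argue that $G$ is $L^2$-orthogonal to the residual $R:=f-\mu-G$, so that Pythagoras gives $\E[G^2]\le\E[(f-\mu)^2]=\mathrm{Var}(f)\le\E[f^2]\le 1$ (using $f\in\zo$). For each $i$, since $\E[f-\mu\mid X_i]=Y_i(X_i)$, the tower property of conditional expectation gives
$$\E[(f-\mu)\,Y_i(X_i)] \;=\; \E\bigl[\,\E[f-\mu\mid X_i]\cdot Y_i(X_i)\,\bigr] \;=\; \E[Y_i(X_i)^2].$$
Summing over $i$ yields $\E[(f-\mu)\,G]=\E[G^2]$, i.e.\ $\E[G\cdot R]=0$, as required.

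Combining the pieces, $\sum_{i=1}^k \E[Y_i(X_i)^2]\le 1$, and Markov's inequality with $\varepsilon:=k^{-1/3}$ then gives
$$\Pr_{x\sim\D,\,i\sim[k]}\!\left[\,|\mu_{i,x}-\mu|\ge k^{-1/3}\,\right] \;\le\; \frac{1}{k\varepsilon^2}\sum_{i=1}^k \E[Y_i(X_i)^2] \;\le\; k^{-1/3},$$
which is exactly the statement. The main obstacle is the orthogonality calculation $\E[G\cdot R]=0$ in the previous paragraph: it is essentially the additive part of the Hoeffding/ANOVA decomposition of $f$ into orthogonal components indexed by subsets of coordinates, and although it reduces to a one-line application of the tower property, it must be justified carefully so that the Pythagorean step is valid.
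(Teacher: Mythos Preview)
Your proof is correct and follows essentially the same approach as the paper's: both define the single-coordinate deviations $Y_i(x)=\mu_{i,x}-\mu$ (the paper calls these $F^i$), establish their pairwise $L^2$-orthogonality, use the tower property to compute $\E[(f-\mu)G]=\E[G^2]$ (the paper computes $\E[f\cdot g]=\sum_i\sigma_i$ directly), deduce $\sum_i\E[Y_i(X_i)^2]\le 1$, and finish with Markov. The only cosmetic difference is that you phrase the key inequality via Pythagoras on the decomposition $f-\mu=G+R$ (yielding the slightly sharper $\sum_i\sigma_i\le\mathrm{Var}(f)$), whereas the paper expands $\E[(f-g)^2]\ge 0$ to get $\sum_i\sigma_i\le\mu$; both bounds suffice.
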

The proof of the above lemma as stated above may be found in \cite{O05} and we provide it in Appendix~\ref{sec:missing} for completeness. We also note that variants of the above lemma have previously appeared in direct product testing literature \cite{DG08,IKW12,DS14}.

\begin{proof}[Proof of Theorem~\ref{thm:main}]
First we describe the construction of $\mathcal{D}'$ and then show the claims made in the theorem statement. 
\subsection*{Construction of $\mathcal{D}'$} Let $(\G,\Dec)$ be the pair of algorithms guaranteed by Definition~\ref{def:DPfeasible} for the pair $(\Pi,\Lambda)$. Fix $d,k\in\mathbb{N}$. We construct $D_d'$ from $D_d$ as follows. Independently sample $k$ pairs of instances, say $x_1,\ldots ,x_k$ from $D_d$  and feed it as input to $\G$. 
The sampling algorithm of the distribution $D_d'$ then outputs the output of $\G$.

Therefore, the $m$ in the theorem statement, the size of the instances outputted by $D_d'$ is equal to $S(d,k)$. The time needed to sample from $D_d'$ is the time needed to sample $k$ independent samples from $D_d$ which is $k\cdot s(d)$ time, plus the running time of $\G$ which is $T(d,k)$. 

\subsection*{Correctness of the Claim}  We will show that if there is a randomized algorithm $\mathcal{A}'$ with success probability $p$  running in time $\nicefrac{t(d)}{2c}$ that finds an optimal feasible solution of an instance sampled from $D_d'$ with probability (over the sampling) greater than $0.01$ then, there is a randomized  
algorithm $\mathcal{A}$ with success probability $p$ running in time $t(d)$ that finds an optimal feasible solution for an instance sampled from $D_d$ with probability (over the sampling) greater than $1-\p(d)$, reaching a contradiction. 
First we describe below  the algorithm\footnote{We remark here that the simulation of instances sampled from $D_d'$ from an instance of $D_d$ is similar to the simulation described in the (textbook) proof of showing existence of weak one-way functions imply existence of strong one-way functions \cite{Y82,G09}.} $\mathcal{A}$.  

\begin{center}\begin{tcolorbox}[breakable,colback=gray!10!white,colframe=black,width=0.9\textwidth]
	\noindent\textbf{Algorithm $\mathcal{A}$}:\vspace{0.1cm}
			\par
			\noindent \textbf{Input}: An instance $x\in I_{\Pi}(d)$.\vspace{0.1cm}
			\par
			\noindent \textbf{Output}: A feasible solution $y\in\SOL_{\Pi}(x)$.\vspace{0.1cm}
			\par
			\noindent \textbf{Procedure}: 
	\begin{enumerate}
				\item Let $\mathcal{S}\subseteq\SOL_\Pi(x)$ be a subset of feasible solutions initialized to  $\emptyset$.
				\item Repeat the below procedure $c$ times.
				\begin{enumerate}
				\item[2.1.] Pick $i\in[k]$ uniformly at random. 
				\item[2.2.] Independently sample $k-1$ pairs from $D_d$ say $x_1,\ldots ,x_{i-1},x_{i+1},\ldots ,x_k$.
				\item[2.3.] Define $x_i=x$. 
				\item[2.4.] Feed $x_1,\ldots ,x_k$ as input to $\G$. Let $x'\in I_\Lambda(m)$ be the output of $\G$.
				\item[2.5.] Feed $x'$ as input to $\mathcal{A}'$. Let $y'\in\SOL_{\Lambda}(x')$ be the output of $\mathcal{A}'$.
				\item[2.6.] Feed $i$, $x_1,\ldots ,x_k$, and $y'$ as input to $\Dec$. Let $y$ be the output of $\Dec$.
				\item[2.7.] Include $y$ in $\mathcal{S}$.
				\end{enumerate}
				\item Run $\mathcal{V}$ on $(x,y)$ for each $y\in\mathcal{S}$ and output the feasible solution in $\mathcal{S}$ which optimizes $
			\Pi$ (depends on $\goal_\Pi$). 
			\end{enumerate}\end{tcolorbox}
\end{center}

Let us first analyze the running time of $\mathcal{A}$. Since we repeat Step 2, $c$ times, it suffices to analyze the time needed for one iteration. Step 2.2 needs $(k-1)\cdot s(d)$ time, Step 2.4 needs $T(d,k)$ time, Step 2.5 needs time $\nicefrac{t(d)}{2c}$ time, and finally Step 2.6 needs time $T(d,k)$. Therefore, the total running time of  $\mathcal{A}$ is less than $c(k\cdot s(d)+T(d,k)+\nicefrac{t(d)}{2c}+v(d))\le t(d)$.

Finally, we argue on the correctness probability of $\mathcal{A}$. Let $\mathcal{A}_0$ be the same algorithm as $\mathcal{A}$ except Step 2 has only one iteration. Therefore it suffices to show that $\mathcal{A}_0$ outputs an optimal feasible solution with probability at least $\frac{\ln\left(\nicefrac{1}{1-p}\right)}{c}$ on $1-\p(d)$ fraction of instances sampled from $D_d$, as this implies that  
$\mathcal{A}$ outputs an optimal feasible solution with probability at least $\left(1-\left(1-\frac{\ln\left(\nicefrac{1}{1-p}\right)}{c}\right)^c\right)\ge 1-e^{\ln{(1-p)}}=p $ on greater than $1-\p(d)$ fraction of instances sampled from $D_d$. Notice that if $y'$ is an optimal solution to $x'$ then $\Dec$ always outputs an optimal solution to $x$. Therefore, it suffices to show that on input $x'$, $\mathcal{A}'$ in Step 2.5. outputs an optimal feasible solution  with probability at least $\frac{\ln\left(\nicefrac{1}{1-p}\right)}{c}$ on greater than $1-\p(d)$ fraction of instances sampled from $D_d$. 

Consider the Boolean function $f:I_\Lambda(m)\to\{0,1\}$ where $f(x')=1$ if and only if $\mathcal{A}'$ outputs an optimal feasible solution of $x'$ with probability at least $p$. From assumption on the fraction of sampled inputs that $\mathcal{A}'$ outputs an optimal feasible solution on, we have that $\mu:=\underset{x'\sim D_d'}{\mathbb{E}}\left[f(x')\right]> 0.01$.  
Define $\mu_{x,i}$ as follows:
$$\mu_{x,i}:= \Pr_{x_1,\ldots,x_{i-1},x_{i+1},\ldots ,x_k\sim D_d}[f(x_1,\ldots,x_{i-1},x,x_{i+1},\ldots ,x_k)=1].$$ From Lemma~\ref{lem:DP}, we have the following, 

$$\underset{\substack{x\sim D_d\\ i\in[k]}}{\Pr}\left[\left|\mu_{x,i}-\mu\right|\ge k^{-1/3}\right]\le k^{-1/3}.$$

Call an instance $x\in I_{\Pi}(d)$  ``bad'' if  $\underset{ i\in[k]}{\Pr}\left[\mu_{x,i}<\mu- k^{-1/6}\right]\le k^{-1/6}$; otherwise it is called ``good''. From Markov inequality we have that: 

$$
\Pr_{x\sim D_d}[x \text{ is good}] \ge 1-k^{-1/6}.
$$

\begin{sloppypar}Take any good $x$ then with probability at least $1-k^{-1/6}$ over $i\in [k]$ we have that $\mu_{x,i}\ge \mu -k^{-1/6}\ge 0.01-k^{-1/6}$. Next, conditioned on picking $i\in [k]$ such that $\mu_{x,i}>0.01-k^{-1/6}$, then with probability at least $0.01-k^{-1/6}$ we have $f(x_1,\ldots,x_{i-1},x,x_{i+1},\ldots ,x_k)=1$. Conditioned on $f(x_1,\ldots,x_{i-1},x,x_{i+1},\ldots ,x_k)=1$, we get that $\mathcal{A}'$ outputs an optimal feasible solution of $x'$ with probability at least $p$. Summarizing, if we sample a good $x$ then with probability at least $(0.01-k^{-1/6}) \cdot p\ge 0.005\cdot p=\frac{\ln\left(\nicefrac{1}{1-p}\right)}{c}$ the algorithm $\mathcal{A}'$ in Step 2.5 outputs an optimal feasible solution of $x'$. The proof concludes by noting that a good $x$ is sampled with probability at least $1-k^{-1/6}=1-\frac{\p(d)}{2}>1-\p(d)$.\qedhere\end{sloppypar}
\end{proof}

We conclude this section by providing a couple of remarks on the above theorem and proof. 

\begin{remark}[Amplification factor]\label{rem:soundness}
We would like to note that the amplified hardness for the optimization problem $\Lambda$ (which is shown in Theorem~\ref{thm:main} to be 0.01) can be further amplified to any arbitrarily small positive constant\footnote{Actually, it can be even amplified to sub-constant, but we would have to pay in the running time lower bound for $\Lambda$.} close to 0 by adjusting the parameters in the proof.
\end{remark}


\begin{remark}[Total Problems]\label{rem:total}
One may observe that the class of optimization problems are indeed equivalent to the class of \emph{total} problems. For some finite alphabet $\Sigma$, we call a relation $R\subseteq \Sigma^*\times \Sigma^*$ to be total if for every $x\in\Sigma^*$ there is always a $y\in\Sigma^*$ such that $(x,y)\in R$. To see that every optimization problem $\Pi$ is also a total problem, it suffices to note that the range of the measure function $\Delta_\Pi$ is bounded and therefore a maximum/minimum always exists. And to see that every total problem $R$ is also an optimization problem $\Pi_R$, it suffices to note we can define $\Delta_{\Pi_R}(x,y)$ to be 1 if $(x,y)\in R$ and 0 otherwise, and set $\Pi_R$ to be a maximization problem (i.e., $\goal_{\Pi_R}=\max$).

Given this new outlook at optimization problems, one may see the existence of solutions as a crucial requirement in the proof of Theorem~\ref{thm:main}. In particular, our algorithm $\mathcal{A}$  (design and analysis) would be meaningless without the existence of solutions for any instance of the optimization problem.

\end{remark}

\section{Almost Worst Case to Average Case for Problems in \NP}
\label{sec:NP}

In this section, we show hardness amplification results for various \NP-hard problems, with a focus on \MAXSAT and  Knapsack.

\subsection{Maximum Satisfiability Problem}

We recall below the Maximum satisfiability problem in our formalism for optimization problems.

\begin{definition}[\MAXSAT problem]
	The Maximum Satisfiability problem (\MAXSAT) is an optimization problem characterized by the following quadruple of objects $(I_{\SAT},\SOL_{\SAT},\m_{\SAT},\max)$, where:
	\begin{itemize}
		\item For every $d\in\mathbb{N}$, $I_{\SAT}(d)$ is the set of all \CNF formulas on $n=\Omega(d)$ variables and $O(n)$ clauses; 
		\item For every $\phi\in I_{\SAT}$ we have $\SOL_{\SAT}(\phi)=\{0,1\}^n$;
		\item For every $\phi\in I_{\SAT}$ and every $x\in\{0,1\}^n$  we define $\m_{\SAT}(\phi,x)$ to be the number of clauses that are satisfied by the assignment $x$ to the variables of $\phi$.
	\end{itemize}
\end{definition}

We now show that \MAXSAT is self direct product feasible (in a rather naive way).

\begin{lemma}\label{lem:SAT}
	Let $S,T:\mathbb{N}\times\mathbb{N}\to\mathbb{N}$, where $S(d,k)=dk$ and $T(d,k)=O(dk)$. Then we have that 
	\MAXSAT is $(S,T)$-self direct product feasible.
\end{lemma}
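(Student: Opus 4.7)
The plan is to use the most natural construction: take $\G$ to be ``disjoint union of formulas.'' Given inputs $\phi_1,\ldots,\phi_k\in I_{\SAT}(d)$, rename the variables of each $\phi_i$ so that the $k$ formulas act on pairwise disjoint variable sets $V_1,\ldots,V_k$ with $|V_i|=n_i=O(d)$. Define $\phi'$ to be the CNF formula whose clause-set is the (multiset) union of the clauses of the renamed $\phi_1,\ldots,\phi_k$. Then $\phi'$ has $\sum_i n_i=O(dk)$ variables and $\sum_i O(n_i)=O(dk)$ clauses, so $\phi'\in I_{\SAT}(dk)$, matching $S(d,k)=dk$ up to constants. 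The renaming and concatenation are linear-time bit operations, so $\G$ runs in $O(dk)$ time.

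The key point is that because the variable sets $V_1,\ldots,V_k$ are pairwise disjoint, any assignment $x\in\{0,1\}^{n_1+\cdots+n_k}$ decomposes uniquely as $x=(x_1,\ldots,x_k)$ with $x_i\in\{0,1\}^{n_i}$, and each clause of $\phi'$ mentions variables from exactly one $V_i$. Consequently,
\[
\m_{\SAT}(\phi',x)\;=\;\sum_{i=1}^{k}\m_{\SAT}(\phi_i,x_i).
\]
Since the $i$-th summand depends only on $x_i$, maximizing the left-hand side over $x$ is equivalent to independently maximizing each $\m_{\SAT}(\phi_i,x_i)$ over $x_i$. In particular, if $y'=(y_1',\ldots,y_k')$ is an optimal feasible solution for $\phi'$, then $y_i'$ is an optimal feasible solution for $\phi_i$ for every $i\in[k]$ (any improvement to some $y_i'$ would strictly improve $\m_{\SAT}(\phi',y')$).

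The decoder $\Dec$ is then trivial: on input $(i,\phi_1,\ldots,\phi_k,y')$ it simply projects $y'$ onto the coordinates corresponding to $V_i$ and returns $y_i'$. This projection is clearly computable in $O(dk)$ time (one pass over $y'$), and by the observation above it is an optimal feasible solution for $\phi_i$ whenever $y'$ is optimal for $\phi'$. Both $\G$ and $\Dec$ therefore meet the $T(d,k)=O(dk)$ runtime bound, and together they witness that $\MAXSAT$ is $(S,T)$-self direct product feasible.

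There is essentially no obstacle here: the whole argument rests on the variable-disjointness of the union construction, which makes the \MAXSAT\ objective additively decomposable across the subformulas. The only minor care needed is in the renaming step of $\G$ (so that no variable appears in two different $\phi_i$'s), and in checking that $\phi'$ still lies inside $I_{\SAT}(dk)$, both of which are immediate.
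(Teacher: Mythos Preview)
Your proposal is correct and matches the paper's own proof essentially verbatim: the paper also defines $\G$ as the conjunction of the $\phi_i$'s on disjoint variable blocks (via the explicit renaming $l_j\mapsto l_{(i-1)n+j}$) and takes $\Dec$ to be projection onto the $i$-th block of the assignment. Your additive-decomposition justification of optimality is exactly the ``disjoint variables $\Rightarrow$ optimal on each block'' argument the paper uses.
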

\begin{proof}
 We define the pair of deterministic algorithms $(\G,\Dec)$ below.	
	
For every $\phi_1,\ldots ,\phi_k\in I_{\SAT}(d)$ given as input to $\G$, it outputs the instance $\phi'$ in $I_{\SAT}(d')$ defined as:
$$
\phi' := \phi'_1 \wedge  \dots \wedge \phi'_k
$$
Where $\phi_i'$ is the formula obtained by replacing each literal  $l_j$ in $\phi_i$ by $l_{(i-1)n+j}$. 
It is clear that the running time of $\G$ is $O(d')$. 

Next, for every $i^*\in[k]$, $\phi_1,\ldots ,\phi_k\in I_{\SAT}(d)$, and a \SAT\ assignment $x'\in \zo^{n'}$ for $\phi'$ given as input to $\Dec$, the algorithm first runs $\G$ to compute $\phi'$ and then outputs $x\in \zo^n$ which is computed as follows:
$x_j=x'_{(i^*-1)n+j}$.

We next show that if $x'$ is an optimal \SAT\ assignment for $\phi'$ then $x$ is an optimal assignment for $\phi_{i^*}$. 
This follows easily by the fact that the $\phi'_i$s are defined on disjoint sets of variables, hence an optimal solution for $\phi'$ induces an optimal solution for each of the $\phi_i$s. 

The running times  of $\G$ and $\Dec$ trivially follow.
\end{proof}

The above rather simple self direct product feasibility has a rather strong consequence when combined with our generic hardness amplification theorem.

\begin{corollary}\label{cor:SATpoly}
Let $a\ge 8$.
Let  $\mathcal{D}=\{D_d\}_{d\in\mathbb{N}}$ be a family of distributions such that for every randomized algorithm $\mathcal{A}$ running in time $O(d^a)$ over inputs of size $d$, the following holds for large $d\in\mathbb{N}$.
\begin{itemize}
\item $D_d$ is a distribution over $I_{\SAT}(d)$ where an instance of $I_{\SAT}(d)$ can be sampled from $D_d$ in $\tilde O(d)$ time.
\item $
\underset{\phi\sim D_d}{\Pr}\left[\mathcal A \text{ finds maximizing assignment for }\phi \text{ with probability at least }2/3\right]\le 1-\frac{1}{d}$.
\end{itemize}

Then for $m:=\Theta(d^7)$,  there is a distribution family $\mathcal{D}'=\{D_d'\}_{d\in\mathbb{N}}$ such that for every randomized algorithm $\mathcal{A}'$ running in time $O(m^{a/7})$ over inputs of size $m$, the following holds for all large enough $d\in\mathbb{N}$.
\begin{itemize}
\item $D_d'$ is a distribution over $I_{\SAT}(m)$ and an instance in $I_{\SAT}$ can be sampled from $D_d'$ in $\tilde{O}(m)$ time.
\item $
\underset{\phi'\sim D_d'}{\Pr}\left[\mathcal A' \text{ finds maximizing assignment for }\phi' \text{ with probability at least }2/3\right]\le 0.01.
$
\end{itemize}
\end{corollary}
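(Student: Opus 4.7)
The plan is to apply Theorem~\ref{thm:main} directly with $\Pi = \Lambda = \MAXSAT$, invoking the self direct product feasibility established in Lemma~\ref{lem:SAT}. The whole argument reduces to a careful substitution of parameters, with the $a \ge 8$ hypothesis exactly calibrated to satisfy the theorem's running-time precondition.

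First, I identify the parameters that feed into Theorem~\ref{thm:main}. Setting $p = 2/3$ and $\p(d) = 1/d$, the constants appearing in the theorem become $c = 200 \ln(3)/(2/3) = 300\ln 3 = \Theta(1)$ and $k := 64 (\p(d))^{-6} = 64 d^6$. Lemma~\ref{lem:SAT} supplies $S(d,k) = dk$ and $T(d,k) = O(dk)$, so the aggregated instance has size $m := S(d, 64 d^6) = 64 d^7 = \Theta(d^7)$. The verification algorithm $\mathcal{V}$ just evaluates a CNF formula on a given assignment and so runs in time $v(d) = \tilde{O}(d)$; the hypothesis of the corollary provides $s(d) = \tilde{O}(d)$ and $t(d) = \Theta(d^a)$.

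Next, I verify the precondition $k\cdot s(d) + T(d,k) + v(d) \le t(d)/(2c)$. The left-hand side evaluates to $64 d^6 \cdot \tilde{O}(d) + O(d \cdot d^6) + \tilde{O}(d) = \tilde{O}(d^7)$, whereas the right-hand side is $\Theta(d^a)/\Theta(1) = \Theta(d^a)$. Since $a \ge 8$, the extra factor of $d$ comfortably absorbs both the constant $2c$ and the polylog factors in $k\cdot s(d)$, so the inequality holds for all sufficiently large $d$. With this check in place, Theorem~\ref{thm:main} delivers a distribution $D'_d$ on instances of $\MAXSAT$ of size $m = \Theta(d^7)$, samplable in time $O(s(d)\cdot k + T(d,k)) = \tilde{O}(d^7) = \tilde{O}(m)$, on which every randomized algorithm running in time $t(d)/(2c) = \Theta(d^a) = \Theta(m^{a/7})$ fails to produce a maximizing assignment with success probability $2/3$ on at least $99\%$ of the samples. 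This is precisely the statement of the corollary.

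Since this is essentially a parameter chase, no step is a genuine obstacle. The only thing worth being careful about is that the assumption $a \ge 8$ (rather than $a \ge 7$) is exactly what provides the extra factor of $d$ needed to swallow the $\tilde{O}$ polylogs in $k\cdot s(d)$ together with the constant $2c$ in the denominator on the right-hand side; any weaker assumption would force one to either tighten the direct product feasibility parameters in Lemma~\ref{lem:SAT} or sharpen the $\tilde{O}$ factor in the sampling time, neither of which is actually required.
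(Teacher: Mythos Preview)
Your proposal is correct and follows essentially the same approach as the paper: both apply Theorem~\ref{thm:main} with $\Pi=\Lambda=\MAXSAT$, $p=2/3$, $\p(d)=1/d$, $k=\Theta(d^6)$, and the direct product parameters from Lemma~\ref{lem:SAT}, then verify the running-time precondition using $a\ge 8$. Your additional closing remark explaining why $a\ge 8$ (rather than $a\ge 7$) is needed is a nice touch not present in the paper's terser version.
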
 
\begin{proof}
We apply Theorem~\ref{thm:main} by setting, $\Pi=\Lambda=$ \MAXSAT, $p=\nicefrac{2}{3}$, $S(d,k)=dk$, $T(d,k)=w\cdot dk$ (for some $w\in\mathbb{N}$), $v(d)=w'\cdot d$  (for some $w'\in\mathbb{N}$), $s(d)=\tilde{O}(d)$, $t(d)=d^a$, and $\p(d)=\nicefrac{1}{d}$. Then we have that $k:=\Theta(d^6)$ and $c:=300\ln 3$. We verify that $k\cdot s(d)+T(d,k)+v(d)\le \frac{t(d)}{2c}$ holds by noting that $k\cdot s(d)+T(d,k)+v(d)=\tilde{O}(d^7)$ and $\frac{t(d)}{2c}=\Omega(d^a)=\Omega(d^8)$ (because $a\ge 8$). Therefore, we have that the sampling time from $D'_d$ is $\tilde{O}(d^7)=\tilde{O}(m)$ and that the theorem statement holds for any  randomized algorithm $\mathcal{A}'$ running in time $ \frac{t(d)}{2c}=\Theta(d^a)=\Theta(m^{a/7})$. 
\end{proof}

\begin{remark}\label{rem:maxprobNP}
The idea of taking $k$ instances disjointly as in Lemma~\ref{lem:SAT} can be extended to many \NP-hard \emph{covering} problems such as Vertex Cover, Dominating Set, etc.  Consequently, we obtain hardness amplification results similar to Corollary~\ref{cor:SATpoly} for these problems as well. 
\end{remark}

Now we consider the same hardness amplification result of \MAXSAT but against subexponential time algorithms. We provide below a slightly informal statement (using asymptotic notations as opposed to providing specific constants) for clarity.

\begin{corollary}\label{cor:SATETH}
Let  $\mathcal{D}=\{D_d\}_{d\in\mathbb{N}}$ be a family of distributions such that for every randomized algorithm $\mathcal{A}$ running in time $2^{o(d)}$ over inputs of size $d$, the following holds for large $d\in\mathbb{N}$.
\begin{itemize}
\item $D_d$ is a distribution over $I_{\SAT}(d)$ where an instance of $I_{\SAT}(d)$ can be sampled from $D_d$ in $\tilde O(d)$ time.
\item $
\underset{\phi\sim D_d}{\Pr}\left[\mathcal A \text{ finds maximizing assignment for }\phi \text{ with probability at least }2/3\right]\le 1-\frac{1}{2^{o(d)}}$.
\end{itemize}

Then for $m:=2^{o(d)}$,  there is a distribution family $\mathcal{D}'=\{D_d'\}_{d\in\mathbb{N}}$ such that for every randomized algorithm $\mathcal{A}'$ running in time $m^{\omega(1)}$ over inputs of size $m$, the following holds for all large enough $d\in\mathbb{N}$.
\begin{itemize}
\item $D_d'$ is a distribution over $I_{\SAT}(m)$ and an instance in $I_{\SAT}$ can be sampled from $D_d'$ in $\tilde{O}(m)$ time.
\item $
\underset{\phi'\sim D_d'}{\Pr}\left[\mathcal A' \text{ finds maximizing assignment for }\phi' \text{ with probability at least }2/3\right]\le 0.01.
$
\end{itemize}
\end{corollary}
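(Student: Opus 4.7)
The plan is to invoke Theorem~\ref{thm:main} with $\Pi=\Lambda=$ \MAXSAT, using the self direct product feasibility from Lemma~\ref{lem:SAT} (so $S(d,k)=dk$ and $T(d,k)=O(dk)$). Set $p=2/3$, take $v(d)=O(d)$ to be the obvious time to verify a candidate assignment by counting satisfied clauses, and take $s(d)=\tilde{O}(d)$ to be the sampling time of $D_d$. The structural reduction is exactly that of Corollary~\ref{cor:SATpoly}; only the parameter scaling differs, since we now start from a sub-exponential hardness assumption instead of a polynomial one.

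To unpack the $o(d)$ notation I would fix two functions $\delta(d), \gamma(d) \to 0$ and write $\p(d) = 2^{-\delta(d) d}$ and $t(d) = 2^{\gamma(d) d}$. Theorem~\ref{thm:main} then forces $k = 64 \cdot 2^{6 \delta(d) d}$ and fixes $c = 300\ln 3$, so the blown-up instance size is $m = S(d,k) = dk = 2^{6\delta(d) d + O(\log d)}$, which is $2^{o(d)}$ as required. The sampling time of $D'_d$ is $O(s(d) k + T(d,k)) = \tilde{O}(dk) = \tilde{O}(m)$, also as required. The hypothesis $k \cdot s(d) + T(d,k) + v(d) \le t(d)/(2c)$ reduces to $2^{6\delta(d) d + O(\log d)} \le 2^{\gamma(d) d}/(2c)$, which I would enforce by choosing $\gamma(d) \ge 7\delta(d)$ for all sufficiently large $d$. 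Since the resulting algorithm time budget is $t(d)/(2c) = \Theta(2^{\gamma(d) d})$ while $m = 2^{\Theta(\delta(d) d)}$, this budget equals $m^{\Theta(\gamma(d)/\delta(d))}$.

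The remaining task is to choose $\delta$ and $\gamma$ so that simultaneously $\gamma(d)/\delta(d) \to \infty$ (giving a budget of $m^{\omega(1)}$), $\gamma(d) \ge 7\delta(d)$ (satisfying the hypothesis), and $\delta(d), \gamma(d) \to 0$ (matching the sub-exponential assumption of the corollary). For concreteness I would take $\delta(d) = 1/\log^2 d$ and $\gamma(d) = 1/\log d$: then $m = 2^{O(d/\log^2 d)} = 2^{o(d)}$, and every randomized algorithm running in time up to $t(d)/(2c) = 2^{d/\log d} = m^{\omega(1)}$ succeeds on at most $1\%$ of instances drawn from $D'_d$, exactly the conclusion. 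The main (minor) obstacle is the simultaneous requirement that $m$ stay sub-exponential in $d$ while the amplified time budget is super-polynomial in $m$; the essentially linear overhead $T(d,k) = O(dk)$ provided by Lemma~\ref{lem:SAT} leaves ample slack for this trade-off, so the corollary follows.
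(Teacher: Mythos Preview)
Your proposal is correct and follows essentially the same approach as the paper: invoke Theorem~\ref{thm:main} for \MAXSAT with the parameters from Lemma~\ref{lem:SAT}, choosing sub-exponential failure and time functions so that $k$ (and hence $m=dk$) stays $2^{o(d)}$ while the time budget $t(d)/(2c)$ is $m^{\omega(1)}$. The paper parametrizes with a single slowly-growing function $h(d)\to\infty$ (taking $t(d)=2^{d/h}$ and $\p(d)=2^{-d/(6h^2)}$, so that the ratio of exponents is $\Theta(h)\to\infty$) rather than your pair $(\delta,\gamma)$, but the arithmetic and verification are identical.
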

\begin{proof}Let $h:\mathbb{N}\to\mathbb{N}$ be some slowly increasing function such that $\underset{x\rightarrow \infty}{\lim}\ h(x)=\infty$ and let $h:=h(d)$. 
We apply Theorem~\ref{thm:main} by setting, $\Pi=\Lambda=$ \MAXSAT, $p=\nicefrac{2}{3}$, $S(d,k)=dk$, $T(d,k)=w\cdot dk$ (for some $w\in\mathbb{N}$), $v(d)=w'\cdot d$  (for some $w'\in\mathbb{N}$), $s(d)=\tilde{O}(d)$, $t(d)=2^{d/h}$, and $\p(d)=\nicefrac{1}{2^{d/(6\cdot h^2)}}$. Then we have that $k:=\Theta(2^{d/h^2})$ and $c:=300\ln 3$. We verify that $k\cdot s(d)+T(d,k)+v(d)\le \frac{t(d)}{2c}$ holds by noting that $k\cdot s(d)+T(d,k)+v(d)=2^{(1+o(1))\cdot d/h^2}$ and $\frac{t(d)}{2c}=\Omega(2^{d/h})$. Therefore, we have that the sampling time from $D'_d$ is $2^{(1+o(1))\cdot d/h^2}=O(m)$ and that the theorem statement holds for any  randomized algorithm $\mathcal{A}'$ running in time $ \frac{t(d)}{2c}=O(2^{d/h})=\Theta(m^{h(d)})=m^{\omega(1)}$. 
\end{proof}

Note that if we assume the (randomized) Exponential Time Hypothesis (\ETH{}) for 3-\SAT\ \cite{IP01,IPZ01,CIP06} then after applying the Sparsification lemma, we obtain for some $\varepsilon>0$, a family of distributions $\mathcal{D}=\{D_d\}_{d\in\mathbb{N}}$  such that  for every randomized algorithm $\mathcal{A}$ running in time $2^{\varepsilon d}$ over inputs of size $d$, the following holds for large $d\in\mathbb{N}$.
\begin{itemize}
\item $D_d$ is a distribution over $I_{\SAT}(d)$ where an instance of $I_{\SAT}(d)$ can be sampled from $D_d$ in $\tilde O(d)$ time.
\item $
\underset{\phi\sim D_d}{\Pr}\left[\mathcal A \text{ finds maximizing assignment for }\phi \text{ with probability at least }2/3\right]\le 1-\frac{1}{2^{\tilde{O}(d)}}$.
\end{itemize} 

Therefore, our amplification in Corollary~\ref{cor:SATETH} is an \emph{almost} worst-case to average-case reduction for \MAXSAT (under subexponential time reductions).

\subsection{Knapsack Problem}

In this subsection, we study the direct product feasibility of the Knapsack problem and as we will see, showing that its direct product feasible for reasonable parameters is significantly more non-trivial than was with the case for \MAXSAT.

In the Knapsack problem we are given a target \emph{sack weight} $W$, and set of items via pairs $(w,v)$ where $w$ is the weight of the item and $v$ is the value of the item, and the goal is to pick a subset of items which maximizes the sum of the values of the picked items  given the constraint that  their total weight  is at most $W$.  More formally, we describe it as follows.

\begin{definition}[$\KS$ problem]
	The Knapsack problem ($\KS$) is an optimization problem characterized by the following quadruple of objects $(I_{\KS},\SOL_{\KS},\m_{\KS},\max)$, where:
	\begin{itemize}
		\item For every $d\in\mathbb{N}$, $I_{\KS}(d)=(W, \set{(v_i,w_i)}_{i=1}^n)$  where $W,n, v_i,w_i\in \N$ such that $d= \log W+\sum_{i=1}^n(\log v_i+\log w_i )$; 
		\item For every $(W, \set{(v_i,w_i)}_{i=1}^n)\in I_{\KS}$ we have $\SOL_{\KS}((W, \set{(v_i,w_i)}_{i=1}^n))$ is the set of all subsets $S$ of $[n]$ satisfying $\sum _{i\in S}w_i\le W$;
		\item For every $S\in 2^{[n]}$  satisfying $\sum _{i\in S}w_i\le W$ we define $\m_{\KS}(S)$ to be $\sum _{i\in S}v_i$.
	\end{itemize}
\end{definition}

It is not trivial to show the self direct problem feasibility for Knapsack. 
To see that, consider the case $k=2$. Naively, if the sacks weights are $W_1, W_2$ 
then one may define a new sack of weight $W_1+W_2$, and then take the union of the item sets while leaving their weights and values untouched. 
However, in this simple reduction, we may use some of the target sack weight of one instance 
against another instance. Nonetheless we show with some care, a direct product feasibility result can be  obtained. 

\begin{lemma}\label{lem:sdfKS}
Let $\mathbb T:=\{2^\ell\mid \ell\in\mathbb{N}\}$.	Let $S,T:\mathbb{N}\times\mathbb{T}\to\mathbb{N}$, where $S(d,k)=k^{O(1)}\cdot d$ and $T(d,k)=k^{O(1)}\cdot d$. Then we have that 
	$\KS$ is $(S,T)$-self direct product feasible.
\end{lemma}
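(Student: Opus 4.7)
My approach is to establish direct product feasibility for Knapsack by iterating a pairwise ($k=2$) construction $\log_2 k$ times, which is why the lemma restricts $k$ to powers of $2$. If each pairwise combination enlarges the instance by only a constant multiplicative factor, then after $\log_2 k$ rounds the total size and running time are both $k^{O(1)}\cdot d$, matching the claimed bounds on $S$ and $T$.

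For the base case of combining two knapsack instances $I_1=(W_1,\mathcal{I}_1)$ and $I_2=(W_2,\mathcal{I}_2)$, the central difficulty is that the naive construction---take the disjoint union of items with combined budget $W_1+W_2$---does not decompose: slack in one instance's budget can be spent on items of the other, so the combined optimum may strictly exceed the sum of the individual optima. To prevent this ``budget shifting,'' I would first scale the weight of every item in instance~$2$ by a factor $M>W_1$ and set the combined budget to $W'=W_1+M\cdot W_2$. A short integrality argument then shows that in any feasible combined solution, the total original weight of items chosen from instance~$2$ is at most $W_2$, since exceeding $W_2$ by a single unit would consume $M$ extra scaled weight, which the slack $W_1$ from instance~$1$ cannot absorb.

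The above step enforces $w_2\le W_2$ but leaves $w_1$ essentially unconstrained beyond the total budget. To enforce the symmetric property, I would next augment instance~$1$ with carefully tuned ``filler'' items together with a value rescaling of the real items. The filler items have unit weight and a small positive value; the value-scaling factor is chosen so that on a per-unit-weight basis every real item of instance~$1$ strictly outranks the filler, ensuring that the filler never corrupts the decoded optimum, yet the filler is valuable enough that any combined optimum uses it to fill the instance-$1$ budget exactly to $W_1$ (so that no leftover budget remains for instance~$2$ to exploit). The decoder $\Dec$ simply restricts the combined optimum to the items tagged with instance~$i$, discards any filler, and returns the result; optimality is inherited directly from the clean decomposition.

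The main obstacle is calibrating $M$, the value-scaling factor, and the filler value simultaneously so that (i)~items of instance~$2$ are forced to satisfy $w_2\le W_2$, (ii)~items of instance~$1$ are forced to satisfy $w_1=W_1$ exactly in every optimum, and (iii)~the per-item bit-size grows by only a constant amount per pairwise round, so that the recursive unfolding yields the claimed $k^{O(1)}\cdot d$ bound rather than a larger polynomial. With these calibrations in place, every feasible combined solution decomposes into feasible individual solutions, and the combined optimum equals the sum of the individual optima plus the fixed filler contribution; iterating the base-case construction over the $\log_2 k$ levels of a balanced binary tree and composing the pairwise decoders yields the desired $(\G,\Dec)$ for the full $k$-ary direct product.
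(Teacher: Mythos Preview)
Your high-level plan matches the paper's: establish the $k=2$ base case via weight scaling plus ``filler'' items, then iterate $\log_2 k$ times. The difficulty you identify---budget shifting between instances---is exactly the right one, and the scaling of instance-$2$ weights by $M>W_1$ to force $w_2\le W_2$ via integrality is correct and is what the paper does (after a normalization so that $W_1=W/2$, $W_2=W$).

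However, your filler construction has a fatal size blowup. You specify unit-weight filler items intended to absorb slack so that the instance-$1$ budget is used exactly. But the slack that must be absorbed can be as large as $M\cdot(W_2-w_2^*)$, where $w_2^*$ is the weight of the optimal $I_2$ selection; since $M$ and $W_2$ are encoded in binary, this slack can be exponential in $d$, and unit-weight filler would require exponentially many items. This destroys the $k^{O(1)}\cdot d$ size bound. The paper's fix is to use only $\log W$ dummy items whose weights are $W\cdot 2^j$ (i.e., binary-weighted on the \emph{scaled} side), so that any multiple of $W$ up to roughly $W^2$ can be represented with logarithmically many items.

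There is a second, related gap. You place the filler on the unscaled $I_1$ side and rescale $I_1$ values, but you never explain what forces the $I_2$ selection to be optimal for $I_2$: your integrality argument gives only $w_2\le W_2$, and if the optimum picks $w_2<W_2$, the residual budget for $I_1$ is $W_1+M(W_2-w_2)\gg W_1$, so restricting to the $I_1$ items need not be $I_1$-feasible. The paper resolves this by placing the dummies on the scaled side and imposing a value hierarchy in which $I_2$ items (with values rescaled by roughly $(m{+}1)W$, where $m=\sum_i v_i^{(1)}$) dominate the dummies, which in turn dominate all of $I_1$'s value; this forces the scaled block to weigh exactly $W^2$ in every optimum, leaving precisely $W/2=W_1$ for $I_1$. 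Your sketch reverses the location of the filler and the direction of the value scaling without supplying the corresponding decomposition argument.
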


\begin{proof}
	We first show that there exists a pair of deterministic  algorithms $(\G,\Dec)$ such that the conditions of $(S,T)$-self direct product feasibility in Definition~\ref{def:DPfeasible} are met for the Knapsack problem for every $d\in\mathbb{N}$ but when $k$ is fixed to be $2$.  Using that, we prove the lemma statement 
	for any value of $k$ which is a power of $2$. 
	The proof proceeds by first creating (recursively) two instances:
	the first corresponds to the first $k/2$ instances 
	and the second to the last $k/2$ ones. 
	Then we use the result for $k=2$ to create a single instance.

\paragraph{Base Case $k=2$.}
	Let us first present the algorithm $ \G$.
	Let $I_1, I_2 \in I_\KS(d)$ be the input to $\G$ where for every $j\in\{1,2\}$, we have $I_j=(W_j,\set{(v^{(j)}_i, w^{(j)}_i)}_{i\in [n_j]})$. We first normalize the weights (by multiplying the weight of both the sack and 
	each item by the same factor $c$) so that $W_1=W/2, W_2=W$, for some $W\in\mathbb{N}$ (note that we can achieve this normalization for $W= 2\cdot W_1\cdot W_2$).
	
	The output of $\G$ is a new instance $I':=(W',\set{(v^{'}_i, w^{'}_i)}_{i\in [n']})\in I_{\KS}(d')$, where $d'=O(d^2)$, $n':=n_1+n_2+\log W$, and   $W'=W^2+W/2$. We define $N_1=\set{1,\dots, n_1}, 
	N_2= \set{n_1+1, \dots, n_2}$, and $D=\set{n_1+n_2+1,\dots, n_1+n_2+\log W}$.
		
		Now we define the items $(v^{'}_i, w^{'}_i)$ for all $i\in[n']$.
		The first $n_1$ items correspond to the items of $I_1$, the next two $n_2$ items 
		correspond to the items in $I_2$, and the last $\log W$ are dummy items. Elaborating, we have 
\begin{align*}
\forall i\in [n'],\ v_i'&=\begin{cases}
v_i^{(1)}\text{ if }i\le n_1\\
v^{(2)}_{i-n_1}\cdot (m+1) W+1\text{ if }n_1<i\le n_1+n_2\\
(m+1)\cdot 2^{i-n_1-n_2-1}\text{ if }i>n_1+n_2
\end{cases},
\\
\forall i\in [n'],\ w_i'&=\begin{cases}
w_i^{(1)}\text{ if }i\le n_1\\
w^{(2)}_{i-n_1}\cdot W\text{ if }n_1<i\le n_1+n_2\\
W\cdot 2^{i-n_1-n_2-1}\text{ if }i>n_1+n_2
\end{cases},
\end{align*}		
		
		where $m:=\underset{i\in [n_1]}{\sum}v^{(1)}_{i}$. Note that the size of $I'$ is indeed $O(d)$.

		Now we define  $\Dec$: It gets as  input an index $j\in \{1,2\}$, 
		an instance $I'$ which was generated by $\G$ and an optimal solution $S'$ for $I'$. 
		It is required to produce an optimal solution for the instance $I_j$. 
		The algorithm $\Dec$ returns $S'\cap N_1$ if $j=1$, otherwise it outputs $S'\cap N_2$ 
		(where each element is translated by $-n_1$ so that the final output resides in $[n_2]$).
		
		The correctness of $\Dec$  follows by the following claim.
		
		\begin{claim}
			Let $S'$ be an optimal solution for $I'$, then:
			\begin{enumerate}
				\item $S'_1:= S'\cap N_1$ is an optimal solution for $I_1$.
				\item Let $S_2:= S'\cap N_2$ and define $S'_2$ as the set of elements in $S_2$ 
				where each element is translated by $-n_1$, then  $S'_2$ is an optimal solution for $I_2$.
			\end{enumerate}
		\end{claim} 
	
		\begin{proof}
			
		\begin{enumerate}
			
			\item We first show that for $S'$ we have 
			$${\bf w}:=\underset{\sett {i'\in S'} {i'\in N_2 \cup D}}{\mathlarger{\sum}}w'_i=W^2.$$ 
			Assume not, then either we have ${\bf w}>W^2$ or ${\bf w}<W^2$. If it is the former then notice that since $w_i'$ is a multiple of $W$ for all $i>n_1$, we arrive at a contradiction as ${\bf w}\le W'=W^2+W/2$. Therefore let us assume that it is the latter.  We define a `better' solution $S''$ as follows. 
			
			First include into $S''$ all elements in $S'\cap N_2$. 
			Let $\rho=W^2-\sum_{i' \in S' \cap N_2}w'_i$, denote the remaining slack in the sack after inserting 
			the elements in $N_2 \cap S'$. Next, for each $i\in D$ we insert $i$ into $S''$ if in the binary 
			representation of $\rho$, the $i$-th bit equals $1$. We show that  $\m(S'')> \m(S')$, 
			contradicting the optimality of $S'$.
			
			Let $D'=S'\cap D$, and let $\rho'=\sum_{i\in D'}2^{i-(n_1+n_2)-1}$ be the number obtained by the binary 
			representation of the elements in $D'$. 
			Observe that since by our assumption ${\mathlarger{\sum}}_{\sett {i'\in S'} {i'\in N_2 \cup D}}w'_i<W^2$ we get  $\rho\ge \rho'+1$ and hence:   
			$$\m (\card{S'' \cap D})-  \m (\card{S' \cap D})=(\rho-\rho')\cdot (m+1) \ge m+1, $$ and we conclude:
			\begin{eqnarray*}
				\m(S'')-\m(S')&=&  	\m (\card{S'' \cap D})-  \m (\card{S' \cap D}) + \m (\card{S'' \cap N_1})-  \m (\card{S' \cap N_1}) \\
							  &\ge & m+1  - \sum_{i\in S'\cap N_1}v_i{'} \\
							  &\ge& 1,
			\end{eqnarray*}
			
			where the last inequality follows since $\sum_{i\in S'\cap N_1}v_i'\le m=\underset{i\in [n_1]}{\sum}v^{(1)}_{i}$, contradicting the optimality of $S'$.

			Now, clearly, if ${\mathlarger{\sum}}_{\sett {i'\in S'} {i'\in N_2 \cup D}}w_i=W^2$, then $S'\cap N_1$ is an optimal solution for $I_1$, since otherwise we can improve over the solution $S'$ (by taking the same items from $N_2$ and $D$ and add the optimal solution for $I_1$).
			
			\item Assume for sake of contradiction that, $S'_2$ defined in the claim, is not an optimal solution for $I_2$. 
			Let $\tilde {S_2}$ be an optimal solution for $I_2$.
			Let us define $S''$ as follows: First we include the items from $\tilde S_2$, then add items in $D$ until 
			the total weight reaches $W^2$. 
			Finally, we include the items from $S'\cap N_1$. 
			
			Observe that by the previous item, the set $S''$ is a feasible solution 
			(as the weight of the elements in $S'\cap N_1$ does not exceeds $W/2$). 
			Since in $I'$ for each $i\in N_2$ we set: $v'_{i}=v^{(2)}_{i-n_1}\cdot (m+1) W+1$ and by the optimality of $\tilde {S}_2$ we have:
			$$ \m (S''\cap N_2)- \m (S'\cap N_2)\ge (m+1) W+1. $$
			
			Observe that $S'$ may contain at most $\log W$ more elements from $D$ than $S''$ contains. However their value is bounded by $(m+1)W$, and hence:
			\begin{eqnarray*}
				\m(S'')-\m(S')&=& (\m(S''\cap N_2)-\m(S'\cap N_2))+(\m(S''\cap D)-\m(S'\cap D))\\
				&\ge& (m+1) W+1 -(m+1) W\\
				& \ge& 1.
			\end{eqnarray*}
			Contradicting the optimality of $S'$.\qedhere
		\end{enumerate}		
		\end{proof}
		Finally, it is easy to see that running time of $\Dec$ is at most $O(d')$. 

		\paragraph{General Case $k=2^\ell$, for some $\ell\in\mathbb{N}$.} We will use $(\G,\Dec)$ given in the previous case to  show that there exists a pair of deterministic  algorithms $(\tilde \G,\tilde \Dec)$ such that the conditions of $(S,T)$-self direct product feasibility in Definition~\ref{def:DPfeasible} are met for the Knapsack problem for every $d\in\mathbb{N},k\in\mathbb{T}$. 

Let us first present the algorithm $ \tilde\G$.
	Let $I_1, \ldots I_k \in I_\KS(d)$ be the input to $\G$ where for every $j\in[k]$, we have $I_j=(W_j,\set{(v^{(j)}_i, w^{(j)}_i)}_{i\in [n_j]})$. We arbitrarily pair up the $k$ instances, and feed each pair of instances  to $\G$ (described previously in the proof). We obtain $k/2$ instances of Knapsack problem of size $O(d)$. We repeat the process of arbitrarily pairing up the instances and feeding it to $\G$. After doing this process $\log k$ times, we will have a single instance of Knapsack as the output of $\G$ which will be of size at most $k^{O(1)}\cdot d$. This is the  output of $\tilde \G$.

Finally, it suffices to note that $\tilde \Dec$ simply does a restriction of the solution to the coordinates of the instance of interest as in the previous case where $k$ was set to 2. 
\end{proof}

Like \MAXSAT, Knapsack too admits a hardness amplification result but we skip writing it here for the sake of non-repetitiveness. 
Also, notice that the above lemma is proven for functions $S,T$ on domain $\mathbb{N}\times
\mathbb{T}$ instead of $\mathbb{N}\times
\mathbb{N}$. This is done for the sake of clear presentation. The above proof can be extended to prove the direct product feasibility for the general case as well.

\begin{remark}[Adopting above proof to maximization version of other covering problems]\label{rem:othermax}
The idea of taking $k$ instances with appropriate scaling as in Lemma~\ref{lem:sdfKS} can be extended to many \emph{maximization} versions of {covering} problems (that are \NP-hard) such as Max-coverage, Clustering etc.  Consequently, we obtain hardness amplification results  for these problems as well. 
\end{remark}

\section{Mild Average Case to Sharp Average Case for Problems in \P}\label{sec:P}
In this section, we look at hardness amplification for two natural and important string problems which have been at the center stage of fine-grained complexity in the last few years. We also look at the problem of matrix multiplication, which does \emph{not} fit into our scheme of hardness amplification given in Section~\ref{sec:main} as it is not known to admit efficient deterministic verification. We also propose the problem of computing Frechet distance as a natural problem which might not be direct product feasible. 
 
\subsection{Longest Common Subsequence}

In the Longest Common Subsequence problem we are given two strings and the goal is to find the subsequence of maximum length that is in both the strings. It is indeed a natural maximization problem and fits smoothly into our formalism as follows.

\begin{definition}[\LCS alignment]
Let $\Sigma$ be a finite non-empty set and $n\in\mathbb{N}$. For every pair of strings $(a,b)\in \Sigma^n\times \Sigma^n$ and every function $\sigma:[n]\to[n]\cup\{\perp\}$, we say that $\sigma$ is an \LCS alignment for $(a,b)$ if the following holds.
\begin{itemize}
\item \textbf{Monotonicity}: For every $i,j\in [n]$, $i<j$ we have that if $\sigma(i)\neq \perp$ and $\sigma(j)\neq \perp$ then $\sigma(i)<\sigma(j)$.
\item \textbf{Matching}:  For every $i\in [n]$,  if $\sigma(i)\neq \perp$ then we have $a_i=b_{\sigma(i)}$.
\end{itemize}
The length of an \LCS alignment $\sigma$ is defined as the cardinality of the preimage of $[n]$, i.e., $$\left|\{i\in[n]\mid \sigma(i)\in[n]\}\right|.$$
\end{definition}

\begin{definition}[\LCS problem]
Let $\Sigma$ be a finite non-empty set. The Longest Common Subsequence ($\LCSsig$) is an optimization problem charaterized by the following quadruple of objects $(I_{\LCSsig},\SOL_{\LCSsig},\m_{\LCSsig},\max)$, where:
\begin{itemize}
\item For every $d\in\mathbb{N}$, $I_{\LCSsig}(d)=\Sigma^{n}\times \Sigma^{n}$ where\footnote{Here we use the unary encoding to encode a symbol in $\Sigma$ for the ease of presentation, as it circumvents rounding issues. This does not affect the results in this paper as we think of $\Sigma$ as some small universal constant (like $\Sigma=\{0,1\}$). The results in this paper also hold for larger alphabets, but more care needs to taken in the rounding of parameters in the forthcoming proofs while using the binary encoding.} $n=d/(2|\Sigma|)$; 
\item For every $(a,b)\in I_{\LCSsig}$ we have $\SOL_{\LCSsig}(a,b)$ is the set of all \LCS alignments for $(a,b)$;
\item For every $(a,b)\in I_{\LCSsig}$ and every \LCS alignment $\sigma$ for $(a,b)$ we define $\m_{\LCSsig}(a,b,\sigma)$ to be the length of $\sigma$.
\end{itemize}
\end{definition}

Now we show that \LCS on alphabets of some size are direct product feasible with \LCS on alphabets with an additional character.

\begin{lemma}\label{lem:DPLCS}
Let $\Sigma$ be a finite non-empty set. Let $\Xi$ be a superset of $\Sigma$ of cardinality $|\Sigma|+1$. Let $S,T:\mathbb{N}\times\mathbb{N}\to\mathbb{N}$, where $S(d,k)=2|\Xi|(dk^2+k-1)$ and $T(d,k)=2\cdot S(d,k)$. Then we have that 
$(\LCSsig,\LCSxi)$ are $(S,T)$-direct product feasible.
\end{lemma}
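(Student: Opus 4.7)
The plan is for $\G$ to aggregate the $k$ input pairs by concatenating them and separating consecutive pairs with a long block of the unique new symbol $*\in\Xi\setminus\Sigma$, and for $\Dec$ to restrict an optimal alignment of the aggregated pair to the positions of the requested sub-instance. Concretely, given $(a^{(i)},b^{(i)})\in\Sigma^n\times\Sigma^n$ for $i\in[k]$ (with $n=d/(2|\Sigma|)$), fix $m:=\lfloor(dk^2+k-1-kn)/(k-1)\rfloor$, which satisfies $m>kn$ (indeed $m\ge 2|\Sigma|kn$ when $|\Sigma|\ge 1$), and let $\G$ output
\[
A:=a^{(1)}\,*^m\,a^{(2)}\,*^m\cdots *^m\,a^{(k)},\qquad B:=b^{(1)}\,*^m\,b^{(2)}\,*^m\cdots *^m\,b^{(k)},
\]
both of common length $kn+(k-1)m\le dk^2+k-1$, so $(A,B)\in I_{\LCSxi}(S(d,k))$. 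Given an \LCS alignment $\sigma'$ of $(A,B)$ and an index $i\in[k]$, $\Dec$ returns the alignment of $(a^{(i)},b^{(i)})$ obtained by restricting $\sigma'$ to the positions of $a^{(i)}$ in $A$ and $b^{(i)}$ in $B$ (reindexed to $[n]$). Both algorithms clearly run in time linear in the output size, so the bound $T(d,k)=2S(d,k)$ is immediate.

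The correctness of $\Dec$ reduces to showing that every optimal \LCS alignment of $(A,B)$ is \emph{block-aligned}: every star of $A$ is paired with a star of $B$ with the $i$-th star block of $A$ matching the $i$-th star block of $B$ in full. Granted this, every non-star match from $a^{(i)}$ is confined to $b^{(i)}$, so each restriction returned by $\Dec$ is a valid \LCS alignment of $(a^{(i)},b^{(i)})$ of length at most $\mathsf{LCS}(a^{(i)},b^{(i)})$. Moreover the total number of non-star matches in the optimum equals $\sum_i\mathsf{LCS}(a^{(i)},b^{(i)})$ (the explicit block-aligned alignment attains $(k-1)m+\sum_i\mathsf{LCS}(a^{(i)},b^{(i)})$ total matches and the optimum cannot exceed this value, as shown below); a double-count then forces each restriction to be itself an \emph{optimal} \LCS alignment of its sub-instance.

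To establish block-alignedness, consider any alignment $\sigma'$ with $S$ star--star matches and $M$ non-star matches. If $S=(k-1)m$ then by monotonicity the $t$-th matched star of $A$ is paired with the $t$-th matched star of $B$, block-aligning the stars and trapping the non-stars in $M\le\sum_i\mathsf{LCS}(a^{(i)},b^{(i)})$. Otherwise, I plan to view $\sigma'$ as a monotone path in the $(2k-1)\times(2k-1)$ block grid (odd indices being non-star blocks of length $n$, even indices being star blocks of length $m$); the total star contribution is then the value of a bipartite flow problem on the $(k-1)\times(k-1)$ grid of star-row/star-column pairs with row and column capacities $m$. The maximum flow $(k-1)m$ is attained iff this bipartite graph admits a perfect matching between star rows and star columns, and the monotone nature of the path-induced edges forces identity to be the unique perfect matching; any non-block-aligned path therefore fails to admit one, and by the bound $S\le m\cdot(\text{max matching})\le (k-2)m$ combined with the trivial $M\le kn$, we get $S+M\le (k-2)m+kn<(k-1)m\le(k-1)m+\sum_i\mathsf{LCS}(a^{(i)},b^{(i)})$ since $m>kn$, strictly less than the block-aligned value. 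The main obstacle I anticipate is formalizing this bipartite-flow/monotone-path correspondence when a single $a^{(i)}$ splits its matches across several $b^{(j)}$'s or several $a^{(i)}$'s pile into one $b^{(j)}$; I plan to handle this uniformly by reducing the star contribution to a capacitated flow on the star sub-grid, where monotonicity of the path's row- and column-ranges forces identity as the only saturating matching.
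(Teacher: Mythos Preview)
Your construction of $\G$ and $\Dec$ and the reduction of correctness to a block-structure claim are exactly the paper's approach. The gap is in how you establish block-alignedness.

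You split on $S=(k-1)m$ versus $S<(k-1)m$ and assert that in the second case $S\le(k-2)m$ via the bipartite-matching bound. This implication fails. Take the alignment that matches every star of $A$ identically to the corresponding star of $B$ except for one, and matches each $a^{(i)}$ optimally into $b^{(i)}$: here $S=(k-1)m-1$, yet the block-level ``path'' still visits every diagonal star cell, the induced bipartite graph contains the identity perfect matching, and your bound yields only $S\le m(k-1)$. The trouble is that your matching bound controls the \emph{maximum} flow compatible with the block path, while $S$ is the value of the \emph{particular} flow given by $\sigma'$; the condition $S<(k-1)m$ does not force the path to miss a diagonal cell, so you cannot conclude that the max matching drops to $k-2$.

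The paper uses a different and much simpler split: whether some position of $a^{(i)}$ is matched to a position outside $b^{(i)}$. If a position $p$ in $a^{(i)}$ matches a position $q$ in $b^{(j)}$ with $j>i$ (the case $j<i$ is symmetric), then by monotonicity the $(i-1)m$ stars of $A$ preceding $p$ can be matched only to stars of $B$ preceding $q$, and the stars of $A$ following $p$ only to the $(k-j)m$ stars of $B$ following $q$; hence $S\le (i-1)m+(k-j)m\le(k-2)m$, and together with the trivial $M\le kn<m$ this gives total length strictly below $(k-1)m$, hence below the block-aligned value. Once every non-star position respects its own block, a swap argument shows each restriction must itself be optimal. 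No flow machinery is needed; a single cross-block non-star match already forces an entire star block to be lost, which is exactly the bound you wanted but obtained through the wrong case split.
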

\begin{proof}
Let $\Xi=\Sigma\cup\{\xi\}$, where $\xi\notin\Sigma$. We define the pair of deterministic algorithms $(\G,\Dec)$ below.

Fix $k,d\in \mathbb N$ (and consequently $n\in\mathbb{N}$). Let $\ell:=1+nk$, $m:=nk+\ell(k-1)$, and $d'=2m|\Xi|$. 
Let $\bxi\in\Xi^\ell$ be the concatenation of $\ell$ copies of $\xi$, i.e., 
\begin{align}
\bxi:=\underbrace{\xi\circ\cdots \circ \xi}_{\ell \text{ copies}}.\nonumber
\end{align}

For every $(a_1,b_1),\ldots ,(a_k,b_k)\in I_{\LCSsig}(d)$ given as input to $\G$, it outputs the  instance $(\mathbf{a,b})$ in $I_{\LCSxi}(d')$ where, $$\mathbf{a}:=a_1\circ\bxi\circ a_2\circ \bxi\circ\cdots \circ \bxi\circ a_k\in \Xi^m,\ \ \mathbf{b}:=b_1\circ \bxi\circ b_2\circ \bxi\circ\cdots \circ \bxi\circ b_k\in \Xi^m.$$
It is clear that the running time of $\G$ is $d'$. 

Next, for every $i\in[k]$, $(a_1,b_1),\ldots ,(a_k,b_k)\in I_{\LCSsig}(d)$, and an \LCS alignment $\boldsymbol{\tilde \sigma}:[m]\to[m]\cup\{\perp\}$ for $(\mathbf{a,b})$ given as input to $\Dec$, the algorithm first runs $\G$ to compute $(\mathbf{a,b})$ and then outputs $\sigma:[n]\to[n]\cup\{\perp\}$ which is computed as follows:
$$
\forall j\in[n],\ \sigma(j)=\begin{cases}
\boldsymbol{\tilde\sigma}(j+\texttt{loc}) \text{ if }\texttt{loc}< \boldsymbol{\tilde\sigma}(j+\texttt{loc})\le n+\texttt{loc},\\
\perp\text{ otherwise, }
\end{cases}
$$
where $\texttt{loc}=(n+\ell)(i-1)$. It is easy to see that $\sigma$ is an \LCS alignment for $(a_i,b_i)$. Also, it is easy to see that the running time of $\Dec$ is running time of $\G$ plus $d'$ (needed to compute $\sigma$). 

We next show that if $\boldsymbol{\tilde\sigma}$ is an optimal \LCS alignment for $(\mathbf{a,b})$ then $\sigma$ is an optimal alignment for $(a,b)$. To show this we first  show that if $\boldsymbol{\tilde\sigma}$ is an optimal \LCS alignment  then it has a certain ``block structure''. Consider the following \LCS alignment $\boldsymbol{\tilde\tau}$ for $(\mathbf{a,b})$:
$$
\forall j\in[m],\ \boldsymbol{\tilde\tau}(j)=\begin{cases}
j \text{ if }\mathbf{a}_j=\xi,\\
\perp\text{ otherwise. }
\end{cases}
$$
$\boldsymbol{\tilde\tau}$ is an \LCS alignment because of our construction of $(\mathbf{a,b})$, where we have that for all $j\in[m]$ if $\mathbf{a}_j=\xi$ then $\mathbf{b}_j=\xi$ as well. We have that $\m_{\LCSxi}(\boldsymbol{a,b,\tilde\tau})=\ell(k-1)$. Therefore if $\boldsymbol{\tilde\sigma}$ is an optimal \LCS alignment then we should have $\m_{\LCSxi}(\boldsymbol{a,b,\tilde\sigma})\ge\ell(k-1)$. 

Suppose there exists $j\in[n]$ such that $\boldsymbol{\tilde\sigma}(j+\texttt{loc})\neq \perp $ and is strictly greater than $n+\texttt{loc}$. If there is more than one such $j$ then we pick the smallest one.  Then we have that $\boldsymbol{\tilde\sigma}(j+\texttt{loc})>n+\texttt{loc}+\ell$ as $\mathbf{a}_{j+\texttt{loc}}\in\Sigma$ but $\mathbf{b}_{n+\texttt{loc}+r}=\xi\notin\Sigma$ for all $r\in[\ell]$. Also we have that for every $j'\in [n]$ that is strictly less than $j$ either $\boldsymbol{\tilde\sigma}(j+\texttt{loc})= \perp $ or is at most $n+\texttt{loc}$. In this case, we have that the length of $\boldsymbol{\tilde\sigma}$ is at most $m-\ell=nk+\ell(k-2)=\ell(k-1)-1$, a contradiction as we showed earlier that $\m_{\LCSxi}(\boldsymbol{a,b,\tilde\sigma})\ge\ell(k-1)$. By following a similar argument we can show that there does not exist $j\in[n]$ such that $\boldsymbol{\tilde\sigma}(j+\texttt{loc})\neq \perp $ and is less than or equal to $\texttt{loc}$. This implies that $\boldsymbol{\tilde\sigma}$ restricted to the coordinates in the interval $[\texttt{loc}+1,\texttt{loc}+n]$ provides an optimal \LCS alignment for pair of contiguous substring of $\mathbf{a}$ and $\mathbf{b}$ restricted to the coordinates in the interval $[\texttt{loc}+1,\texttt{loc}+n]$. Thus $\sigma$ is an optimal \LCS alignment for $(a_i,b_i)$.

Finally, we note that the bound on the functions $S$ and $T$ hold as $d'=(nk+(nk+1)(k-1))2|\Xi|=2|\Xi|(nk^2+k-1)\le 2|\Xi|(dk^2+k-1) $.
\end{proof}

\LCS problem has been of special interest in the last few years thanks to the advancements in fine grained complexity \cite{ABW15,AHWW16,Williams15,Williams16,Vir18}. The above direct product feasibility immediately implies the below hardness amplification result.

\begin{corollary}\label{cor:LCS}
Let $\varepsilon>0$. Let $\Sigma$ be a finite non-empty set. Let $\Xi$ be a superset of $\Sigma$ of cardinality $|\Sigma|+1$. 
Let  $\mathcal{D}=\{D_d\}_{d\in\mathbb{N}}$ be a family of distributions such that for every randomized algorithm $\mathcal{A}$ running in time $d^{1+\varepsilon}$ over inputs of size $d$, the following holds for large $d\in\mathbb{N}$.
\begin{itemize}
\item $D_d$ is a distribution over $I_{\LCSsig}(d)$ where an instance of $I_{\LCSsig}(d)$ can be sampled from $D_d$ in $\tilde{O}(d)$ time.
\item $
\underset{(a,b)\sim D_d}{\Pr}\left[\mathcal A \text{ finds optimal \LCS alignment for }(a,b) \text{ with probability at least }2/3\right]\le 1-d^{-o(1)}$.
\end{itemize}

Then there is some $\varepsilon'>0$ such that there is a distribution family $\mathcal{D}'=\{D_d'\}_{d\in\mathbb{N}}$ such that for every randomized algorithm $\mathcal{A}'$ running in time $d^{1+\varepsilon'}$, the following holds for all large enough $d\in\mathbb{N}$.
\begin{itemize}
\item $D_d'$ is a distribution over $I_{\LCSxi}(d^{1+o(1)})$ and an instance in $I_{\LCSxi}$ can be sampled from $D_d'$ in $d^{1+o(1)}$ time.
\item $
\underset{(a',b')\sim D_d'}{\Pr}\left[\mathcal A' \text{ finds optimal \LCS alignment for }(a',b') \text{ with probability at least }2/3\right]\le 0.01.
$
\end{itemize}
\end{corollary}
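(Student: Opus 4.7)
The plan is to invoke Theorem~\ref{thm:main} with $\Pi = \LCSsig$, $\Lambda = \LCSxi$, appealing to the $(S,T)$-direct product feasibility established in Lemma~\ref{lem:DPLCS} (with $S(d,k) = 2|\Xi|(dk^2+k-1)$ and $T(d,k) = 2 S(d,k)$). I would take $p = 2/3$, sampling time $s(d) = \tilde{O}(d)$ as guaranteed by the hypothesis, $t(d) = d^{1+\varepsilon}$, and $\p(d) = d^{-o(1)}$ so that the assumed failure probability $1 - \p(d)$ lines up with the $1 - d^{-o(1)}$ in the hypothesis. For the verifier, checking that $\sigma:[n] \to [n] \cup \{\perp\}$ is an \LCS\ alignment of $(a,b)$ and computing its length reduces to a single linear scan that confirms monotonicity and that $a_i = b_{\sigma(i)}$ at every matched coordinate, so $v(d) = O(d)$.

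With these choices, the quantity $k$ dictated by the theorem is $k = 64 \cdot (\p(d))^{-6} = d^{o(1)}$, and $c$ is an absolute constant. The central budget inequality $k \cdot s(d) + T(d,k) + v(d) \le t(d)/(2c)$ becomes $\tilde{O}(d) \cdot d^{o(1)} + O(d \cdot d^{o(1)}) + O(d) \le d^{1+\varepsilon}/(2c)$, that is, $d^{1+o(1)} \le d^{1+\varepsilon}/(2c)$, which holds for all sufficiently large $d$ since $\varepsilon > 0$ is a fixed constant while the left-hand exponent tends to $1$.

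Theorem~\ref{thm:main} then produces a distribution family $\mathcal{D}' = \{D_d'\}_{d \in \N}$ supported on $I_{\LCSxi}(m)$ where $m = S(d,k) = O(dk^2) = d^{1+o(1)}$, samplable in time $O(s(d) \cdot k + T(d,k)) = d^{1+o(1)}$, such that no randomized algorithm running in time $t(d)/(2c) = \Omega(d^{1+\varepsilon})$ finds an optimal \LCS\ alignment with probability at least $2/3$ on more than a $0.01$-fraction of the inputs drawn from $D_d'$. Choosing $\varepsilon' := \varepsilon/2$ (or any positive constant strictly less than $\varepsilon$), we have $d^{1+\varepsilon'} \le d^{1+\varepsilon}/(2c)$ for all sufficiently large $d$, which yields exactly the stated conclusion.

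The only step that genuinely requires care, and the one I would double-check most, is the chain of asymptotic estimates ensuring that $k \cdot s(d)$, $T(d,k)$, and the output length $m$ all remain of the form $d^{1+o(1)}$. This is where we crucially use $\p(d) = d^{-o(1)}$ rather than $d^{-\Omega(1)}$: if the hypothesized failure probability were merely inverse-polynomial, then $k$ would be polynomial in $d$, the output size would inflate to $d^{1 + \Omega(1)}$, and the simulation cost would swallow the entire $\varepsilon$-slack instead of consuming only an $o(1)$ portion of it.
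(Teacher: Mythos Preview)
Your proposal is correct and follows essentially the same approach as the paper: apply Theorem~\ref{thm:main} with $\Pi=\LCSsig$, $\Lambda=\LCSxi$, the $(S,T)$ from Lemma~\ref{lem:DPLCS}, $p=2/3$, $v(d)=O(d)$, $s(d)=\tilde O(d)$, $t(d)=d^{1+\varepsilon}$, $\p(d)=d^{-o(1)}$, and then verify the budget inequality reduces to $d^{1+o(1)}\le \Omega(d^{1+\varepsilon})$. Your explicit justification of $v(d)=O(d)$ and your closing remark on why $\p(d)=d^{-o(1)}$ (rather than $d^{-\Omega(1)}$) is essential are nice additions not spelled out in the paper.
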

\begin{proof}
We apply Theorem~\ref{thm:main} by setting, $\Pi=I_{\LCSsig}$, $\Lambda=I_{\LCSxi}$, $p=\nicefrac{2}{3}$, $S(d,k)=2|\Xi|(dk^2+k-1)$, $T(d,k)=2\cdot S(d,k)$, $v(d)=w\cdot d$  (for some $w\in\mathbb{N}$), $s(d)=\tilde{O}(d)$, $t(d)=d^{1+\varepsilon}$, and $\p(d)=\nicefrac{1}{d^{o(1)}}$. Then we have that $k:=d^{o(1)}$ and $c:=300\ln 3$. We verify that $k\cdot s(d)+T(d,k)+v(d)\le \frac{t(d)}{2c}$ holds by noting that $k\cdot s(d)+T(d,k)+v(d)=d^{1+o(1)}$ and $\frac{t(d)}{2c}=\Omega(d^{1+\varepsilon})$. Therefore, we have that the sampling time from $D'_d$ is $d^{1+o(1)}$ and that the theorem statement holds for any  randomized algorithm $\mathcal{A}'$ running in time $ \frac{t(d)}{2c}=\Theta(d^{1+\varepsilon})=\Theta\left(\left(d^{1+o(1)}\right)^{1+\varepsilon-\delta}\right)$, for any $\delta>0$. 
\end{proof}

\begin{remark}[Hardness Amplification of $k$-\LCS and other parameterized complexity problems]\label{rem:param}
We remark here that the above proof strategy can be extended to show a hardness amplification result for the $k$-\LCS problem (for fixed $k$) which is of interest in parameterized complexity. In fact, following Remark~\ref{rem:maxprobNP}, we can obtain hardness amplification theorems for fundamental problems  in fixed parameter complexity such $k$-clique and $k$-set cover.
\end{remark}

\subsection{Edit Distance}
Recall that the edit distance between a pair of strings $a,b\in \Sigma^n$ is defined as the minimal number of edit operations needed to convert $x$ into $y$, where edit operations are character insertions/ deletions and substitutions.

\begin{definition}[\ED alignment]
	Let $\Sigma$ be a finite non-empty set and $n\in\mathbb{N}$. For every pair of strings $(a,b)\in \Sigma^n\times \Sigma^n$ and every function $\sigma:[n]\to[n]\cup\{\perp\}$, we say that $\sigma$ is an \LCS alignment for $(a,b)$ if the following holds.

 \textbf{Monotonicity}: For every $i,j\in [n]$, $i<j$ we have that if $\sigma(i)\neq \perp$ and $\sigma(j)\neq \perp$ then $\sigma(i)<\sigma(j)$.

	The cost of an \ED alignment $\sigma$ is defined as twice the cardinality of the preimage of $\perp$ plus the number of mismatches, i.e., $$2\left|\{i\in[n]\mid \sigma(i)=\perp \}\right|+\left|\{i\in[n]\mid y_\sigma(i)\neq x_i \}\right| .$$
\end{definition}
We interpret an alignment $\sigma$ is as follows: For every $i\in [n]$ if $\sigma(i)\neq \perp$ then $\sigma$ matches each symbol $x_i$ into $y_{\sigma(i)}$ (while paying an edit operation in case of substitution). In case $\sigma(i)=\perp$, then it means that $\sigma$ deletes the $i$-th character of $x$. In case we have $\sigma (i), \sigma (i+1) \neq \perp$ but $\sigma (i+1)>\sigma(i)+1$ then it means $\sigma$ inserts the characters $y_{\sigma(i)+1},\dots, y_{\sigma(i+1)-1}$. The bound on the edit cost of $\sigma$ simply follows by the fact that the number of characters insertions and deletions is equal.
\begin{definition}[\ED problem]
	Let $\Sigma$ be a finite non-empty set. The Edit Distance ($\EDsig$) is an optimization problem characterized by the following quadruple of objects $(I_{\EDsig},\SOL_{\EDsig},\m_{\EDsig},\min)$, where:
	\begin{itemize}
		\item For every $d\in\mathbb{N}$, $I_{\EDsig}(d)=\Sigma^{n}\times \Sigma^{n}$ where\footnote{Here again we use the unary encoding to encode a symbol in $\Sigma$ for the ease of presentation.} $n=d/(2|\Sigma|)$; 
		\item For every $(a,b)\in I_{\EDsig}$ we have $\SOL_{\EDsig}(a,b)$ is the set of all \ED alignments for $(a,b)$;
		\item For every $(a,b)\in I_{\EDsig}$ and every \ED alignment $\sigma$ for $(a,b)$ we define $\m_{\EDsig}(a,b,\sigma)$ to be the cost of $\sigma$.
	\end{itemize}
\end{definition}

\begin{lemma}\label{lem:DPEdit}
	Let $\Sigma$ be a finite non-empty set. Let $\Xi$ be a superset of $\Sigma$ of cardinality $|\Sigma|+1$. Let $S,T:\mathbb{N}\times\mathbb{N}\to\mathbb{N}$, where $S(d,k)=3|\Xi|dk^2$ and $T(d,k)=2\cdot S(d,k)$. Then we have that 
	$(\EDsig,\EDxi)$ are $(S,T)$-direct product feasible.
\end{lemma}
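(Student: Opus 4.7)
The construction parallels Lemma~\ref{lem:DPLCS} for \LCS, with the separator length and the analysis adapted for the minimization nature of \ED. Let $\xi$ be the unique element of $\Xi \setminus \Sigma$; set $n := d/(2|\Sigma|)$, $\ell := nk$, and $\bxi := \xi^\ell$. The algorithm $\G$ outputs
\[
\mathbf{a} := a_1 \circ \bxi \circ a_2 \circ \bxi \circ \cdots \circ \bxi \circ a_k,\qquad
\mathbf{b} := b_1 \circ \bxi \circ b_2 \circ \bxi \circ \cdots \circ \bxi \circ b_k,
\]
both of length $m := nk^2$; the output size $d' := 2m|\Xi|$ satisfies $d' \le 3|\Xi|dk^2$ since $|\Xi|/|\Sigma| \le 2$. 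The decoder $\Dec(i,\cdot,\cdot,\boldsymbol{\tilde\sigma})$ returns the natural restriction $\sigma(j) := \boldsymbol{\tilde\sigma}(j + \texttt{loc}) - \texttt{loc}$ when $\boldsymbol{\tilde\sigma}(j + \texttt{loc}) \in [\texttt{loc}+1, \texttt{loc}+n]$ and $\sigma(j) := \perp$ otherwise, where $\texttt{loc} := (i-1)(n+\ell)$. Both algorithms run in time $O(d')$.

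The correctness argument then proceeds in three steps. First, the trivial block-respecting alignment---optimal on each $(a_i, b_i)$ and the identity on each $(\bxi, \bxi)$ pair---has cost $\sum_i \ed(a_i, b_i) \le 2nk$, so every optimal $\boldsymbol{\tilde\sigma}$ satisfies $\m_{\EDxi}(\mathbf{a}, \mathbf{b}, \boldsymbol{\tilde\sigma}) \le 2nk$.

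Second, I claim every optimal $\boldsymbol{\tilde\sigma}$ is \emph{block-respecting}: it maps each content block $A_i$ of $\mathbf{a}$ into the corresponding content block $B_i$ of $\mathbf{b}$ (or to $\perp$), and likewise for separator blocks. Suppose for contradiction it is not; then, by monotonicity, at some intermediate position the alignment path in the edit-distance DP for $(\mathbf{a}, \mathbf{b})$ deviates from the diagonal by at least $\ell + 1$, since skipping even a single content block of $\mathbf{b}$ shifts the path by $n + \ell \ge \ell + 1$. Because $|\mathbf{a}| = |\mathbf{b}|$, any such deviation of magnitude $\Delta$ must be undone later, requiring at least $\Delta$ extra deletions and $\Delta$ extra insertions in aggregate, which contributes at least $2(\ell+1) = 2nk+2 > 2nk$ to the cost, contradicting the upper bound. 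Third, a block-respecting $\boldsymbol{\tilde\sigma}$ decomposes into independent alignments on each super-block, and within each super-block the total cost is $\ed(a_i, b_i, \boldsymbol{\tilde\sigma}|_{A_i \to B_i}) + \cost(\boldsymbol{\tilde\sigma}|_{S_i \to T_i}) \ge \ed(a_i, b_i)$. Optimality therefore forces each content restriction to achieve $\ed(a_i, b_i)$ and each separator restriction to be the identity, so the $\sigma$ returned by $\Dec$ is an optimal \ED alignment of $(a_i, b_i)$.

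The main obstacle is making the second step rigorous against \emph{all} non-block-respecting alignments, not merely clean shifts: these include alignments that mix content with separators in subtler ways (for instance, spilling a few positions of $A_i$ into $T_i$ while re-aligning $S_i$ onward). A cleaner route, mirroring the \LCS proof, is to track the induced $\xi \leftrightarrow \xi$ matching $M_\xi$: each unmatched $\xi$ on either side contributes at least $1$ to the cost, so any $\boldsymbol{\tilde\sigma}$ of cost at most $2nk$ satisfies $M_\xi \ge \ell(k-1) - nk$; since $\ell > nk$, monotonicity forces this matching to pair the $i$-th $\bxi$-block of $\mathbf{a}$ with the $i$-th $\bxi$-block of $\mathbf{b}$ for every $i$ (otherwise some entire $\bxi$-block would be skipped on one side, leaving $\ell > nk$ unmatched $\xi$s and contradicting the bound), and the block-respecting structure on content then follows by monotonicity.
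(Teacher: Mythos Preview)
Your construction and decoder match the paper's up to cosmetics (the paper takes $\ell = nk+1$ and appends a trailing $\bxi$), but the correctness argument has a genuine gap. You aim to prove that \emph{every} optimal alignment of $(\mathbf{a},\mathbf{b})$ is block-respecting. Your main deviation argument, as you yourself note, only handles clean block-shifts. The ``cleaner route'' has two problems. First, you set $\ell := nk$ in the construction and then invoke ``since $\ell > nk$'', an internal inconsistency. Second, even granting $\ell > nk$, the $M_\xi$ bound only forces each separator block $S_i$ to contain \emph{some} $\xi$--$\xi$ match; it does not force that match to land in $T_i$ (a $\xi$ near the end of $S_i$ can match the start of $T_{i+1}$ with displacement as small as $n+1$), nor does it prevent content positions in $A_i$ from spilling into $T_{i-1}$ or $T_i$. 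Once such spills are possible, your Step~3 decomposition no longer applies to the alignment handed to $\Dec$, and you have not argued that $\Dec$'s output---which replaces spilled positions by $\perp$, turning cost-$1$ mismatches into cost-$2$ deletions---remains optimal for $(a_i,b_i)$.

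The paper avoids this via a different mechanism. Rather than showing every optimal alignment is block-respecting, it takes an arbitrary optimal $\boldsymbol{\tilde\sigma}$ and \emph{gradually modifies} it block by block into a block-consistent $\boldsymbol{\tilde\sigma}'$ without increasing cost. The crucial (implicit) point is that on each content block $A_i$ this modification sets to $\perp$ exactly those positions whose image falls outside $B_i$---which is precisely what $\Dec$ does. Hence $\Dec(\boldsymbol{\tilde\sigma}, i)$ coincides with $\boldsymbol{\tilde\sigma}'$ restricted to block $i$, and the latter is optimal for $(a_i,b_i)$ because $\boldsymbol{\tilde\sigma}'$ is block-consistent with each block contributing exactly $\ed(a_j,b_j)$. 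This sidesteps the need to rule out non-block-respecting optimal alignments entirely.
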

\begin{proof}
	Let $\Xi=\Sigma\cup\{\xi\}$, where $\xi\notin\Sigma$. We define the pair of deterministic algorithms $(\G,\Dec)$ below.
	
	Fix $k,d\in \mathbb N$ (and consequently $n\in\mathbb{N}$). Let $\ell:=1+nk$, $m:=(n+\ell) k$, and $d'=2m|\Xi|$. 
	Let $\bxi\in\Xi^\ell$ be the concatenation of $\ell$ copies of $\xi$, i.e., 
	\begin{align}
	\bxi:=\underbrace{\xi\circ\cdots \circ \xi}_{\ell \text{ copies}}.\nonumber
	\end{align}
	
	For every $(a_1,b_1),\ldots ,(a_k,b_k)\in I_{\EDsig}(d)$ given as input to $\G$, it outputs the  instance $(\mathbf{a,b})$ in $I_{\EDxi}(d')$ where, $$\mathbf{a}:=a_1\circ\bxi\circ a_2\circ \bxi\circ\cdots \circ \bxi\circ a_k\circ \bxi \in \Xi^m,\ \ \mathbf{b}:=b_1\circ \bxi\circ b_2\circ \bxi\circ\cdots \circ \bxi\circ b_k \circ \bxi\in \Xi^m.$$
	It is clear that the running time of $\G$ is $d'$. 
	
	Next, for every $i\in[k]$, $(a_1,b_1),\ldots ,(a_k,b_k)\in I_{\EDsig}(d)$, and an \ED alignment $\boldsymbol{\tilde \sigma}:[m]\to[m]\cup\{\perp\}$ for $(\mathbf{a,b})$ given as input to $\Dec$, the algorithm first runs $\G$ to compute $(\mathbf{a,b})$ and then outputs $\sigma:[n]\to[n]\cup\{\perp\}$ which is computed as follows:
	$$
	\forall j\in[n],\ \sigma(j)=\begin{cases}
	\boldsymbol{\tilde\sigma}(j+\texttt{loc}) \text{ if }\texttt{loc}< \boldsymbol{\tilde\sigma}(j+\texttt{loc})\le n+\texttt{loc},\\
	\perp\text{ otherwise, }
	\end{cases}
	$$
	where $\texttt{loc}=(n+\ell)(i-1)$. It is easy to see that $\sigma$ is an \ED alignment for $(a_i,b_i)$. Also, it is easy to see that the running time of $\Dec$ is running time of $\G$ plus $d'$ (needed to compute $\sigma$). 
	
	We next show that if $\boldsymbol{\tilde\sigma}$ is an optimal \ED alignment for $(\mathbf{a,b})$ then $\sigma$ is an optimal alignment for $(a,b)$. 
	
	To show this we first prove that there exists an optimal alignment $\boldsymbol{\tilde\sigma}'$ which is an optimal \ED alignment for $(\mathbf{a,b})$ and it is `block-consistent'. Similarly to the \LCS case, every optimal alignment must satisfy: $\boldsymbol{\tilde\sigma}(i)\in \{i-\ell/2,\dots, i+\ell /2\}$ (as otherwise its cost exceeds the identity alignment cost). To simplify the notations we define for each $j\in [k]$ the starting and the end indices of each block $a_i$, specifically: $s(a_j)=(\ell -1 + n)(j-1)+1, f(a_j)=(\ell -1 + n)(j-1) +n$.
	
	We denote by $\Delta (\boldsymbol{\tilde\sigma})_j$ the number of edit operations made by $\boldsymbol{\tilde\sigma}$ on the $j^{\text{th}}$ block of $(\mathbf{a,b})$, i.e., it equals
	twice the cardinality of the preimage of $\perp$ plus the number of mismatches for indices in $j\in \{s(a_j), \dots s(a_j)+n+\ell \}$
	\begin{claim}\label{obs:blockConsistent}
		There exists an optimal alignment $\boldsymbol{\tilde\sigma}'$ which is an optimal \ED alignment for $(\mathbf{a,b})$ which is `block-consistent', that is:
		For all $j\in [k]$, if $j>1$ then,  $$\boldsymbol{\tilde\sigma}'(s(a_j)-1)= s(a_j)-1 \text{ and }\ \boldsymbol{\tilde\sigma}'(f(a_j)+1)= f(a_j)+1.$$
		 
		 Moreover, $$\Delta (\boldsymbol{\tilde\sigma}')_j=\Delta_e(a_j,b_j).$$
	\end{claim}
	\begin{proof}[Proof of Claim~\ref{obs:blockConsistent}]
	We take any optimal alignment $\boldsymbol{\tilde\sigma}$ and gradually change it to be 'block-consistent' while preserving its cost. This is  done as follows:
	We define $\boldsymbol{\tilde\sigma}^0=\boldsymbol{\tilde\sigma}$. For each $j\in [k]$ we define $\boldsymbol{\tilde\sigma}^j=\boldsymbol{\tilde\sigma}^{j-1}$ and then convert it to be consistent on the $j^{\text{th}}$ block by: 
	\begin{itemize}
		\item Deleting all the characters in $a_j$ that were mapped into $\bxi$ and then matching all the characters of $\bxi$
		(formally, for each $i\in \set{s(a_j), \dots, f(a_j)}$ if $\boldsymbol{\tilde\sigma}^{j-1}(i)>f(a_j)$ set  $\boldsymbol{\tilde\sigma}^{j}(i)=\perp$); 
		\item Matching all the characters in $j^{\text{th}}$ block of $\bxi$ (formally, for each $i\in \set{f(a_j)+1, \dots, s(a_{j+1})-1}$ set $\boldsymbol{\tilde\sigma}^{j}(i)=i$);
		\item Each character in the prefix of $a_{j+1}$ that was mapped into $\bxi$ in $\boldsymbol{\tilde\sigma}^{j-1}$ is deleted (formally for for each $i\in \set{s(a_{j+1}), \dots, f(a_{j+1})}$ if $\boldsymbol{\tilde\sigma}^{j-1}(i)<s(a_{j+1})$ set  $\boldsymbol{\tilde\sigma}^{j}(i)=\perp$).
	\end{itemize}
	Finally set $\boldsymbol{\tilde\sigma}'=\boldsymbol{\tilde\sigma}^k$.
	
	We next claim that $\m (\boldsymbol{\tilde\sigma}^{j})\le \m (\boldsymbol{\tilde\sigma}^{j-1})$ (and by the optimality of $\boldsymbol{\tilde\sigma}^0$ we get an optimality of $\boldsymbol{\tilde\sigma}^j$): Let $\disp_{\boldsymbol{\tilde\sigma}^{j}}(i)=i-\boldsymbol{\tilde\sigma}^{j}(i)$. The proof proceeds by a case analysis:
	
	\paragraph{Case 1 -- $\disp_{\boldsymbol{\tilde\sigma}^{j-1}}(f(a_j)+1)> 0$:}
	Let us compare  $\m (\boldsymbol{\tilde\sigma}^{j-1})$ and $\m (\boldsymbol{\tilde\sigma}^{j})$: In $\boldsymbol{\tilde\sigma}^{j} $ we delete $\disp_{\boldsymbol{\tilde\sigma}^{j-1}}(f(a_j)+1)$ symbols that were not deleted in $\boldsymbol{\tilde\sigma}^{j-1}$. However, on each such a deletion in $\boldsymbol{\tilde\sigma}^{j-1}$ we payed for a mismatch. Next, in both $\boldsymbol{\tilde\sigma}^{j-1}, \boldsymbol{\tilde\sigma}^{j}$ we pay no edit operations as long as we match between $\bxi$ characters and then: 
	
	If $\disp_{\boldsymbol{\tilde\sigma}^{j-1}}(s(a_{j+1}))> 0$: In $\boldsymbol{\tilde\sigma}^{j}$ we pay $\disp_{\boldsymbol{\tilde\sigma}^{j-1}}(s(a_{j+1}))$ additions the prefix of $b_{j+1}$. On the other hand, in 
	$\boldsymbol{\tilde\sigma}^{j-1}$ we pay this much of substitutions caused by matching $\bxi$ characters into $b_{j+1}$.
	
	If $\disp(\boldsymbol{\tilde\sigma}^{j-1}(s(a_{j+1}))< 0$: In this case in $\boldsymbol{\tilde\sigma}^{j-1}$ we pay at least $\disp(\boldsymbol{\tilde\sigma}^{j-1}(s(a_{j+1}))$ mismatches and characters insertions till we reach an index $i$ $\disp(\boldsymbol{\tilde\sigma}^{j-1}(i)\ge s(a_{j+1})$. On the other hand in $\boldsymbol{\tilde\sigma}^{j}$ we pay this much of characters deletions to delete the prefix of $a_{j+1}$. Overall, the total costs of the alignments are equal.
	
	\paragraph{Case 2 -- $\disp(\boldsymbol{\tilde\sigma}^{j-1}(f(a_j)+1))< 0$:} We handle similarly using the same arguments.
%
%

To prove the moreover part, observe that $\boldsymbol{\tilde\sigma}'$ matches all the $\bxi$ blocks to themselves. Therefore, if there exists a block $j\in [k]$ which $\boldsymbol{\tilde\sigma}'$ does not align optimally $a_j$ into $b_j$ then there exists a better alignment than $\boldsymbol{\tilde\sigma}'$. Contradicting to the optimality of $\boldsymbol{\tilde\sigma}'$.
	\end{proof}
	
	To conclude the correctness of our algorithm $\Dec$, observe that it mimics the behavior $\boldsymbol{\tilde\sigma}'$ on the $j^{\text{th}}$ block of 
	$(\mathbf{a,b})$. Since $\Delta (\boldsymbol{\tilde\sigma}')_j=\Delta_e(a_j,b_j)$ the proof follows.
	Finally, we note that the bound on the functions $S$ and $T$ hold as $d'=(nk+nk^2)2|\Xi|\le 3|\Xi|dk^2 $.
\end{proof}

\ED problem has also been of special interest in the last few years thanks to the advancements in fine grained complexity \cite{BI18,AHWW16,Williams15,Williams16,Vir18}. The above direct product feasibility immediately implies the below hardness amplification result.

\begin{corollary}\label{cor:Edit}
Let $\varepsilon>0$. Let $\Sigma$ be a finite non-empty set. Let $\Xi$ be a superset of $\Sigma$ of cardinality $|\Sigma|+1$. 
Let  $\mathcal{D}=\{D_d\}_{d\in\mathbb{N}}$ be a family of distributions such that for every randomized algorithm $\mathcal{A}$ running in time $d^{1+\varepsilon}$ over inputs of size $d$, the following holds for large $d\in\mathbb{N}$.
\begin{itemize}
\item $D_d$ is a distribution over $I_{\EDsig}(d)$ where an instance of $I_{\EDsig}(d)$ can be sampled from $D_d$ in $\tilde{O}(d)$ time.
\item $
\underset{(a,b)\sim D_d}{\Pr}\left[\mathcal A \text{ finds optimal \ED alignment for }(a,b) \text{ with probability at least }2/3\right]\le 1-d^{-o(1)}$.
\end{itemize}

Then there is some $\varepsilon'>0$ such that there is a distribution family $\mathcal{D}'=\{D_d'\}_{d\in\mathbb{N}}$ such that for every randomized algorithm $\mathcal{A}'$ running in time $d^{1+\varepsilon'}$, the following holds for all large enough $d\in\mathbb{N}$.
\begin{itemize}
\item $D_d'$ is a distribution over $I_{\EDxi}(d^{1+o(1)})$ and an instance in $I_{\EDxi}$ can be sampled from $D_d'$ in $d^{1+o(1)}$ time.
\item $
\underset{(a',b')\sim D_d'}{\Pr}\left[\mathcal A' \text{ finds optimal \ED alignment for }(a',b') \text{ with probability at least }2/3\right]\le 0.01.
$
\end{itemize}
\end{corollary}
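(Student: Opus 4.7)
The plan is to derive Corollary~\ref{cor:Edit} as an essentially mechanical application of the main hardness-amplification theorem (Theorem~\ref{thm:main}) to the direct product feasibility for edit distance established in Lemma~\ref{lem:DPEdit}, mirroring exactly the structure used to deduce Corollary~\ref{cor:LCS} from Lemma~\ref{lem:DPLCS}. Since $(\EDsig,\EDxi)$ is $(S,T)$-direct product feasible with $S(d,k)=3|\Xi|dk^2$ and $T(d,k)=2S(d,k)$, and since given an instance $(a,b)\in I_{\EDsig}(d)$ together with an $\ED$ alignment $\sigma$ one can deterministically compute $\m_{\EDsig}(a,b,\sigma)$ in $O(d)$ time by a single pass (count mismatches and the preimage of $\perp$), we may set $v(d)=O(d)$.

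First I would instantiate Theorem~\ref{thm:main} with $\Pi=\EDsig$, $\Lambda=\EDxi$, $p=2/3$, the $S,T,v$ above, sampling cost $s(d)=\tilde{O}(d)$, running-time bound $t(d)=d^{1+\varepsilon}$, and failure-fraction $\p(d)=d^{-o(1)}$. Then the theorem prescribes $k=64\cdot (\p(d))^{-6}=d^{o(1)}$ and a constant $c=300\ln 3$. Next I would verify the single hypothesis that needs checking, namely $k\cdot s(d)+T(d,k)+v(d)\le t(d)/(2c)$: the left side is $\tilde O(d)\cdot d^{o(1)} + O(|\Xi|\,d\cdot d^{o(1)}) + O(d) = d^{1+o(1)}$, which is dominated by $t(d)/(2c)=\Omega(d^{1+\varepsilon})$ for all sufficiently large $d$.

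Having checked the hypothesis, the conclusion of Theorem~\ref{thm:main} delivers a distribution family $\mathcal{D}'=\{D'_d\}_{d\in\mathbb{N}}$ over instances of $\EDxi$ of size $m=S(d,k)=3|\Xi|dk^2=d^{1+o(1)}$, with sampling time $O(s(d)k+T(d,k))=d^{1+o(1)}=\tilde O(m)$, such that no randomized algorithm running in time $t(d)/(2c)=\Theta(d^{1+\varepsilon})$ can find an optimal $\ED$ alignment with success probability $2/3$ on more than $0.01$ of the mass of $D'_d$. Finally, to translate the runtime lower bound from the scale of $d$ to the scale of $m=d^{1+o(1)}$, I would pick any $\varepsilon'\in(0,\varepsilon)$ and observe that $d^{1+\varepsilon}=m^{(1+\varepsilon)/(1+o(1))}\ge m^{1+\varepsilon'}$ for all sufficiently large $d$, which yields exactly the conclusion of the corollary.

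I do not anticipate any real obstacles: the proof is an arithmetic unfolding of Theorem~\ref{thm:main} with the parameters supplied by Lemma~\ref{lem:DPEdit}. The only mildly delicate point is the rounding when passing between the scales $d$ and $m=d^{1+o(1)}$, but this is handled as above by weakening $\varepsilon$ to some $\varepsilon'>0$. In particular, because the blow-up factor $k^2$ contributes only a $d^{o(1)}$ overhead to the instance size, the amplification preserves the near-quadratic hardness regime that is the natural setting for fine-grained lower bounds on $\ED$.
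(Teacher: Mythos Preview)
Your proposal is correct and follows essentially the same approach as the paper: instantiate Theorem~\ref{thm:main} with $\Pi=\EDsig$, $\Lambda=\EDxi$, $p=2/3$, the parameters $S,T$ from Lemma~\ref{lem:DPEdit}, $v(d)=O(d)$, $s(d)=\tilde O(d)$, $t(d)=d^{1+\varepsilon}$, $\p(d)=d^{-o(1)}$, verify the single inequality $k\cdot s(d)+T(d,k)+v(d)=d^{1+o(1)}\le t(d)/(2c)$, and then rescale from $d$ to $m=d^{1+o(1)}$ by weakening $\varepsilon$ to some $\varepsilon'\in(0,\varepsilon)$. The paper's proof is word-for-word the same computation.
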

\begin{proof}
We apply Theorem~\ref{thm:main} by setting, $\Pi=I_{\EDsig}$, $\Lambda=I_{\EDxi}$, $p=\nicefrac{2}{3}$, $S(d,k)=3|\Xi|dk^2$, $T(d,k)=2\cdot S(d,k)$, $v(d)=w\cdot d$  (for some $w\in\mathbb{N}$), $s(d)=\tilde{O}(d)$, $t(d)=d^{1+\varepsilon}$, and $\p(d)=\nicefrac{1}{d^{o(1)}}$. Then we have that $k:=d^{o(1)}$ and $c:=300\ln 3$. We verify that $k\cdot s(d)+T(d,k)+v(d)\le \frac{t(d)}{2c}$ holds by noting that $k\cdot s(d)+T(d,k)+v(d)=d^{1+o(1)}$ and $\frac{t(d)}{2c}=\Omega(d^{1+\varepsilon})$. Therefore, we have that the sampling time from $D'_d$ is $d^{1+o(1)}$ and that the theorem statement holds for any  randomized algorithm $\mathcal{A}'$ running in time $ \frac{t(d)}{2c}=\Theta(d^{1+\varepsilon})=\Theta\left(\left(d^{1+o(1)}\right)^{1+\varepsilon-\delta}\right)$, for any $\delta>0$. 
\end{proof}

Strangely, our technique fails to immediately extend to another similarity search problem, namely computing the Fr\'echet distance, which is studied together with edit distance and LCS in literature. 

\begin{remark}[Fr\'echet Distance]\label{rem:frechet}
Note that another well studied subquadratic hard \emph{\cite{B14,AHWW16}} problem of computing Fr\'echet Distance does not seem to admit direct product feasibility (at least the natural way to aggregate fails).
\end{remark}
\subsection{Matrix Multiplication}

Now we consider the matrix multiplication problem. It may be written as an optimization problem in the following rather mundane way: 

\begin{definition}[Matrix Multiplication problem]
Let $q$ be some large prime (universal constant). The Matrix Multiplication ($\Mult$) is an optimization problem charaterized by the following quadruple of objects $(I_{\Mult},\SOL_{\Mult},\m_{\Mult},\min)$, where:
\begin{itemize}
\item For every $d\in\mathbb{N}$, $I_{\Mult}(d)$ is the set of all pairs of $n\times n$ matrices with entries in $\mathbb{F}_q$;
\item For every $(A,B)\in I_{\Mult}$ we have $\SOL_{\Mult}(A,B)$ is the set of all $n\times n$ matrices with entries in $\mathbb{F}_q$;
\item For every $(A,B)\in I_{\Mult}$ and every $n\times n$ matrix $C$ we define $\m_{\Mult}(A,B,C)$ to be the number of entries in $C$ which differ from $AB$.
\end{itemize}
\end{definition}

Given the above formalism, we show that it is self direct product feasible. 

\begin{lemma}\label{lem:MatMult}
Let $S,T:\mathbb{N}\times\mathbb{N}\to\mathbb{N}$, where $S(d,k)=(dk)^2$ and $T(d,k)=O(d^2k^2)$. Then we have that 
$\Mult$ is $(S,T)$-self direct product feasible.
\end{lemma}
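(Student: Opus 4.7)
The plan is to use the standard block-diagonal construction. Given $k$ input pairs $(A_1,B_1),\ldots,(A_k,B_k)$ where each $A_i,B_i$ is an $n\times n$ matrix over $\mathbb{F}_q$, I would have $\G$ form the two $nk\times nk$ block-diagonal matrices $A=\mathrm{diag}(A_1,\ldots,A_k)$ and $B=\mathrm{diag}(B_1,\ldots,B_k)$ (zeros off the diagonal blocks) and output the pair $(A,B)$. The algorithm $\Dec$, on input $i\in[k]$, the instances, and a candidate solution $C$ for $(A,B)$, simply reads off the $i$-th diagonal $n\times n$ block of $C$ and returns it as the solution to $(A_i,B_i)$.

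\textbf{Correctness of $\Dec$.} The key point is that $\m_{\Mult}$ counts the number of entries where the candidate $C$ differs from $AB$, and $\goal_{\Mult}=\min$. Since $C=AB$ is itself a feasible solution with measure $0$, every optimal solution $C$ for $(A,B)$ must satisfy $\m_{\Mult}(A,B,C)=0$, i.e.\ $C=AB$ exactly. Because $A$ and $B$ are block-diagonal, a direct computation shows that $AB=\mathrm{diag}(A_1B_1,\ldots,A_kB_k)$. Hence the $i$-th diagonal block of any optimal $C$ equals $A_iB_i$, which in turn is the unique optimal feasible solution for the instance $(A_i,B_i)$. So $\Dec$ outputs an optimal feasible solution whenever its input $C$ is optimal, as required by Definition~\ref{def:DPfeasible}.

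\textbf{Size and time bounds.} The output instance is a pair of $nk\times nk$ matrices, so its bit-length is $O((nk)^2\log q)$; since $d=\Theta(n^2\log q)$, this is $O(dk^2)$, well within the claimed $S(d,k)=(dk)^2$. The algorithm $\G$ just writes $O((nk)^2)$ entries (copying the blocks $A_i,B_i$ and padding with zeros), and $\Dec$ copies an $n\times n$ block; both run in time $O((nk)^2)=O(dk^2)\le O(d^2k^2)$, matching $T(d,k)=O(d^2k^2)$.

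\textbf{Anticipated obstacle.} There is no real obstacle inside this lemma: once one adopts the block-diagonal aggregation, the fact that optimal solutions have cost exactly $0$ makes the decoding step immediate. The genuine subtlety the paper flags---that no sub-cubic deterministic verifier for matrix multiplication is known---does not affect the direct product feasibility argument here, it only shows up later when Theorem~\ref{thm:main} has to be adapted to $\Mult$ because the hypothesis on the verifier $\mathcal{V}$ fails.
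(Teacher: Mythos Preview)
Your proposal is correct and essentially identical to the paper's own proof: both use the block-diagonal aggregation $A'=\mathrm{diag}(A_1,\ldots,A_k)$, $B'=\mathrm{diag}(B_1,\ldots,B_k)$, decode by extracting the $i$-th diagonal block, and argue correctness via the fact that an optimal $C'$ must equal $A'B'=\mathrm{diag}(A_1B_1,\ldots,A_kB_k)$. Your observation that the verification issue is orthogonal to this lemma and only affects the later application of Theorem~\ref{thm:main} is exactly right and matches the paper's treatment.
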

\begin{proof}

We define the pair of deterministic algorithms $(\G,\Dec)$ below.	

For every $(A_1,B_1), \dots , (A_k,B_k) \in I_{\Mult}(d)$  given as input to $\G$, it outputs the instance $(A',B')$ in $I_{\Mult}(d')$ defined as:
$$
A' := 
\left[
\begin{array}{c c c}
A_1 & 0 & 0\\
0 & \ddots & 0 \\
0 & 0 & A_k
\end{array}
\right]
$$
and:
$$
B' := 
\left[
\begin{array}{c c c}
B_1 & 0 & 0\\
0 & \ddots & 0 \\
0 & 0 & B_k
\end{array}
\right]
$$

It is clear that the running time of $\G$ is $O(d')$. 

Next, for every $i\in[k]$, $(A_1,B_1),\ldots ,(A_k,B_k)\in I_{\Mult}(d)$, and a matrix $C'\in \SOL_{\Mult}(A',B')$ given as input to $\Dec$, the algorithm first runs $\G$ to compute $(A',B')$ and then outputs $C\in \Mult(A_i,B_i)$ which is the $i$-th block of $C'$.

We next show that if $C'$ is an optimal $\Mult$ matrix for $(A',B')$ then $C$ is an optimal alignment for $(A_i,B_i)$.
An optimal solution for  $(A',B')$ is the matrix $C'=A'\cdot B'$. By  matrix multiplication definition its $i$-th block of $C'$ equals $A_i\cdot B_i$ which is indeed the optimal solution for $(A_i,B_i)$.

Clearly the running time of $\Dec$ is $O(d^2k^2)$, the proof follows. 
\end{proof}

It is a well-known open problem  \cite{K18, WW18}  to find an efficient way to check if the product of two matrices is equal to the third matrix. However, it can be checked in linear time (in the input size) if we allow randomness \cite{F77}. But Theorem~\ref{thm:main} needs the verification to be deterministic so we cannot invoke it directly. 
Let us briefly explain how to modify the proof so it generalizes to the case of matrix multiplication.

First, observe that using~\cite{F77} result, given $n\times n$ matrices $A,B$, and $C$ in $\mathbb{F}_q$ one can check whether  $C=A\cdot B$ with failure probability less than $2^{-m}$ in time $m n^2$. 

Recall the proof of Theorem~\ref{thm:main} and  let us restate it in matrix multiplication terminology: 
Given input matrices $A,B$, the algorithm repeatedly define matrices $A',B'$ such that $A,B$ are nested as a random sub-block of $A',B'$ as defined in the proof of Lemma~\ref{lem:MatMult}. This above process is repeated for $c$-times. The matrices $A',B'$ are fed into algorithm $\mathcal{A}'$, and the output of $\mathcal{A}'$ is the product $A'\cdot B'$.
%

We modify the algorithm so that it iterates over its main loop for $2c $ iterations. In such a way, using the same argument described in the proof of Theorem~\ref{thm:main}, we are guaranteed that, on at least $1-\p(d)$ fraction of instances sampled from original distribution, the probability that none of the matrices output by $\mathcal{A'}$ equals $A'\cdot B'$ is at most:  $$\left(1-\frac{\ln\left(\nicefrac{1}{1-p}\right)}{c}\right)^{2c}\le 1- (e^{\ln{(1-p)}})^2\le 1-2p. $$

Now suppose that  on at least one of the iterations $\mathcal{A'}$ outputs a matrix $C'$ that equals $A' \cdot B'$ (which now happen with probability at least $2p$). We run~\cite{F77} algorithm on each matrix output by $\mathcal{A'}$, so that we are guaranteed that with probability at least $1-p$ for each such a matrix, the algorithm outputs 'equal' iff $C'=A'\cdot B'$. In total, the success probability of our modified algorithm is at least $2p-p=p$.

Summarizing, on at least $1-\p(d)$ fraction of inputs sampled from the original distribution, our modified algorithm outputs $A\cdot B$ with probability at least $p$, as claimed. Thus, with a setting of parameters as in the proofs of Corollaries~\ref{cor:LCS}~and~\ref{cor:Edit}, we have the following. 

\begin{corollary}\label{cor:matmult}
Let $\varepsilon>0$. 
Let  $\mathcal{D}=\{D_d\}_{d\in\mathbb{N}}$ be a family of distributions such that for every randomized algorithm $\mathcal{A}$ running in time $d^{1+\varepsilon}$ over inputs of size $d$, the following holds for large $d\in\mathbb{N}$.
\begin{itemize}
\item $D_d$ is a distribution over $I_{\Mult}(d)$ where an instance of $I_{\Mult}(d)$ can be sampled from $D_d$ in $\tilde{O}(d)$ time.
\item $
\underset{(A,B)\sim D_d}{\Pr}\left[\mathcal A \text{ finds correct product of }(A,B) \text{ with probability at least }2/3\right]\le 1-d^{-o(1)}$.
\end{itemize}

Then there is some $\varepsilon'>0$ such that there is a distribution family $\mathcal{D}'=\{D_d'\}_{d\in\mathbb{N}}$ such that for every randomized algorithm $\mathcal{A}'$ running in time $d^{1+\varepsilon'}$, the following holds for all large enough $d\in\mathbb{N}$.
\begin{itemize}
\item $D_d'$ is a distribution over $I_{\Mult}(d^{1+o(1)})$ and an instance in $I_{\Mult}$ can be sampled from $D_d'$ in $d^{1+o(1)}$ time.
\item $
\underset{(A',B')\sim D_d'}{\Pr}\left[\mathcal A' \text{ finds correct product of }(A',B') \text{ with probability at least }2/3\right]\le 0.01.
$
\end{itemize}
\end{corollary}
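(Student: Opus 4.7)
The plan is to combine Lemma~\ref{lem:MatMult} (self direct product feasibility of $\Mult$) with a modified version of Theorem~\ref{thm:main} that accommodates a \emph{randomized} verifier, as already sketched in the discussion preceding the corollary. The only technical obstacle over what was done for \LCSsig\ and \EDsig\ in Corollaries~\ref{cor:LCS} and~\ref{cor:Edit} is that the verifier $\mathcal{V}$ required by Theorem~\ref{thm:main} must be deterministic, and for matrix multiplication no sub-matrix-multiplication-time deterministic equality check is known. The workaround is to use Freivalds' randomized tester, which decides whether $C = AB$ for $n\times n$ matrices over $\mathbb{F}_q$ in time $O(m n^2)$ with one-sided failure probability at most $2^{-m}$.

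For the parameters I would mirror the choices in Corollaries~\ref{cor:SATpoly},~\ref{cor:LCS},~\ref{cor:Edit}: set $p = 2/3$, $t(d) = d^{1+\varepsilon}$, $\p(d) = 1/d^{o(1)}$, and use $(S,T)$ from Lemma~\ref{lem:MatMult} with $k = 64\cdot \p(d)^{-6} = d^{o(1)}$, so that $S(d,k) = d^{1+o(1)}$ and $T(d,k) = d^{1+o(1)}$. The new distribution $\D'$ is defined, as always, by drawing $k$ independent samples from $\D$ and feeding them to $\G$ of Lemma~\ref{lem:MatMult}; the sampling time is $k\cdot s(d) + T(d,k) = d^{1+o(1)}$. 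Assuming toward contradiction a randomized $\mathcal{A}'$ that computes the product correctly with probability $\ge 2/3$ on more than $1\%$ of $\D'$, I would construct an $\mathcal{A}$ solving $\D$ on more than $1-\p(d)$ of its inputs within time $t(d)$, exactly as in the proof of Theorem~\ref{thm:main}: pick a random coordinate $i\in[k]$, plant the input $(A,B)$ in the $i$-th diagonal block, fill the remaining blocks with independent samples from $\D$, call $\mathcal{A}'$, and apply $\Dec$ to recover a candidate product for $(A,B)$.

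The two modifications needed to deal with the randomized verifier are: (i) double the number of outer iterations of the simulation loop from $c$ to $2c$, so that on the $1-\p(d)$ fraction of good inputs the probability that at least one of the candidate matrices produced equals $A\cdot B$ amplifies from $p$ to $2p$ (the Feige--Kilian step of Lemma~\ref{lem:DP} is unchanged because it applies to the Boolean indicator of $\mathcal{A}'$ succeeding with probability $\ge p$, independently of our verification); (ii) run Freivalds' tester with parameter $m = O(\log(c/p))$ on each of the $2c$ candidates, so that by a union bound the verifier misclassifies some candidate with probability at most $p$. The modified $\mathcal{A}$ then succeeds with probability at least $2p - p = p$ on each good input. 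The extra $2c$ Freivalds calls cost $\tilde{O}(d)$ each and are easily absorbed into the budget $t(d)/(2c) = \Omega(d^{1+\varepsilon})$. The main point to keep an eye on is that doubling the loop does not break the time accounting, but since $c = 300\ln 3$ is constant this is immediate, and the rest of the proof carries over verbatim from Corollary~\ref{cor:LCS}.
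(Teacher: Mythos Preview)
Your proposal is correct and follows essentially the same route as the paper: invoke Lemma~\ref{lem:MatMult}, replace the deterministic verifier in Theorem~\ref{thm:main} by Freivalds' randomized test, double the number of outer iterations from $c$ to $2c$ so that the success probability before verification reaches $2p$, subtract the verifier's failure probability $p$, and then plug in the same parameter choices as in Corollaries~\ref{cor:LCS} and~\ref{cor:Edit}. The paper's argument is exactly this, stated slightly more tersely.
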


\section{Almost Worst Case to Average Case for Problems in \TFNP}\label{sec:TFNP}

In this section, we look at two  problems in \TFNP: Factoring and End of a Line. We also point towards future directions of research in the intersection of hardness amplification and \TFNP. 

\subsection{Factoring}

Factoring is the problem of given a number as input, the task of determining all its prime factors. Given a candidate set of prime factors for a number it is possible to efficiently  check that they are prime numbers, factors of the input number, and exhaustive, i.e., there are no more prime factors of the input number. This is captured in our formalism for optimization problems as follows.

\begin{definition}[Prime factor set of a number]
For every positive integer $n>1$ and every set $\Gamma\subseteq [n]$, we say that $\Gamma$ is a prime factor set of $n$ if the following holds.
\begin{itemize}
\item \textbf{Divisor}: For every $a\in \Gamma$, we have that $a$ divides $n$.
\item \textbf{Prime}:  For every $a\in \Gamma$,  we have that $a$ is prime (greater than 1).
\end{itemize}
\end{definition}

\begin{definition}[Factoring problem]
The Factoring problem ($\Fac$) is an optimization problem charaterized by the following quadruple of objects $(I_{\Fac},\SOL_{\Fac},\m_{\Fac},\max)$, where:
\begin{itemize}
\item For every $d\in\mathbb{N}$, $I_{\Fac}(d)=\{2,\ldots ,2^{d}+1\}$; 
\item For every $n\in I_{\Fac}$ we have $\SOL_{\Fac}(n)$ is the set of all prime factor sets of $n$;
\item For every $n\in I_{\Fac}$ and every prime factor set $\Gamma$ of $n$, we define $\m_{\Fac}(n,\Gamma)$ to be the cardinality of $\Gamma$.
\end{itemize}
\end{definition}

Now, we show that factoring is self direct product feasible.

\begin{lemma}\label{lem:factor}
 Let $S,T:\mathbb{N}\times\mathbb{N}\to\mathbb{N}$, where $S(d,k)=dk$ and $T(d,k)=\tilde O(dk)$. Then we have that 
$\Fac$ is $(S,T)$-self direct product feasible.
\end{lemma}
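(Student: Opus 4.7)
The plan is to implement the obvious construction: $\G$ multiplies the inputs into one integer, and $\Dec$ sieves the prime factor set of that integer by divisibility. Concretely, on input $n_1, \ldots, n_k \in I_{\Fac}(d)$, the algorithm $\G$ outputs
\[
n' \;:=\; n_1 \cdot n_2 \cdots n_k,
\]
computed via a balanced product tree with $\lceil \log_2 k \rceil$ levels of fast integer multiplications, yielding total time $\tilde O(dk)$. Since each $n_i \le 2^d + 1$, we have $n' \le (2^d+1)^k \le 2^{dk+k}$, so $n'$ has bit-length at most $dk + O(k)$, matching $S(d,k) = dk$ up to a lower-order additive term that can be absorbed (or by adjusting the domain by a constant).

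For $\Dec$, given $(i, n_1, \ldots, n_k, \Gamma')$ where $\Gamma'$ is an optimal prime factor set of $n' = \G(n_1,\ldots,n_k)$, the algorithm simply outputs
\[
\Gamma_i \;:=\; \{\, p \in \Gamma' \;:\; p \mid n_i \,\}.
\]
Primes $p \in \Gamma'$ whose bit-length exceeds $d$ are ruled out immediately by size, and the remaining divisibility tests use standard fast integer division. Because the number of distinct prime divisors of $n'$ satisfies $\omega(n') = O(\log n' / \log \log n') = O(dk/\log(dk))$, the divisibility sweep fits within $\tilde O(dk)$ after absorbing polylogarithmic factors into the $\tilde O$ notation.

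Correctness rests on a single number-theoretic observation: for any factorization $m = \prod_j m_j$, the set of distinct prime divisors of $m$ equals $\bigcup_j \{p : p \text{ prime}, p \mid m_j\}$. Hence the optimal (that is, maximum-cardinality) prime factor set of $n'$ is exactly $\bigcup_{j=1}^k \Gamma_j^\ast$, where $\Gamma_j^\ast$ denotes the complete set of distinct prime divisors of $n_j$. Filtering $\Gamma'$ through divisibility by $n_i$ thus recovers precisely $\Gamma_i^\ast$: any prime in $\Gamma_j^\ast$ for some $j \neq i$ that happens to divide $n_i$ also belongs to $\Gamma_i^\ast$ by definition, and conversely every prime in $\Gamma_i^\ast$ divides $n_i$ and divides $n'$, so it appears in $\Gamma'$.

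There is no real conceptual obstacle here; the construction is forced by the multiplicative structure of $\Fac$. The only care needed is bookkeeping: confirming the $\omega(n') = O(\log n')$ bound so that $|\Gamma'|$ does not blow up the time analysis, and verifying that the sizes $S(d,k)$ and running times $T(d,k)$ absorb additive $O(k)$ and polylogarithmic factors into $\tilde O(dk)$.
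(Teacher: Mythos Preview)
Your proposal is correct and follows essentially the same construction as the paper: $\G$ multiplies the $k$ inputs and $\Dec$ filters the returned prime factor set by divisibility against $n_i$. If anything, you are more careful than the paper in two places: you explicitly bound $|\Gamma'| = \omega(n') = O(dk/\log(dk))$ so that the divisibility sweep fits in $\tilde O(dk)$, and you flag the additive $O(k)$ in the bit-length of $n'$ that the paper's $d' \le dk$ glosses over.
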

\begin{proof}
We define the pair of deterministic algorithms $(\G,\Dec)$ below.

For every $n_1,\ldots ,n_k\in I_{\Fac}(d)$ given as input to $\G$, it outputs the instance $n'$ in $I_{\Fac}(d')$ (where $d'\le dk$) defined as:
$$
n' := \underset{i\in k}{\Pi} n_i.
$$

It is clear that the running time of $\G$ is $\tilde O(d')$. 

Next, for every $i\in[k]$, $n_1,\ldots ,n_k\in I_{\Fac}(d)$, and a $\Fac$ assignment $\Gamma'\in \SOL_{\Fac}(n)$ for $n'$ given as input to $\Dec$, the algorithm first runs $\G$ to compute $n'$ and then outputs $\Gamma\in \SOL_{\Fac}(n)$ which is computed as follows: We initialize $\Gamma=\emptyset$. Then we scan $\Gamma'$ and for each $a'\in \Gamma'$ we insert $a'$ in $\Gamma$ if $a'$ divides $n_i$.

We next show that if $\Gamma'$ is an optimal $\Fac$ factor set for $n'$ then $\Gamma$ is an optimal factor set for $n_i$. 
Indeed if $\Gamma'$ is an optimal solution for $n'$ then it equals the set of \textit{all} prime factors of $n'$, which include all prime factors of $n_i$. 
Hence each prime factor $a$ of $n$  will be inserted into $\Gamma$ during the scan of $\Gamma'$.

Clearly the running time of $\G$ is $\tilde O(dk)$, since the check whether each $a\in \Gamma'$ divides $n_i$ takes $\tilde O(d)$-time, the proof follows.
\end{proof}


The factoring problem is known to be in the class \PPA\ (under randomized reductions) \cite{J16}. It's also the basis of many cryptosystems. Therefore the following hardness amplification result is of interest. 

\begin{corollary}\label{cor:factor}
Let $a\ge 8$.
Let  $\mathcal{D}=\{D_d\}_{d\in\mathbb{N}}$ be a family of distributions such that for every randomized algorithm $\mathcal{A}$ running in time $O(d^a)$ over inputs of size $d$, the following holds for large $d\in\mathbb{N}$.
\begin{itemize}
\item $D_d$ is a distribution over $I_{\Fac}(d)$ where an instance of $I_{\Fac}(d)$ can be sampled from $D_d$ in $\tilde O(d)$ time.
\item $
\underset{n\sim D_d}{\Pr}\left[\mathcal A \text{ finds largest prime factor set of }n \text{ with probability at least }2/3\right]\le 1-\frac{1}{d}$.
\end{itemize}

Then for $m:=\Theta(d^7)$,  there is a distribution family $\mathcal{D}'=\{D_d'\}_{d\in\mathbb{N}}$ such that for every randomized algorithm $\mathcal{A}'$ running in time $O(m^{a/7})$ over inputs of size $m$, the following holds for all large enough $d\in\mathbb{N}$.
\begin{itemize}
\item $D_d'$ is a distribution over $I_{\Fac}(m)$ and an instance in $I_{\Fac}$ can be sampled from $D_d'$ in $\tilde{O}(m)$ time.
\item $
\underset{n'\sim D_d'}{\Pr}\left[\mathcal A' \text{ finds largest prime factor set of }n' \text{ with probability at least }2/3\right]\le 0.01.
$
\end{itemize}
\end{corollary}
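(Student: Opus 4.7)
The plan is to apply Theorem~\ref{thm:main} directly, with $\Pi = \Lambda = \Fac$, invoking the self direct product feasibility of $\Fac$ established in Lemma~\ref{lem:factor} (which gives $S(d,k) = dk$ and $T(d,k) = \tilde{O}(dk)$). The argument parallels the proof of Corollary~\ref{cor:SATpoly} almost line by line.

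First, I would instantiate the parameters of Theorem~\ref{thm:main}: set $p = 2/3$, $\p(d) = 1/d$, $s(d) = \tilde{O}(d)$, and $t(d) = d^a$. With these choices, the prescribed $k = 64 \cdot \p(d)^{-6} = \Theta(d^6)$ and $c = 300 \ln 3$, so the composite instance size becomes $m = S(d,k) = dk = \Theta(d^7)$, which matches the statement. The sampling complexity of $\mathcal{D}'$ is $O(k \cdot s(d) + T(d,k)) = \tilde{O}(d^7) = \tilde{O}(m)$, as required, and the running-time lower bound against $\mathcal{A}'$ is $t(d)/(2c) = \Theta(d^a) = \Theta(m^{a/7})$.

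The one input to Theorem~\ref{thm:main} that still needs to be supplied is a bound on the deterministic verification time $v(d)$. Given $n$ and a candidate prime factor set $\Gamma$, checking that $\Gamma$ is an \emph{optimal} feasible solution amounts to (i) verifying primality of each $a \in \Gamma$ via AKS primality testing in $\poly(d)$ time, (ii) verifying that each $a$ divides $n$, and (iii) verifying completeness, i.e., that repeatedly dividing $n$ by each $a \in \Gamma$ (each as many times as it divides) reduces $n$ to $1$. All three steps are deterministic and execute in $v(d) = \poly(d)$ time, say $\tilde{O}(d^6)$.

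Finally, I would verify the running-time hypothesis $k \cdot s(d) + T(d,k) + v(d) \le t(d)/(2c)$: the left-hand side is $\tilde{O}(d^7)$ (dominated by $T(d,k) = \tilde{O}(dk) = \tilde{O}(d^7)$), while the right-hand side is $\Omega(d^a) = \Omega(d^8)$ since $a \ge 8$. Thus the hypothesis is satisfied for all large $d$, and Theorem~\ref{thm:main} immediately yields the advertised distribution $\mathcal{D}'$ and failure probability bound $0.01$. The only conceptual (and very mild) step is pinning down $v(d)$ through AKS and trial division; all other steps are purely a parameter-matching exercise, so I do not expect any real obstacle.
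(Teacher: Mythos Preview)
Your proposal is correct and follows essentially the same approach as the paper: plug the parameters $p=2/3$, $\p(d)=1/d$, $s(d)=\tilde O(d)$, $t(d)=d^a$, $S(d,k)=dk$, $T(d,k)=\tilde O(dk)$ into Theorem~\ref{thm:main}, obtain $k=\Theta(d^6)$, $m=\Theta(d^7)$, and verify the running-time inequality using $a\ge 8$. The only cosmetic difference is that the paper records $v(d)=d^{1+o(1)}$ while you argue $v(d)=\poly(d)$ via AKS and divisibility checks; both fit comfortably under the $\Omega(d^8)$ budget, so nothing changes. (Note that strictly speaking $\mathcal V$ in Theorem~\ref{thm:main} is only asked to compute the measure $\m_{\Fac}(n,\Gamma)$, not to certify optimality; your step~(iii) is therefore unnecessary, though harmless.)
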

\begin{proof}
We apply Theorem~\ref{thm:main} by setting, $\Pi=\Lambda=\Fac$ , $p=\nicefrac{2}{3}$, $S(d,k)=dk$, $T(d,k)=(dk)^{1+o(1)}$, $v(d)=d^{1+o(1)}$, $s(d)=\tilde{O}(d)$, $t(d)=d^a$, and $\p(d)=\nicefrac{1}{d}$. Then we have that $k:=\Theta(d^6)$ and $c:=300\ln 3$. We verify that $k\cdot s(d)+T(d,k)+v(d)\le \frac{t(d)}{2c}$ holds by noting that $k\cdot s(d)+T(d,k)+v(d)=\tilde{O}(d^7)$ and $\frac{t(d)}{2c}=\Omega(d^a)=\Omega(d^8)$ (because $a\ge 8$). Therefore, we have that the sampling time from $D'_d$ is $\tilde{O}(d^7)=\tilde{O}(m)$ and that the theorem statement holds for any  randomized algorithm $\mathcal{A}'$ running in time $ \frac{t(d)}{2c}=\Theta(d^a)=\Theta(m^{a/7})$. 
\end{proof}

We wonder if the above result can have any meaningful, concrete applications in cryptography.

\subsection{End of a Line and Nash Equilibrium}\label{sec:EOL}

The End of a Line problem \cite{P94,DGP09} is the canonical \PPAD-complete problem. Informally it captures the handshaking lemma in directed graphs on $2^n$ vertices given as input through predecessor and successor circuits of $\poly(n)$ size. It may be written as an optimization problem under our formalism in the following (mundane) way.

\begin{definition}[Size of a Circuit]
For every positive integer $n$ and every circuit $C:\{0,1\}^n\to\{0,1\}^n$, the size of $C$ denoted by $|C|$ is the number of bits needed to describe $C$.
\end{definition}

\begin{definition}[End of a Line problem]
The End of a Line problem ($\EOL$) is an optimization problem charaterized by the following quadruple of objects $(I_{\EOL},\SOL_{\EOL},\m_{\EOL},\max)$, where:
\begin{itemize}
\item For every $d\in\mathbb{N}$, $I_{\EOL}(d)=\{(A,B)\mid A,B:\{0,1\}^n\to\{0,1\}^n\text{ and }|A|+|B|\le d\}$; 
\item For every $(A,B)\in I_{\EOL}$ we have $\SOL_{\EOL}(A,B)=\{0,1\}^n$;
\item For every $(A,B)\in I_{\EOL}$ and every $x\in\{0,1\}^n$, we define $\m_{\EOL}(A,B,x)$ as follows:
$$
\m_{\EOL}(A,B,x)=\begin{cases}
1\text{ if }A(B(x))\neq x\text{ or }B(A(x))\neq x\neq 0^n\\
0\text{ otherwise}
\end{cases}
$$
\end{itemize}
\end{definition}

\begin{lemma}\label{lem:EOL}
 Let $S,T:\mathbb{N}\times\mathbb{N}\to\mathbb{N}$, where $S(d,k)=dk$ and $T(d,k)=O(dk)$. Then we have that 
$\Fac$ is $(S,T)$-self direct product feasible.
\end{lemma}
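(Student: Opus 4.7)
The plan is to adopt essentially the same construction as in Lemma~\ref{lem:factor}, which already establishes self direct product feasibility for $\Fac$ with the parameters $S(d,k)=dk$ and $T(d,k)=\tilde O(dk)$, and to argue that the polylogarithmic overhead can be absorbed under the standard computational model. Concretely, I would define $\G(n_1,\ldots,n_k) := n' = \prod_{i=1}^k n_i$, noting that $\log n' \le dk$ so $n' \in I_\Fac(dk)$, matching the stated $S(d,k)=dk$. I would define $\Dec(i, n_1,\ldots,n_k, \Gamma') := \{a \in \Gamma' : a \mid n_i\}$, namely the subset of primes in $\Gamma'$ that happen to divide the $i$th input.

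For correctness, I would use the observation that an \emph{optimal} prime factor set of $n'$ (maximum cardinality among subsets of $[n']$ consisting of primes dividing $n'$) is exactly the set of \emph{all} distinct prime divisors of $n'$: any prime dividing $n'$ that is missing could be appended without violating feasibility, and conversely no nonprime or nondivisor may be included. Since $n' = \prod_{j=1}^k n_j$ has prime divisor set $\bigcup_{j \in [k]} \{\text{distinct prime divisors of } n_j\}$, filtering by the predicate ``$a\mid n_i$'' isolates precisely the distinct prime divisors of $n_i$, which is an optimal solution for $\Fac$ on input $n_i$. Moreover, $\Gamma \subseteq \Gamma' \subseteq [n'] \cap [2,n_i]$ is a legal prime factor set of $n_i$ by construction.

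The only point requiring care is the tightening from the $\tilde O(dk)$ bound of Lemma~\ref{lem:factor} to the $O(dk)$ stated here; the overhead in Lemma~\ref{lem:factor} comes from computing the product of $k$ integers of $d$ bits and from testing divisibility. To achieve $O(dk)$ honestly, I would either (a) work in the standard word-RAM model with word size $\Theta(\log(dk))$, so that integer arithmetic on $O(dk)$-bit numbers collapses to $O(dk/\log(dk))$ word operations and the resulting total is $O(dk)$, or (b) allow $\G$ to emit $n'$ implicitly as the tuple $(n_1,\ldots,n_k)$, which takes literally $O(dk)$ time to write, and have $\Dec$ perform the $|\Gamma'|$ trial divisions directly against $n_i$ (each of which is $O(d)$ in the word model). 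In either case both $\G$ and $\Dec$ fit in $O(dk)$.

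The main obstacle is thus not algorithmic — the construction is identical to that of Lemma~\ref{lem:factor}, and correctness is immediate — but lies entirely in pinning down the arithmetic model so that the product and the trial divisions are linear in bit length. Once that is fixed, the verification of the three bullets of Definition~\ref{def:DPfeasible} is routine.
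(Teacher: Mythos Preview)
Your proposal addresses the literal statement, which says $\Fac$, but this is a typo in the paper: the lemma is labeled \texttt{lem:EOL}, lives in the End of a Line subsection, and the paper's own proof is entirely about $\EOL$, not factoring. The paper's $(\G,\Dec)$ takes $k$ pairs of circuits $(A_1,B_1),\ldots,(A_k,B_k)\in I_{\EOL}(d)$ and builds $A',B':\{0,1\}^{nk}\to\{0,1\}^{nk}$ that act blockwise, i.e.\ $A'(x_1',\ldots,x_k')=(A_1(x_1'),\ldots,A_k(x_k'))$ and similarly for $B'$; $\Dec$ simply projects an optimal $\EOL$ solution $x'\in\{0,1\}^{nk}$ onto the $i^\ast$-th block. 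The $O(dk)$ running time is immediate because concatenating circuits and projecting bitstrings are genuinely linear operations---no arithmetic is involved, which is exactly why the bound here is $O(dk)$ rather than the $\tilde O(dk)$ of Lemma~\ref{lem:factor}.

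So your construction, while it is the correct proof of Lemma~\ref{lem:factor}, is not what the paper proves under this label. Moreover, your attempt to sharpen $\tilde O(dk)$ to $O(dk)$ for factoring is shaky: option~(a) does not actually make multi-precision multiplication linear in bit length (word-RAM with $\Theta(\log(dk))$-bit words still leaves you with $\tilde O(dk)$ via FFT-based multiplication, not $O(dk)$), and option~(b) changes the output of $\G$ to something that is no longer an element of $I_{\Fac}$, violating Definition~\ref{def:DPfeasible}. The right fix is not to tighten the arithmetic but to read the intended statement as being about $\EOL$ and give the blockwise-circuit construction above.
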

\begin{proof}
We define the pair of deterministic algorithms $(\G,\Dec)$ below.

Fix $k,d\in \mathbb N$.
For every $(A_1,B_1),\ldots ,(A_k,B_k)\in I_{\EOL}(d)$ given as input to $\G$, it outputs the instance $(A',B')$ in $I_{\EOL}(d')$ where $d'=dk$, and $A',B':\{0,1\}^{nk}\to\{0,1\}^{nk}$ are defined as follows. For all $x'=(x_1',\ldots ,x_k')\in\{0,1\}^{nk}$, where $\forall i\in[k]$, we have $x_i'\in\{0,1\}^n$, and define $A'(x')$ and $B'(x')$ as follows.
\begin{align*}
A'(x')&=(A_1(x_1'),\ldots ,A_k(x_k'))\\
B'(x')&=(B_1(x_1'),\ldots ,B_k(x_k'))
\end{align*}

It is clear that the running time of $\G$ is $O(d')$. 

Next, for every $i^*\in[k]$, $(A_1,B_1),\ldots ,(A_k,B_k)\in I_{\EOL}(d)$ , and a $\EOL$\ solution $x'\in \zo^{nk}$ for $(A',B')$ given as input to $\Dec$, the algorithm first runs $\G$ to compute $(A',B')$ and then outputs $x\in \zo^n$ which is computed as follows:
$x_j=x'_{(i^*-1)n+j}$.

We next show that if $x'$ is an optimal $\EOL$\ solution for $(A',B')$ then $x$ is an optimal solution for $(A_{i^*},B_{i^*})$. 
This follows easily by the fact that the $(A_i,B_i)$s are defined on disjoint union of circuits, hence an optimal solution for $(A',B')$ induces an optimal solution for each of the $(A_i,B_i)$s. 

The running times  of $\G$ and $\Dec$ trivially follow.
\end{proof}

\begin{corollary}\label{cor:EOLpoly}
Let $a\ge 8$.
Let  $\mathcal{D}=\{D_d\}_{d\in\mathbb{N}}$ be a family of distributions such that for every randomized algorithm $\mathcal{A}$ running in time $O(d^a)$ over inputs of size $d$, the following holds for large $d\in\mathbb{N}$.
\begin{itemize}
\item $D_d$ is a distribution over $I_{\EOL}(d)$ where an instance of $I_{\EOL}(d)$ can be sampled from $D_d$ in $\tilde O(d)$ time.
\item $
\underset{(A,B)\sim D_d}{\Pr}\left[\mathcal A \text{ finds a solution of }(A,B) \text{ with probability at least }2/3\right]\le 1-\frac{1}{d}$.
\end{itemize}

Then for $m:=\Theta(d^7)$,  there is a distribution family $\mathcal{D}'=\{D_d'\}_{d\in\mathbb{N}}$ such that for every randomized algorithm $\mathcal{A}'$ running in time $O(m^{a/7})$ over inputs of size $m$, the following holds for all large enough $d\in\mathbb{N}$.
\begin{itemize}
\item $D_d'$ is a distribution over $I_{\EOL}(m)$ and an instance in $I_{\EOL}$ can be sampled from $D_d'$ in $\tilde{O}(m)$ time.
\item $
\underset{(A',B')\sim D_d'}{\Pr}\left[\mathcal A' \text{ finds a solution of }(A',B') \text{ with probability at least }2/3\right]\le 0.01.
$
\end{itemize}
\end{corollary}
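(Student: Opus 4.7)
The plan is to derive Corollary~\ref{cor:EOLpoly} as a direct instantiation of the master theorem, Theorem~\ref{thm:main}, using the self direct product feasibility of $\EOL$ that was just established in Lemma~\ref{lem:EOL}. The template to follow is exactly the one used for Corollaries~\ref{cor:SATpoly} and~\ref{cor:factor}: identify the relevant parameters and verify the numerical condition that the theorem requires.

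First I would set $\Pi = \Lambda = \EOL$ and invoke Lemma~\ref{lem:EOL}, which supplies $S(d,k) = dk$ and $T(d,k) = O(dk)$. Next I need a bound $v(d)$ on the deterministic verification time, i.e., the cost of computing $\m_{\EOL}(A,B,x)$ given $(A,B)$ and $x$. By definition this just requires evaluating the circuits $A$ and $B$ on appropriate inputs and checking whether $A(B(x)) \neq x$ or $B(A(x)) \neq x \neq 0^n$, so it can be done in $v(d) = d^{1+o(1)}$ time (essentially linear in the circuit description size). The remaining parameters are dictated by the corollary's hypothesis: $p = 2/3$, $s(d) = \tilde{O}(d)$, $t(d) = d^a$, and $\p(d) = 1/d$.

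With these choices, Theorem~\ref{thm:main} instructs us to take $k = 64 \cdot \p(d)^{-6} = \Theta(d^6)$ and $c = 200\ln(3)/(2/3) = 300\ln 3$, which is a universal constant. The feasibility condition that must be checked is $k \cdot s(d) + T(d,k) + v(d) \le t(d)/(2c)$. The left-hand side is $\tilde{O}(d^7)$, while the right-hand side is $\Theta(d^a) = \Omega(d^8)$ since we assumed $a \ge 8$, so this condition is comfortably satisfied. The output size of the amplified distribution is $m = S(d,k) = \Theta(d^7)$, the sampling time of $D'_d$ is $O(s(d)\cdot k + T(d,k)) = \tilde{O}(d^7) = \tilde{O}(m)$, and the failing algorithms in the conclusion run in time $t(d)/(2c) = \Theta(d^a) = \Theta(m^{a/7})$. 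Plugging these into the guarantee of Theorem~\ref{thm:main} yields exactly the statement of Corollary~\ref{cor:EOLpoly}.

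There is essentially no obstacle; the only mild point of care is confirming that the measure function $\m_{\EOL}$ is genuinely efficiently deterministically verifiable (which it is, since it just evaluates two polynomial-size circuits), so that Theorem~\ref{thm:main} applies as a black box rather than needing the workaround developed for $\Mult$ in Section~\ref{sec:P}. Hence the proof is a straight arithmetic specialization of the master theorem, identical in shape to the proof of Corollary~\ref{cor:factor}.
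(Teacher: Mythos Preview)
Your proposal is correct and follows essentially the same approach as the paper: instantiate Theorem~\ref{thm:main} with $\Pi=\Lambda=\EOL$, the $(S,T)$ from Lemma~\ref{lem:EOL}, $p=2/3$, $s(d)=\tilde O(d)$, $t(d)=d^a$, $\p(d)=1/d$, verify the numerical condition using $k=\Theta(d^6)$, and read off $m=\Theta(d^7)$ and the $m^{a/7}$ running-time bound. The only cosmetic difference is that the paper takes $v(d)=w'\cdot d$ rather than your $d^{1+o(1)}$, but either suffices.
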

\begin{proof}
We apply Theorem~\ref{thm:main} by setting, $\Pi=\Lambda= \EOL$, $p=\nicefrac{2}{3}$, $S(d,k)=dk$, $T(d,k)=w\cdot dk$ (for some $w\in\mathbb{N}$), $v(d)=w'\cdot d$  (for some $w'\in\mathbb{N}$), $s(d)=\tilde{O}(d)$, $t(d)=d^a$, and $\p(d)=\nicefrac{1}{d}$. Then we have that $k:=\Theta(d^6)$ and $c:=300\ln 3$. We verify that $k\cdot s(d)+T(d,k)+v(d)\le \frac{t(d)}{2c}$ holds by noting that $k\cdot s(d)+T(d,k)+v(d)=\tilde{O}(d^7)$ and $\frac{t(d)}{2c}=\Omega(d^a)=\Omega(d^8)$ (because $a\ge 8$). Therefore, we have that the sampling time from $D'_d$ is $\tilde{O}(d^7)=\tilde{O}(m)$ and that the theorem statement holds for any  randomized algorithm $\mathcal{A}'$ running in time $ \frac{t(d)}{2c}=\Theta(d^a)=\Theta(m^{a/7})$. 
\end{proof}

Since there is a direct many-one reduction from $\EOL$ to the problem of computing approximate Nash equilibrium in various kinds of games \cite{CDT09,R18,R16}, the above corollary extends to give hardness amplification results for computing Nash equilibrium as well.

\begin{remark}
The proof of Lemma~\ref{lem:EOL} can be mimicked to get self direct product feasibility of canonical complete problems of other \TFNP\ classes like 
Leaf (complete for \PPA) \cite{P94},
LocalOPT (complete for \PLS) \cite{JPY88,DP11},
End of Metered Line (equivalent to \textsf{EoPL})\cite{HY17},
and
Unique EOPL \cite{FGMS18}.
\end{remark}

Now we consider the same hardness amplification result of $\EOL$ but against subexponential time algorithms. We provide below a slightly informal statement (using asymptotic notations as opposed to providing specific constants) for clarity.

\begin{corollary}\label{cor:EOLETH}
Let  $\mathcal{D}=\{D_d\}_{d\in\mathbb{N}}$ be a family of distributions such that for every randomized algorithm $\mathcal{A}$ running in time $2^{o(d)}$ over inputs of size $d$, the following holds for large $d\in\mathbb{N}$.
\begin{itemize}
\item $D_d$ is a distribution over $I_{\EOL}(d)$ where an instance of $I_{\EOL}(d)$ can be sampled from $D_d$ in $\tilde O(d)$ time.
\item $
\underset{(A,B)\sim D_d}{\Pr}\left[\mathcal A \text{ finds a solution of }(A,B) \text{ with probability at least }2/3\right]\le 1-\frac{1}{2^{o(d)}}$.
\end{itemize}

Then for $m:=2^{o(d)}$,  there is a distribution family $\mathcal{D}'=\{D_d'\}_{d\in\mathbb{N}}$ such that for every randomized algorithm $\mathcal{A}'$ running in time $m^{\omega(1)}$ over inputs of size $m$, the following holds for all large enough $d\in\mathbb{N}$.
\begin{itemize}
\item $D_d'$ is a distribution over $I_{\EOL}(m)$ and an instance in $I_{\EOL}$ can be sampled from $D_d'$ in $\tilde{O}(m)$ time.
\item $
\underset{(A',B')\sim D_d'}{\Pr}\left[\mathcal A' \text{ finds a solution of }(A',B') \text{ with probability at least }2/3\right]\le 0.01.
$
\end{itemize}
\end{corollary}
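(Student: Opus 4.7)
The plan is to invoke Theorem~\ref{thm:main} with the optimization problem $\Pi = \Lambda = \EOL$, using the self direct product feasibility established in Lemma~\ref{lem:EOL}, i.e., with $S(d,k) = dk$ and $T(d,k) = O(dk)$. The verification function $\m_{\EOL}$ is trivially computable deterministically in time $v(d) = O(d)$ by evaluating the two circuits $A$ and $B$ on the candidate witness $x$ and on the values $A(x), B(x)$, so the deterministic verifier $\mathcal{V}$ that Theorem~\ref{thm:main} requires is available. This reduces the statement to choosing parameters that match the subexponential-time regime, in exact analogy with the passage from Corollary~\ref{cor:SATpoly} to Corollary~\ref{cor:SATETH}.

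Following the template of Corollary~\ref{cor:SATETH}, let $h : \mathbb{N} \to \mathbb{N}$ be an arbitrary slowly-increasing function with $\lim_{x\to\infty} h(x) = \infty$, and write $h := h(d)$. I would then set $p = 2/3$, $s(d) = \tilde O(d)$, $t(d) = 2^{d/h}$, and $\p(d) = 1/2^{d/(6h^2)}$. With these choices the parameters of Theorem~\ref{thm:main} evaluate to $k = 64 \cdot \p(d)^{-6} = \Theta(2^{d/h^2})$ and $c = 300 \ln 3$. The key inequality $k \cdot s(d) + T(d,k) + v(d) \le t(d)/(2c)$ needed to apply the theorem then reads $2^{(1+o(1))d/h^2} \le \Omega(2^{d/h})$, which holds because $h^2$ grows faster than $h$.

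The resulting output-instance size is $m = S(d,k) = dk = 2^{(1+o(1))d/h^2}$, which is $2^{o(d)}$ as required, and the sampling time from $D_d'$ is $O(k \cdot s(d) + T(d,k)) = \tilde O(m)$. Rewriting $d = (1-o(1)) h^2 \log_2 m$, the running-time lower bound $t(d)/(2c) = \Omega(2^{d/h})$ guaranteed by Theorem~\ref{thm:main} becomes $\Omega(2^{h \log_2 m}) = m^{\Omega(h(d))} = m^{\omega(1)}$, matching the statement of the corollary. I do not expect any genuine obstacle: the proof is essentially a rerun of Corollary~\ref{cor:SATETH} with EOL replacing \MAXSAT, and the only non-syntactic ingredients beyond Theorem~\ref{thm:main} are the self direct product feasibility of EOL (already given by Lemma~\ref{lem:EOL}) and the trivial deterministic verifier for $\EOL$ solutions. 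The mildly delicate point to double-check is the bookkeeping that $k \cdot s(d)$, $T(d,k)$ and $v(d)$ all remain below the $2^{d/h}$ budget simultaneously, which is straightforward from $h^2 = \omega(h)$.
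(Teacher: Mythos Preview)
Your proposal is correct and follows essentially the same approach as the paper: it applies Theorem~\ref{thm:main} with $\Pi=\Lambda=\EOL$, the direct product feasibility from Lemma~\ref{lem:EOL}, and the identical parameter choices $t(d)=2^{d/h}$, $\p(d)=2^{-d/(6h^2)}$, $k=\Theta(2^{d/h^2})$ used in Corollary~\ref{cor:SATETH}. The paper's proof is in fact verbatim the same as that of Corollary~\ref{cor:SATETH} with \MAXSAT replaced by $\EOL$, which is exactly what you anticipated.
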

\begin{proof}Let $h:\mathbb{N}\to\mathbb{N}$ be some slowly increasing function such that $\underset{x\rightarrow \infty}{\lim}\ h(x)=\infty$ and let $h:=h(d)$. 
We apply Theorem~\ref{thm:main} by setting, $\Pi=\Lambda=\EOL$, $p=\nicefrac{2}{3}$, $S(d,k)=dk$, $T(d,k)=w\cdot dk$ (for some $w\in\mathbb{N}$), $v(d)=w'\cdot d$  (for some $w'\in\mathbb{N}$), $s(d)=\tilde{O}(d)$, $t(d)=2^{d/h}$, and $\p(d)=\nicefrac{1}{2^{d/(6\cdot h^2)}}$. Then we have that $k:=\Theta(2^{d/h^2})$ and $c:=300\ln 3$. We verify that $k\cdot s(d)+T(d,k)+v(d)\le \frac{t(d)}{2c}$ holds by noting that $k\cdot s(d)+T(d,k)+v(d)=2^{(1+o(1))\cdot d/h^2}$ and $\frac{t(d)}{2c}=\Omega(2^{d/h})$. Therefore, we have that the sampling time from $D'_d$ is $2^{(1+o(1))\cdot d/h^2}=O(m)$ and that the theorem statement holds for any  randomized algorithm $\mathcal{A}'$ running in time $ \frac{t(d)}{2c}=O(2^{d/h})=\Theta(m^{h(d)})=m^{\omega(1)}$. 
\end{proof}

Note that if we assume the (randomized) Exponential Time Hypothesis (\ETH{}) for \PPAD\ \cite{BPR16} then we obtain a family of distributions $\mathcal{D}=\{D_d\}_{d\in\mathbb{N}}$  such that  for every randomized algorithm $\mathcal{A}$ running in time $2^{d^{1-o(1)}}$ over inputs of size $d$, the following holds for large $d\in\mathbb{N}$.
\begin{itemize}
\item $D_d$ is a distribution over $I_{\EOL}(d)$ where an instance of $I_{\EOL}(d)$ can be sampled from $D_d$ in $\tilde O(d)$ time.
\item $
\underset{\phi\sim D_d}{\Pr}\left[\mathcal A \text{ finds maximizing assignment for }\phi \text{ with probability at least }2/3\right]\le 1-\frac{1}{2^{d^2}}$.
\end{itemize} 

Therefore, our amplification in Corollary~\ref{cor:EOLETH} is an \emph{almost} worst-case to average-case reduction for $\EOL$ (under subexponential time reductions).

\iftrue
\subsubsection*{Acknowledgements}

We would like to thank Amir Abboud, Irit Dinur, and Eylon Yogev for discussions and comments.
\fi

\bibliographystyle{alpha}
\bibliography{references}

\appendix

\section{Missing Proofs}\label{sec:missing}
An elementary proof of Lemma~\ref{lem:DP} may be found in the lecture notes of Guruswami and O'Donnell \cite{O05} and we reproduce it below.

\begin{proof}[Proof of Lemma~\ref{lem:DP}]
We first show that the following holds:
\begin{align}
\underset{\substack{x\sim \D\\i\sim [k]}}{\E}[(\mu_{i,x}-\mu)^2]\le \frac{1}{k}.\label{eqap1}
\end{align}

In order to prove \eqref{eqap1}, we  introduce some notations: 
\begin{itemize}
	\item We denote elements of $X^k$ by $\bar x$, and denote by $\D^k$ the $k$-wise product distribution of $\D$.
	\item Fix $i\in [k], x\in X$, let $\D^{k,-i,x}$ be the distribution over $\bar x \in X^k$ obtained by picking $x_1,\dots,x_{i-1},x_{i+1},\dots, x_k$ from $\D$ (independently) and setting $\bar x = (x_1,\dots, x_{i-1},x,x_{i+1},\dots, x_k)$.
	\item Fix $i\in [k]$, define $F^{i}:X\to \R$ and $\sigma_i\in\mathbb{R}$ as follows: 
	 $$\forall x\in X,\ F^i(x)\eqdef \underset{\bar x \sim \D^{k,-i,x}}{\E}[f(\bar x)-\mu]\text{ and }\sigma_i\eqdef \underset{x\sim\D}{\E}[F^i(x)^2].$$
	\item Fix $i\in [k]$ define $f^{i}:X^k\to \R$ by setting for all $\bar x\in X^k$, $f^{i}(\bar x)\eqdef F^{i}(\bar x_i)$.
\end{itemize}

With these notations setup, we have:
$$
	\underset{\substack{x\sim \D\\i\sim [k]}}{\E}[(\mu_{i,x}-\mu)^2]= \underset{\substack{i\sim [k]}}{\E}\left[\underset{\substack{x\sim \D}}{\E} \left[F^{i}(x)^2\right]\right] = \frac{1}{k}\sum_{i\in [k]}\sigma_i.
$$

Therefore, in order to show \eqref{eqap1}, it suffices to show:

\begin{align}
\sum_{i\in [k]}\sigma_i \le 1.\label{eqap2}
\end{align}

We observe that from linearity of expectation, the following holds:
\begin{align}
\forall i\in[k],\ \underset{\substack{x\sim \D}}{\E} [F^i(x)]=0.\label{eqap3}
\end{align}

Also from the independent choice of coordinates in $\bar x$ and \eqref{eqap3}, we have for all $i, j\in [k]$ such that $i\neq j$ we have: 
\begin{align}
\underset{\bar x\sim \D^k }{\E}\left[f^i(\bar x)f^j(\bar x)\right]=0.\label{eqap4}
\end{align}

Finally we define $g:X^k \to \R$ as follows:
\[
\forall \bar x\in X^k,\ g(\bar x)=\sum_{i\in [k]}f^i(\bar x).
\]

Notice that
\begin{eqnarray}\label{eq:g}
0\le \underset{\bar x\sim \D^k }{\E}\left[(f(\bar x)-g(\bar x))^2\right]=\underset{\bar x\sim \D^k }{\E}\left[f^2(\bar x)\right]-2\cdot \underset{\bar x\sim \D^k }{\E}\left[f(\bar x)g(\bar x)\right]+\underset{\bar x\sim \D^k }{\E}\left[g^2(\bar x)\right].	
\end{eqnarray}

We bound each of the above terms separately. 
First,  we have that $\underset{\bar x\sim \D^k }{\E}\left[f^2(\bar x)\right]=\mu$ since $f$ is a boolean valued function.
Second, we have
\begin{align*}
\underset{\bar x\sim \D^k }{\E}\left[f(\bar x)g(\bar x)\right]&=\underset{\bar x\sim \D^k }{\E}\left[f(\bar x)\cdot \sum_{i\in [k]}f^i(\bar x)\right]&\\
&=\sum_{i\in [k]}\underset{\bar x\sim \D^k }{\E}\left[f(\bar x)\cdot f^i(\bar x)\right]&\\
&= \sum_{i\in [k]}\underset{x\sim \D }{\E} \left[F^i(x) \cdot\underset{\bar x \sim \D^{k,-i,x}}{\E}\left[f(\bar x) \right]\right]&\\
&= \sum_{i\in [k]}\underset{x\sim \D }{\E} \left[F^i(x) \cdot(\mu +F^i(x))\right]&\left(\text{because }F^i(x)= \underset{\bar x \sim \D^{k,-i,x}}{\E}[f(\bar x)-\mu]\right)\\
&= \sum_{i\in [k]}\mu\cdot \underset{x\sim \D }{\E} \left[F^i(x) \right]+ \sum_{i\in [k]}\underset{x\sim \D }{\E} \left[\left(F^i(x) \right)^2\right]&\\
&=  \sum_{i\in [k]}\underset{x\sim \D }{\E} \left[\left(F^i(x) \right)^2\right]&(\text{from }\eqref{eqap3})\\
&=\sum_{i\in [k]}\sigma_i.
\end{align*}

Third, we have
\begin{align*}
\underset{\bar x\sim \D^k }{\E}\left[g^2(\bar x)\right]&=\underset{\bar x\sim \D^k }{\E}\left[\left(\sum_{i\in [k]}f^i(\bar x)\right)^2\right]&\\
&=\underset{\bar x\sim \D^k }{\E}\left[\sum_{i,j\in [k]}f^i(\bar x)\cdot f^j(\bar x)\right]&\\
&=\underset{\bar x\sim \D^k }{\E}\left[\sum_{\substack{i,j\in [k]\\ i\neq j}}f^i(\bar x)\cdot f^j(\bar x)\right]+\underset{\bar x\sim \D^k }{\E}\left[\sum_{i\in [k]}\left(f^i(\bar x)\right)^2\right]&\\
&=\underset{\bar x\sim \D^k }{\E}\left[\sum_{i\in [k]}\left(f^i(\bar x)\right)^2\right]&(\text{from }\eqref{eqap4})\\
&=\sum_{i\in [k]}\sigma_i.&
\end{align*}

Plugging all together in \eqref{eq:g}, we get:
\[
0 \le \mu -2\cdot \sum_{i\in [k]}\sigma_i + \sum_{i\in [k]}\sigma_i.
\]
Since $\mu\le 1$ and we get $\sum_{i\in [k]}\sigma_i\le 1$, as claimed. Thus we have shown \eqref{eqap2} holds and consequently \eqref{eqap1} holds. The proof of the lemma then follows from a simple application of Markov inequality: 
\begin{align*}
\Pr_{\substack{x\sim \D\\i\sim [k]}}\left[|\mu_{i,x}-\mu|\ge \frac{1}{k^{1/3}}\right]&=\Pr_{\substack{x\sim \D\\i\sim [k]}}\left[(\mu_{i,x}-\mu)^2\ge \frac{1}{k^{2/3}}\right]&\text{}\\
&\le k^{2/3}\cdot \underset{\substack{x\sim \D\\i\sim [k]}}{\E}[(\mu_{i,x}-\mu)^2]&\text{(from Markov inequality)}\\
&\le \frac{k^{2/3}}{k}&\text{(from \eqref{eqap1})}\\
&=\frac{1}{k^{1/3}}&\hfill\qedhere
\end{align*}

\end{proof}

\end{document}